\definecolor{myurlcolor}{rgb}{0.6,0,0}
\definecolor{mycitecolor}{rgb}{0,0,0.8}
\definecolor{myrefcolor}{rgb}{0,0,0.8}
\tikzstyle{simple}=[-,line width=2.000]
\tikzstyle{arrow}=[-,postaction={decorate},decoration={markings,mark=at position .5 with {\arrow{>}}},line width=1.100]
\tikzstyle{none}=[inner sep=0pt]
\definecolor{lblue}{rgb}{0,250,255}
\tikzstyle{species}=[circle,fill=yellow,draw=black,scale=1.15]
\tikzstyle{transition}=[rectangle,fill=lblue,draw=black,scale=1.15]
\tikzstyle{inarrow}=[->, >=stealth, shorten >=.03cm,line width=1.5]
\tikzstyle{empty}=[circle,fill=none, draw=none]
\tikzstyle{inputdot}=[circle,fill=purple,draw=purple, scale=.25]
\tikzstyle{inputarrow}=[->,draw=purple, shorten >=.05cm]
\tikzstyle{simple}=[-,draw=purple,line width=1.000]
\newcommand{\N}{\mathbb{N}}
\newcommand{\R}{\mathbb{R}}
\newcommand{\FinSet}{\mathtt{FinSet}}
\newcommand{\Set}{\mathtt{Set}}
\newcommand{\Cospan}{\mathtt{Cospan}}
\newcommand{\RNet}{\mathtt{RNet}}
\newcommand{\RxNet}{\mathtt{RxNet}}
\newcommand{\Dynam}{\mathtt{Dynam}}
\newcommand{\Mark}{\mathtt{Mark}}
\newcommand{\DetBalMark}{\mathtt{DetBalMark}}
\newcommand{\Rel}{\mathtt{Rel}}
\newcommand{\LinRel}{\mathtt{LinRel}}
\newcommand{\SemiAlgRel}{\mathtt{SemiAlgRel}}
\newcommand{\CC}{\mathtt{C}}
\newcommand{\D}{\mathtt{D}}
\newcommand{\Circ}{\mathtt{Circ}}
\newcommand*{\darkgraysquare}{\textcolor{gray}{\blacksquare}}
\newcommand*{\vdarkgraysquare}{\textcolor{darkgray}{\blacksquare}}
\newcommand*{\graysquare}{\textcolor{lightgray}{\blacksquare}}
\newcommand{\maps}{\colon}
\newcommand{\im}{\mathrm{im}}
\newcommand{\relto}{\nrightarrow}
\newcommand{\beq}{\begin{equation}}
\newcommand{\eeq}{\end{equation}}
\newcommand{\define}[1]{{\bf \boldmath{#1}}}
\theoremstyle{plain}
\newtheorem{thm}{Theorem}
\newtheorem{lem}[thm]{Lemma}
\newtheorem{defn}[thm]{Definition}
\theoremstyle{remark}
\begin{document}

\begin{center}   
  {\bf A Compositional Framework for Reaction Networks \\}   
  \vspace{0.3cm}
  {\em John\ C.\ Baez \\}
  \vspace{0.3cm}
  {\small
 Department of Mathematics \\
    University of California \\
  Riverside CA, USA 92521 \\ and \\
 Centre for Quantum Technologies  \\
    National University of Singapore \\
    Singapore 117543  \\    } 
  \vspace{0.4cm}
{\em Blake S. Pollard \\ }
\vspace{0.3cm}
{\small Department of Physics and Astronomy \\
University of California \\
Riverside CA 92521 \\ }
  \vspace{0.3cm}   
  {\small email:  baez@math.ucr.edu, bpoll002@ucr.edu\\} 
  \vspace{0.3cm}   
  {\small May 19, 2017}
  \vspace{0.3cm}   
\end{center}

\begin{abstract}
Reaction networks, or equivalently Petri nets, are a general framework for describing processes in which entities of various kinds interact and turn into other entities.  In chemistry, where the reactions are assigned `rate constants', any reaction network gives rise to a nonlinear dynamical system called its `rate equation'.   Here we generalize these ideas to   `open' reaction networks, which allow entities to flow in and out at certain designated inputs and outputs.  We treat open reaction networks as morphisms in a category.  Composing two such morphisms connects the outputs of the first to the inputs of the second. We construct a functor sending any open reaction network to its corresponding `open dynamical system'.  This provides a compositional framework for studying the dynamics of reaction networks.  We then turn to statics: that is, steady state solutions of open dynamical systems.   We construct a `black-boxing' functor that sends any open dynamical system to the relation that it imposes between input and output variables in steady states.  This extends our earlier work on black-boxing for Markov processes.
\end{abstract}

\section{Introduction}
\label{sec:intro}

Reaction networks, first formally defined by Aris \cite{A} in 1965, are a framework 
for describing processes whereby entities interact and transform into
other entities.  While they first arose in chemistry, and are often called `chemical
reaction networks', their applications are widespread.  For example, a basic model of infectious disease, the SIRS model, is described by this reaction network:
\[      S + I \stackrel{\iota}{\longrightarrow} 2 I  \qquad 
I \stackrel{\rho}{\longrightarrow} R \stackrel{\lambda}{\longrightarrow} S  \]
We see here three types of entity, called `species':
\begin{itemize}
\item $S$: \define{susceptible},
\item $I$: \define{infected}, 
\item $R$: \define{resistant}.
\end{itemize}
We also have three `reactions':
\begin{itemize}
\item $\iota\maps S + I \to 2 I$: \define{infection}, in which a susceptible individual meets an infected one and becomes infected;
\item $\rho\maps I \to R$: \define{recovery}, in which an infected individual gains resistance to the disease;
\item $\lambda\maps R \to S$: \define{loss of resistance}, in which a resistant individual becomes susceptible.
\end{itemize}
In general, a reaction network involves a finite set of species, but reactions go between `complexes', which are finite linear combinations of these species with natural number coefficients.  The reaction network is a directed graph whose vertices are certain complexes and whose edges are called `reactions'.   

If we attach a positive real number called a `rate constant' to each reaction, a reaction network determines a system of differential equations saying how the concentrations of the species change over time.    This system of equations is usually called 
the `rate equation'.   In the example above, the rate equation is
\beq
\label{SIRS_rate_equation}
\begin{array}{ccl}
\displaystyle{\frac{d S}{d t}} &=& r_\lambda R - r_\iota S I \\ \\
\displaystyle{\frac{d I}{d t}} &=&  r_\iota S I - r_\rho I \\  \\
\displaystyle{\frac{d R}{d t}} &=& r_\rho I - r_\lambda R .
\end{array}
\eeq
Here $r_\iota, r_\rho$ and $r_\lambda$ are the rate constants for the three
reactions, and $S, I, R$ now stand for the concentrations of the three species, which are treated in a continuum approximation as smooth functions $S, I, R \maps \R \to [0,\infty)$.  The rate equation can be derived from the `law of mass action', which says that any reaction occurs at a rate equal to its rate constant times the product of the concentrations of the species entering it as inputs.    

A reaction network is more than just a stepping-stone to its rate equation.  Interesting qualitative properties of the rate equation, such as existence and uniqueness of steady state solutions, can often be determined simply by examining the reaction network, independent of any particular choice of rate constants.  Results in this direction began with  Feinberg and Horn's seminal work in the 1960's \cite{Feinberg,FeinbergHorn}, leading to the Deficiency Zero and Deficiency One Theorems \cite{Feinberg1995a,Feinberg1995b} and more recently to a proof of the Global Attractor Conjecture \cite{Craciun}.     

In this paper we present a `compositional framework' for reaction networks: that is, a way to build up a reaction network from smaller pieces, in such a way that its rate equation can be determined from those of the pieces.   However, this framework requires that we view reaction networks in a somewhat different way, as `Petri nets'.

Petri nets were invented by Carl Petri in 1939, when he was just a teenager, for the purposes of chemistry \cite{PR}.  Much later, they became popular in theoretical computer science \cite{JK,Peterson}, biology \cite{K,KRS,Wilkinson}, and other fields \cite{BaezBiamonte,Haas,MBCDF}. A Petri net is a bipartite directed graph; vertices of one kind represent species, while those of the other kind represent reactions.  The edges into a reaction specify which species are inputs to that reaction, while the edges out specify its outputs.   One can easily turn a reaction network into a Petri net and vice versa.    For example, the reaction network above translates into this Petri net:
\[
\begin{tikzpicture}
	\begin{pgfonlayer}{nodelayer}
		\node [style=species] (S) at (0, -1.5) {$S$};
		\node [style=species] (I) at (0, 1.5) {$I$};
		\node [style=transition] (i) at (-1.5,0) {$\; \big. \iota$ \,};   
		\node [style=none] (ATL) at (-3, 2) {};
		\node [style=none] (ATR) at (-2.4, 2) {};
		\node [style=none] (ABR) at (-2.4, -2) {};
		\node [style=none] (ABL) at (-3, -2) {};
		\node [style=none] (BTL) at (0.7, 2) {};
		\node [style=none] (BTR) at (1.3, 2) {};
		\node [style=none] (BBR) at (1.3, -2) {};
		\node [style=none] (BBL) at (0.7, -2) {};
	      \node [style=species] (S) at (0, -1.5) {$S$};
		\node [style=species] (I) at (0, 1.5) {$I$};
             \node [style=species] (R) at (2.5,0) {$R$};
             \node [style=transition] (r) at (1.5,1.4) {$\; \big. \rho\,$};    
		\node [style=transition] (l) at (1.5,-1.4) {$\; \big. \lambda \,$};    
		\node [style=none] (A'TL) at (-1.3, 2) {};
		\node [style=none] (A'TR) at (-0.7, 2) {};
		\node [style=none] (A'BR) at (-0.7, -2) {};
		\node [style=none] (A'BL) at (-1.3, -2) {};
		\node [style=none] (B'TL) at (3.2, 2) {};
		\node [style=none] (B'TR) at (3.8, 2) {};
		\node [style=none] (B'BR) at (3.8, -2) {};
		\node [style=none] (B'BL) at (3.2, -2) {};
		
	\end{pgfonlayer}
	\begin{pgfonlayer}{edgelayer}
		\draw [style=inarrow, bend left=40, looseness=1.00] (S) to (i);
		\draw [style=inarrow, bend right=40, looseness=1.00] (I) to (i);
		\draw [style=inarrow, bend right=25, looseness=1.00] (i) to (I);
		\draw [style=inarrow, bend right=40, looseness=1.00] (i) to (I);
	       \draw [style=inarrow, bend left=40, looseness=1.00] (R) to (l);
		\draw [style=inarrow, bend left=40, looseness=1.00] (r) to (R);
		\draw [style=inarrow,  bend left=10, looseness=1.00] (I) to (r);
	      \draw [style=inarrow, bend left=40, looseness=1.00] (R) to (l);
		\draw [style=inarrow,  bend left=10, looseness=1.00] (l) to (S);
		\draw [style=inarrow, bend left=40, looseness=1.00] (r) to (R);
		\draw [style=inarrow,  bend left=10, looseness=1.00] (I) to (r);
	\end{pgfonlayer}
\end{tikzpicture}
\]
One should beware that the terminology is diverse, since it comes from several communities.  In the Petri net literature, species are called `places' and reactions are called `transitions'.  Indeed, Petri nets are sometimes called `place-transition nets' or `P/T nets'.  On the other hand, chemists call them `species-reaction graphs' or `SR-graphs'.   When each reaction of a Petri net has a rate constant attached to it, it is often called a `stochastic Petri net'.

While some qualitative properties of a rate equation can be read off from a reaction network, others are more easily read from the corresponding Petri net.  For example, properties of a Petri net can be used to determine whether its rate equation has the capacity to admit multiple steady states \cite{BanajiCraciun,CraciunFeinbergTang, FeinbergShinar}.  

Petri nets are also better suited to a compositional framework.  The key new concept
required is that of an `open' Petri net.  Here is an example:
\[
\begin{tikzpicture}
	\begin{pgfonlayer}{nodelayer}
		\node [style=species] (S) at (0, -1.5) {$S$};
		\node [style=species] (I) at (0, 1.5) {$I$};
		\node [style=transition] (i) at (-1.5,0) {$\; \big. \iota$ \,};
		\node [style=inputdot] (outI) at (1, 1.5) {};
		\node [style=inputdot] (outS) at (1, -1.5) {};
		\node [style=none] (ATL) at (-3, 2) {};
		\node [style=none] (ATR) at (-2.4, 2) {};
		\node [style=none] (ABR) at (-2.4, -2) {};
		\node [style=none] (ABL) at (-3, -2) {};
		\node [style=none] (BTL) at (0.7, 2) {};
		\node [style=none] (BTR) at (1.4, 2) {};
		\node [style=none] (BBR) at (1.4, -2) {};
		\node [style=none] (BBL) at (0.7, -2) {};
		\node [style=empty] at (-2.7, 2.4) {$X$};
		\node [style=empty] at (1.1, 2.4) {$Y$};
		\node [style=empty] at (1.2, 1.5) {$1$}; 
		\node [style=empty] at (1.2, -1.5) {$2$}; 
		
	\end{pgfonlayer}
	\begin{pgfonlayer}{edgelayer}
		\draw [style=inarrow, bend left=40, looseness=1.00] (S) to (i);
		\draw [style=inarrow, bend right=40, looseness=1.00] (I) to (i);
		\draw [style=inarrow, bend right=25, looseness=1.00] (i) to (I);
		\draw [style=inarrow, bend right=40, looseness=1.00] (i) to (I);
		\draw [style=inputarrow] (outI) to (I);
		\draw [style=inputarrow] (outS) to (S);
		\draw [style=simple] (ATL.center) to (ATR.center);
		\draw [style=simple] (ATR.center) to (ABR.center);
		\draw [style=simple] (ABR.center) to (ABL.center);
		\draw [style=simple] (ABL.center) to (ATL.center);
		\draw [style=simple] (BTL.center) to (BTR.center);
		\draw [style=simple] (BTR.center) to (BBR.center);
		\draw [style=simple] (BBR.center) to (BBL.center);
		\draw [style=simple] (BBL.center) to (BTL.center);

	\end{pgfonlayer}
\end{tikzpicture}
\]
The box at left shows a set $X$ of `inputs' (which happens to be empty), while the box at right shows a set $Y$ of `outputs'.  Both inputs and outputs are points at which entities of various species can flow in or out of the Petri net.  We say the open Petri net goes from $X$ to $Y$, and we shall show how to treat it as a morphism $f \maps X \to Y$ in a category we call $\RxNet$.     

Given an open Petri net with rate constants assigned to each reaction, we explain how to systematically obtain its `open rate equation', which amounts to the usual rate equation with extra terms describing inflows and outflows.  The above example has this open rate equation:
\beq
\begin{array}{ccr}
\label{open_rate_equation_1}
\displaystyle{\frac{d S}{d t}} &=&  - r_\iota S I - o_1 \\ \\
\displaystyle{\frac{d I}{d t}} &=&  r_\iota S I - o_2 .
\end{array}
\eeq
Here $o_1, o_2 \maps \R \to \R$ are arbitrary smooth functions describing outflows as a function of time.  

Given another open Petri net $g \maps Y \to Z$, for example this:
\[
\begin{tikzpicture}
	\begin{pgfonlayer}{nodelayer}
		\node [style=species] (S) at (0, -1.5) {$S$};
		\node [style=species] (I) at (0, 1.5) {$I$};
             \node [style=species] (R) at (2.5,0) {$R$};
             \node [style=transition] (r) at (1.5,1.4) {$\; \big. \rho\,$};    
		\node [style=transition] (l) at (1.5,-1.4) {$\; \big. \lambda \,$};    
		\node [style=inputdot] (inI) at (-1, 1.5) {};
		\node [style=inputdot] (inS) at (-1, -1.5) {};
		\node [style=none] (A'TL) at (-1.4, 2) {};
		\node [style=none] (A'TR) at (-0.7, 2) {};
		\node [style=none] (A'BR) at (-0.7, -2) {};
		\node [style=none] (A'BL) at (-1.4, -2) {};
		\node [style=none] (B'TL) at (3.2, 2) {};
		\node [style=none] (B'TR) at (3.8, 2) {};
		\node [style=none] (B'BR) at (3.8, -2) {};
		\node [style=none] (B'BL) at (3.2, -2) {};
		\node [style=empty] at (-1, 2.4) {$Y$};
		\node [style=empty] at (3.5, 2.4) {$Z$};
		\node [style=empty] at (-1.2, 1.5) {$1$}; 
		\node [style=empty] at (-1.2, -1.5) {$2$}; 
		
	\end{pgfonlayer}
	\begin{pgfonlayer}{edgelayer}
		\draw [style=inarrow, bend left=40, looseness=1.00] (R) to (l);
		\draw [style=inarrow,  bend left=10, looseness=1.00] (l) to (S);
		\draw [style=inarrow, bend left=40, looseness=1.00] (r) to (R);
		\draw [style=inarrow,  bend left=10, looseness=1.00] (I) to (r);
		\draw [style=inputarrow] (inI) to (I);
		\draw [style=inputarrow] (inS) to (S);
		\draw [style=simple] (A'TL.center) to (A'TR.center);
		\draw [style=simple] (A'TR.center) to (A'BR.center);
		\draw [style=simple] (A'BR.center) to (A'BL.center);
		\draw [style=simple] (A'BL.center) to (A'TL.center);
		\draw [style=simple] (B'TL.center) to (B'TR.center);
		\draw [style=simple] (B'TR.center) to (B'BR.center);
		\draw [style=simple] (B'BR.center) to (B'BL.center);
		\draw [style=simple] (B'BL.center) to (B'TL.center);

	\end{pgfonlayer}
\end{tikzpicture}
\]
it will have its own open rate equation, in this case
\beq
\label{open_rate_equation_2}
\begin{array}{ccc}
\displaystyle{\frac{d S}{d t}} &=& r_\lambda R + i_2 \\ \\
\displaystyle{\frac{d I}{d t}} &=& - r_\rho I + i_1 \\  \\
\displaystyle{\frac{d R}{d t}} &=& r_\rho I - r_\lambda R .
\end{array}
\eeq
Here $i_1, i_2 \maps \R \to \R$ are arbitrary smooth functions describing inflows.  We can compose $f$ and $g$ by gluing the outputs of $f$ to the inputs of $g$, obtaining a new open Petri net $gf \maps X \to Z$, as follows:
\[
\begin{tikzpicture}
	\begin{pgfonlayer}{nodelayer}
		\node [style=species] (S) at (0, -1.5) {$S$};
		\node [style=species] (I) at (0, 1.5) {$I$};
		\node [style=transition] (i) at (-1.5,0) {$\; \big. \iota$ \,};   
		\node [style=none] (ATL) at (-3, 2) {};
		\node [style=none] (ATR) at (-2.4, 2) {};
		\node [style=none] (ABR) at (-2.4, -2) {};
		\node [style=none] (ABL) at (-3, -2) {};
		\node [style=none] (BTL) at (0.7, 2) {};
		\node [style=none] (BTR) at (1.3, 2) {};
		\node [style=none] (BBR) at (1.3, -2) {};
		\node [style=none] (BBL) at (0.7, -2) {};
		\node [style=empty] at (-2.7, 2.4) {$X$};
	      \node [style=species] (S) at (0, -1.5) {$S$};
		\node [style=species] (I) at (0, 1.5) {$I$};
             \node [style=species] (R) at (2.5,0) {$R$};
             \node [style=transition] (r) at (1.5,1.4) {$\; \big. \rho\,$};    
		\node [style=transition] (l) at (1.5,-1.4) {$\; \big. \lambda \,$};    
		\node [style=none] (A'TL) at (-1.3, 2) {};
		\node [style=none] (A'TR) at (-0.7, 2) {};
		\node [style=none] (A'BR) at (-0.7, -2) {};
		\node [style=none] (A'BL) at (-1.3, -2) {};
		\node [style=none] (B'TL) at (3.2, 2) {};
		\node [style=none] (B'TR) at (3.8, 2) {};
		\node [style=none] (B'BR) at (3.8, -2) {};
		\node [style=none] (B'BL) at (3.2, -2) {};
		\node [style=empty] at (3.5, 2.4) {$Z$};
		
	\end{pgfonlayer}
	\begin{pgfonlayer}{edgelayer}
		\draw [style=inarrow, bend left=40, looseness=1.00] (S) to (i);
		\draw [style=inarrow, bend right=40, looseness=1.00] (I) to (i);
		\draw [style=inarrow, bend right=25, looseness=1.00] (i) to (I);
		\draw [style=inarrow, bend right=40, looseness=1.00] (i) to (I);
		\draw [style=simple] (ATL.center) to (ATR.center);
		\draw [style=simple] (ATR.center) to (ABR.center);
		\draw [style=simple] (ABR.center) to (ABL.center);
		\draw [style=simple] (ABL.center) to (ATL.center);
		\draw [style=simple] (B'TL.center) to (B'TR.center);
		\draw [style=simple] (B'TR.center) to (B'BR.center);
		\draw [style=simple] (B'BR.center) to (B'BL.center);
		\draw [style=simple] (B'BL.center) to (B'TL.center);
	       \draw [style=inarrow, bend left=40, looseness=1.00] (R) to (l);
		\draw [style=inarrow, bend left=40, looseness=1.00] (r) to (R);
		\draw [style=inarrow,  bend left=10, looseness=1.00] (I) to (r);
	      \draw [style=inarrow, bend left=40, looseness=1.00] (R) to (l);
		\draw [style=inarrow,  bend left=10, looseness=1.00] (l) to (S);
		\draw [style=inarrow, bend left=40, looseness=1.00] (r) to (R);
		\draw [style=inarrow,  bend left=10, looseness=1.00] (I) to (r);
	\end{pgfonlayer}
\end{tikzpicture}
\]
Since this open Petri net $gf$ has no inputs or outputs, it amounts to an ordinary Petri net, and its open rate equation is a rate equation of the usual kind.  Indeed, this is the Petri net we have already seen, and its open rate equation is Equation \eqref{SIRS_rate_equation}.  

There is a systematic procedure for combining the open rate equations for
two open Petri nets to obtain that of their composite.  In the case at hand it amounts to identifying the outflows of $f$ with the inflows of $g$ (setting $i_1 = o_1$ and $i_2 = o_2$) and then adding the right hand sides of Equations \eqref{open_rate_equation_1} and \eqref{open_rate_equation_2}.  The first goal of this paper is to precisely describe this procedure, and to prove that it defines a functor
\[     \graysquare\maps \RxNet \to \Dynam \]
from $\RxNet$ to a category $\Dynam$ where the morphisms are `open dynamical systems'.  By a dynamical system, we essentially mean a vector field on $\R^n$, which can be used to define a system of first-order ordinary differential equations in $n$ variables.  An example is the rate equation of a Petri net.  An open dynamical system allows for the possibility of extra terms that are arbitrary functions of time, such as the inflows and outflows in an open rate equation.

In fact, we prove that $\RxNet$ and $\Dynam$ are symmetric monoidal categories and that $\graysquare$ is a symmetric monoidal functor.  To do this, we use the machinery of `decorated cospans', developed by Fong \cite{Fong2015,FongThesis} and already applied to electrical circuits \cite{BaezFong} and Markov processes \cite{BaezFongPollard}.   Decorated cospans provide a powerful general tool for describing open systems.   A cospan in any category is a diagram of the form
\[ \xymatrix{ & S & \\ X \ar[ur]^{i} & & Y. \ar[ul]_{o} } \]
We are mostly interested in cospans in $\FinSet$, the category of finite sets and functions between these.  The set $S$, the \define{apex} of the cospan, is the set of states of an open system.  The sets $X$ and $Y$ are the \define{inputs} and \define{outputs} of this system.  The \define{legs} of the cospan, meaning the morphisms $i \maps X \to S$ and $o \maps Y \to S$, describe how these inputs and outputs are included in the system.  In our application, $S$ is the set of species of a Petri net.  

For example, we may take this reaction network: 
\[ A+B \stackrel{\alpha}{\longrightarrow} 2C \quad \quad C \stackrel{\beta}{\longrightarrow} D, \]
treat it as a Petri net with $S = \{A,B,C,D\}$:
\[
\begin{tikzpicture}
	\begin{pgfonlayer}{nodelayer}
		\node [style=transition] (-1) at (0.5, 0) {$\beta$};
		\node [style=species] (-2) at (2, 0) {$D$};
		\node [style=species] (0) at (-1, 0) {$C$};
		\node [style=none] (1) at (-0.25, 0.75) {};
		\node [style=species] (2) at (-4, -0.5) {$B$};
		\node [style=none] (3) at (-4.75, 0.75) {};
		\node [style=none] (4) at (-5.25, 0.75) {};
		\node [style=none] (6) at (0.25, 0.75) {};
		\node [style=transition] (7) at (-2.5, 0) {$\alpha$};
		\node [style=none] (8) at (-5.25, -0.75) {};
		\node [style=none] (11) at (-0.25, -0.75) {};
		\node [style=none] (14) at (-4.75, -0.75) {};
		\node [style=none] (15) at (0.25, -0.75) {};
		\node [style=species] (18) at (-4, 0.5) {$A$};
	\end{pgfonlayer}
	\begin{pgfonlayer}{edgelayer}
		\draw [style=inarrow] (-1) to (-2);
		\draw [style=inarrow] (0) to (-1);
		\draw [style=inarrow] (18) to (7);
		\draw [style=inarrow] (2) to (7);
		\draw [style=inarrow, bend right=15, looseness=1.00] (7) to (0);
		\draw [style=inarrow, bend left=15, looseness=1.00] (7) to (0);
	\end{pgfonlayer}
\end{tikzpicture}
\]
and then turn that into an open Petri net by choosing any finite sets $X,Y$ and
maps $i \maps X \to S$, $o \maps Y \to S$, for example as follows:
\[
\begin{tikzpicture}
	\begin{pgfonlayer}{nodelayer}
		\node [style=transition] (-1) at (0.5, 0) {$\beta$};
		\node [style=species] (-2) at (2, 0) {$D$};
		\node [style=species] (0) at (-1, 0) {$C$};
		\node [style=none] (1) at (2.75, 0.75) {};
		\node [style=species] (2) at (-4, -0.5) {$B$};
		\node [style=none] (3) at (-4.75, 0.75) {};
		\node [style=none] (4) at (-5.25, 0.75) {};
		\node [style=inputdot] (5) at (3, 0) {};
		\node [style=none] (6) at (3.25, 0.75) {};
		\node [style=transition] (7) at (-2.5, 0) {$\alpha$};
		\node [style=none] (8) at (-5.25, -0.75) {};
		\node [style=empty] (10) at (-5, 1) {$X$};
		\node [style=none] (11) at (2.75, -0.75) {};
		\node [style=inputdot] (12) at (-5, -0.5) {};
		\node [style=inputdot] (13) at (-5, 0.5) {};
		\node [style=none] (14) at (-4.75, -0.75) {};
		\node [style=none] (15) at (3.25, -0.75) {};
		\node [style=inputdot] (16) at (-5, 0) {};
		\node [style=empty] (17) at (3, 1) {$Y$};
		\node [style=species] (18) at (-4, 0.5) {$A$};
	\end{pgfonlayer}
	\begin{pgfonlayer}{edgelayer}
		\draw [style=inarrow] (-1) to (-2);
		\draw [style=inarrow] (0) to (-1);
		\draw [style=inarrow] (18) to (7);
		\draw [style=inarrow] (2) to (7);
		\draw [style=inarrow, bend right=15, looseness=1.00] (7) to (0);
		\draw [style=inarrow, bend left=15, looseness=1.00] (7) to (0);
		\draw [style=inputarrow] (5) to (-2);
		\draw [style=inputarrow] (13) to (18);
		\draw [style=inputarrow] (16) to (2);
		\draw [style=inputarrow] (12) to (2);
		\draw [style=simple] (4.center) to (3.center);
		\draw [style=simple] (3.center) to (14.center);
		\draw [style=simple] (14.center) to (8.center);
		\draw [style=simple] (8.center) to (4.center);
		\draw [style=simple] (1.center) to (6.center);
		\draw [style=simple] (6.center) to (15.center);
		\draw [style=simple] (15.center) to (11.center);
		\draw [style=simple] (11.center) to (1.center);
	\end{pgfonlayer}
\end{tikzpicture}
\]
Notice that the maps including the inputs and outputs into the states of the system need not be one-to-one.  This is technically useful, but it introduces some subtleties not yet explained in our discussion of the gray-boxing functor.

An open Petri net can thus be seen as a cospan of finite sets whose apex $S$ is `decorated' with some extra information, namely a Petri net with $S$ as its set of species.  Fong's theory of decorated cospans lets us define a category with open Petri nets as morphisms, with composition given by gluing the outputs of one open Petri net to the inputs of the other.   For details see Appendix \ref{sec:deccospan}, where we recall some results on constructing decorated cospan categories and functors between these.

We call the functor $\graysquare \maps \RxNet \to \Dynam$ `gray-boxing' because it hides some but not all the internal details of an open Petri net.  We can go further and `black-box' an open dynamical system.  This amounts to recording only the relation between input and output variables that must hold in steady state.  We prove that black-boxing gives a functor 
\[        \blacksquare \maps \Dynam \to \SemiAlgRel \] 
to a category $\SemiAlgRel$ of `semi-algebraic relations' between real vector spaces, meaning relations defined by polynomials and inequalities.  (This relies on the fact that our dynamical systems involve \emph{algebraic} vector fields, meaning those whose components are polynomials; more general dynamical systems would give more general relations.)  We prove that like the gray-boxing functor, the black-boxing functor is symmetric monoidal.  

This paper is structured as follows. In Section \ref{sec:petri} we review and compare reaction networks and Petri nets.  In Section \ref{sec:RNet} we construct a category $\RNet$ where an object is a finite set and a morphism is an open reaction network (or more precisely, an isomorphism class of open reaction networks).   In Section \ref{sec:RxNet} we enhance this construction to define a category $\RxNet$ where the transitions of the open reaction networks are equipped with rate constants.   In Section \ref{sec:openrate} we explain the open dynamical system associated to an open reaction network, and in Section \ref{sec:opendynam} we construct a category $\Dynam$ of open dynamical systems. In Section \ref{sec:gray} we construct the gray-boxing functor $\graysquare \maps \RxNet \to \Dynam$.   In Section \ref{sec:black} we construct the black-boxing functor $\blacksquare \maps \Dynam \to \SemiAlgRel$.  Finally, in Section \ref{sec:conclusions} we fit our results into a larger `network of network theories'.

\subsection*{Acknowledgements}

We thank Brendan Fong for many conversations, and David Spivak for pointing out his closely related work on black-boxing open dynamical systems \cite{Spivak}. We thank Daniel Cicala for helping us understand the functor $D$ sending any finite set $S$ to the set of algebraic vector fields on $\R^S$, as discussed in Section \ref{sec:opendynam}.  We also thank the Centre for Quantum Technologies and DARPA's Complex Adaptive System Composition and Design Environment program, who helped support this research.

\section{Reaction networks versus Petri nets}
\label{sec:petri}

We begin by precisely defining reaction networks and Petri nets, so we can see that they are
two ways of presenting the same concept.

\begin{defn}
A \define{reaction network} $(S,T,s,t)$ consists of:
\begin{itemize}
\item a finite set $S$,
\item a finite set $T$,
\item functions $s,t \maps T \to \N^S$.
\end{itemize}
We call the elements of $S$ \define{species}, those of $\N^S$ \define{complexes},
and those of $T$ \define{transitions}.  Any transition $\tau \in T$ has a \define{source}
$s(\tau)$ and \define{target} $t(\tau)$.  If $s(\tau) = \kappa$ and $t(\tau) = \kappa'$ 
we write $\tau \maps \kappa \to \kappa'$. 
\end{defn}

\noindent Our terminology here is a compromise: as mentioned before, in the
Petri net literature `species' are often called `places', while in the chemistry
literature `transitions' are often called `reactions'.  The alphabetical proximity
of $S$ and $T$ gives our compromise a certain charm.

The set of complexes relevant to a given reaction network $(S,T,s,t)$ is
\[            K = \im(s) \cup \im(t) \subseteq \N^S. \]
The reaction network gives a graph with $K$ as its set of vertices and a directed 
edge from $\kappa \in K$ to $\kappa' \in K$ for each transition $\tau \maps \kappa 
\to \kappa'$.     This graph may have multiple edges or self-loops.  It is thus
the kind of graph sometimes called a `directed multigraph' or
`quiver'.   However, a graph of this kind can only arise from a reaction network if 
every vertex is the source or target of some edge.  

On the other hand, we have Petri nets:

\begin{defn}
A \define{Petri net} $(S,T,m,n)$ consists of:
\begin{itemize}
\item a finite set $S$,
\item a finite set $T$,
\item functions $m,n \maps S \times T \to \N$.
\end{itemize}
We call the elements
of $S$ \define{species} and those of $T$ \define{transitions}.  
We call any Petri net with $S$ as its set of species a \define{Petri net on} $S$. 
Given a species $\sigma \in S$ and a transition $\tau \in T$, we say
$m(\sigma,\tau) \in \N$ is the number of times $\sigma$ appears
as an \define{input} of $\tau$, and $n(\sigma,\tau) \in \N$ is the 
number of times $\sigma$ appears as the \define{output} of $\tau$.
\end{defn}

It is easy to convert a reaction network into a Petri net, or vice versa, using
the relations
\[          m(\sigma,\tau) = s(\tau)(\sigma) , \]
\[          n(\sigma,\tau) = t(\tau)(\sigma)  .\]
Thus, anything we do with reaction networks we can do with Petri nets, and
vice versa.  However, Petri nets are usually drawn in a different way.   Given a Petri net $(S,T,m,n)$ we draw a yellow circle for each species $\sigma \in S$, an aqua square for each transition $\tau \in T$, a certain number $m(\sigma,\tau)$ of edges from each species $\sigma$ to each transition $\tau$, and a certain number $n(\sigma,\tau)$ of edges from each transition $\tau$ to each species $\sigma$.  This gives a picture of a Petri net as a bipartite graph.  For example, this reaction network
\[ 3A \stackrel{\tau}{\longrightarrow} B + C  \]
gives a Petri net that we can draw as follows:
\[
\begin{tikzpicture}
	\begin{pgfonlayer}{nodelayer}
		\node [style=species] (0) at (-4, 0) {$A$};
		\node [style=none] (1) at (-0.25, 0.75) {};
		\node [style=species] (2) at (-1, -0.5) {$C$};
		\node [style=none] (3) at (-4.75, 0.75) {};
		\node [style=none] (4) at (-5.25, 0.75) {};
		\node [style=none] (6) at (0.25, 0.75) {};
		\node [style=transition] (7) at (-2.5, 0) {$\tau$};
		\node [style=none] (8) at (-5.25, -0.75) {};
		\node [style=none] (11) at (-0.25, -0.75) {};
		\node [style=none] (14) at (-4.75, -0.75) {};
		\node [style=none] (15) at (0.25, -0.75) {};
		\node [style=species] (18) at (-1, 0.5) {$B$};
	\end{pgfonlayer}
	\begin{pgfonlayer}{edgelayer}
		\draw [style=inarrow] (7) to (18);
		\draw [style=inarrow] (7) to (2);
		\draw [style=inarrow, bend right=30, looseness=1.00] (0) to (7);
		\draw [style=inarrow] (0) to (7);
		\draw [style=inarrow, bend left=30, looseness=1.00] (0) to (7);
	\end{pgfonlayer}
\end{tikzpicture}
\]
However, we must be a bit careful.   A Petri net does not have a well-defined \emph{set} of edges from a species $\sigma$ to a transition $\tau$, or vice versa: there is merely a \emph{natural number} $m(\sigma,\tau)$ and a natural number $n(\sigma,\tau)$.  Thus, unlike in a reaction network, the edges in a Petri net do not have `names'.  For example, in the picture above, we are not allowed to distinguish between the three edges from $A$ to $\tau$: these edges are just a convenient way of indicating that $m(A,\tau) = 3$.

In what follows we work with reaction networks, since our applications are especially relevant to chemistry, but we often draw them as Petri nets. 

\section{Open reaction networks}
\label{sec:RNet}

To treat a reaction network as an open system, we equip it with `inputs' and
`outputs':

\begin{defn}
\label{defn:open_reaction_network}
Given finite sets $X$ and $Y$, an \define{open reaction network from} $X$ \define{to} $Y$ is a cospan of finite sets
\[ \xymatrix{  & S  &  \\ X \ar[ur]^{i} & & Y \ar[ul]_{o} } \] 
together with a reaction network $R$ on $S$. We often abbreviate all this data as $R \maps X \to Y$. We say $X$ is the set of \define{inputs} and $Y$ is the set of \define{outputs}.    
\end{defn}

The first thing we want to do with open reaction networks is compose them: that is, glue them together to build larger ones.   Given open reaction networks $R \maps X \to Y$ and $R' \maps Y \to Z$, we can compose them to get an open reaction network $R' R \maps X \to Z$.  

The process of composition actually involves two steps.  These are easy to visualize if we draw each reaction network as a Petri net as explained in Section \ref{sec:petri}, and also indicate the maps from inputs and outputs into the set of species.  For example, suppose $R \maps X \to Y$ looks like this:
\[
\begin{tikzpicture}
	\begin{pgfonlayer}{nodelayer}
		\node [style=species] (A) at (-4, 0.5) {$A$};
		\node [style=species] (B) at (-4, -0.5) {$B$};
		\node [style=species] (C) at (-1, 0.5) {$C$};
		\node [style=species] (D) at (-1, -0.5) {$D$};
             \node [style=transition] (a) at (-2.5, 0) {$\alpha$}; 
		
		\node [style=empty] (X) at (-5.1, 1) {$X$};
		\node [style=none] (Xtr) at (-4.75, 0.75) {};
		\node [style=none] (Xbr) at (-4.75, -0.75) {};
		\node [style=none] (Xtl) at (-5.4, 0.75) {};
             \node [style=none] (Xbl) at (-5.4, -0.75) {};
	
		\node [style=inputdot] (1) at (-5, 0.5) {};
		\node [style=empty] at (-5.2, 0.5) {$1$};
		\node [style=inputdot] (2) at (-5, 0) {};
		\node [style=empty] at (-5.2, 0) {$2$};
		\node [style=inputdot] (3) at (-5, -0.5) {};
		\node [style=empty] at (-5.2, -0.5) {$3$};

		\node [style=empty] (Y) at (0.1, 1) {$Y$};
		\node [style=none] (Ytr) at (.4, 0.75) {};
		\node [style=none] (Ytl) at (-.25, 0.75) {};
		\node [style=none] (Ybr) at (.4, -0.75) {};
		\node [style=none] (Ybl) at (-.25, -0.75) {};

		\node [style=inputdot] (4) at (0, 0.5) {};
		\node [style=empty] at (0.2, 0.5) {$4$};
		\node [style=inputdot] (5) at (0, -0.5) {};
		\node [style=empty] at (0.2, -0.5) {$5$};		
		
		
	\end{pgfonlayer}
	\begin{pgfonlayer}{edgelayer}
		\draw [style=inarrow] (A) to (a);
		\draw [style=inarrow] (B) to (a);
		\draw [style=inarrow] (a) to (C);
		\draw [style=inarrow] (a) to (D);
		\draw [style=inputarrow] (1) to (A);
		\draw [style=inputarrow] (2) to (B);
		\draw [style=inputarrow] (3) to (B);
		\draw [style=inputarrow] (4) to (C);
		\draw [style=inputarrow] (5) to (D);
		\draw [style=simple] (Xtl.center) to (Xtr.center);
		\draw [style=simple] (Xtr.center) to (Xbr.center);
		\draw [style=simple] (Xbr.center) to (Xbl.center);
		\draw [style=simple] (Xbl.center) to (Xtl.center);
		\draw [style=simple] (Ytl.center) to (Ytr.center);
		\draw [style=simple] (Ytr.center) to (Ybr.center);
		\draw [style=simple] (Ybr.center) to (Ybl.center);
		\draw [style=simple] (Ybl.center) to (Ytl.center);
	\end{pgfonlayer}
\end{tikzpicture}
\]
and $R' \maps Y \to Z$ looks like this:
\[
\begin{tikzpicture}
	\begin{pgfonlayer}{nodelayer}

		\node [style = species] (E) at (1, 0) {$E$};
		\node [style = transition] (b) at (2.5, 0) {$\beta$};
		\node [style = species] (F) at (4,0) {$F$};
		
	

		\node [style=empty] (Y) at (-0.1, 1) {$Y$};
		\node [style=none] (Ytr) at (.25, 0.75) {};
		\node [style=none] (Ytl) at (-.4, 0.75) {};
		\node [style=none] (Ybr) at (.25, -0.75) {};
		\node [style=none] (Ybl) at (-.4, -0.75) {};

		\node [style=inputdot] (4) at (0, 0.5) {};
		\node [style=empty] at (-0.2, 0.5) {$4$};
		\node [style=inputdot] (5) at (0, -0.5) {};
		\node [style=empty] at (-0.2, -0.5) {$5$};		
		
		\node [style=empty] (Z) at (5, 1) {$Z$};
		\node [style=none] (Ztr) at (4.75, 0.75) {};
		\node [style=none] (Ztl) at (5.4, 0.75) {};
		\node [style=none] (Zbl) at (5.4, -0.75) {};
		\node [style=none] (Zbr) at (4.75, -0.75) {};

		\node [style=inputdot] (6) at (5, 0) {};
		\node [style=empty] at (5.2, 0) {$6$};	
		
	\end{pgfonlayer}
	\begin{pgfonlayer}{edgelayer}
		\draw [style=inarrow] (E) to (b);
		\draw [style=inarrow] (b) to (F);
		\draw [style=inputarrow] (4) to (E);
		\draw [style=inputarrow] (5) to (E);
		\draw [style=inputarrow] (6) to (F);
		\draw [style=simple] (Ytl.center) to (Ytr.center);
		\draw [style=simple] (Ytr.center) to (Ybr.center);
		\draw [style=simple] (Ybr.center) to (Ybl.center);
		\draw [style=simple] (Ybl.center) to (Ytl.center);
		\draw [style=simple] (Ztl.center) to (Ztr.center);
		\draw [style=simple] (Ztr.center) to (Zbr.center);
		\draw [style=simple] (Zbr.center) to (Zbl.center);
		\draw [style=simple] (Zbl.center) to (Ztl.center);
	\end{pgfonlayer}
\end{tikzpicture}
\]
To compose them, the first step is to put the pictures together:
\[
\begin{tikzpicture}
	\begin{pgfonlayer}{nodelayer}
		\node [style=species] (A) at (-4, 0.5) {$A$};
		\node [style=species] (B) at (-4, -0.5) {$B$};
		\node [style=species] (C) at (-1, 0.5) {$C$};
		\node [style=species] (D) at (-1, -0.5) {$D$};
             \node [style=transition] (a) at (-2.5, 0) {$\alpha$}; 
		\node [style = species] (E) at (1, 0) {$E$};
		\node [style = transition] (b) at (2.5, 0) {$\beta$};
		\node [style = species] (F) at (4,0) {$F$};
		
		\node [style=empty] (X) at (-5.1, 1) {$X$};
		\node [style=none] (Xtr) at (-4.75, 0.75) {};
		\node [style=none] (Xbr) at (-4.75, -0.75) {};
		\node [style=none] (Xtl) at (-5.4, 0.75) {};
             \node [style=none] (Xbl) at (-5.4, -0.75) {};
	
		\node [style=inputdot] (1) at (-5, 0.5) {};
		\node [style=empty] at (-5.2, 0.5) {$1$};
		\node [style=inputdot] (2) at (-5, 0) {};
		\node [style=empty] at (-5.2, 0) {$2$};
		\node [style=inputdot] (3) at (-5, -0.5) {};
		\node [style=empty] at (-5.2, -0.5) {$3$};

		\node [style=empty] (Y) at (-0.1, 1) {$Y$};
		\node [style=none] (Ytr) at (.25, 0.75) {};
		\node [style=none] (Ytl) at (-.4, 0.75) {};
		\node [style=none] (Ybr) at (.25, -0.75) {};
		\node [style=none] (Ybl) at (-.4, -0.75) {};

		\node [style=inputdot] (4) at (0, 0.5) {};
		\node [style=empty] at (0, 0.25) {$4$};
		\node [style=inputdot] (5) at (0, -0.5) {};
		\node [style=empty] at (0, -0.25) {$5$};		
		
		\node [style=empty] (Z) at (5, 1) {$Z$};
		\node [style=none] (Ztr) at (4.75, 0.75) {};
		\node [style=none] (Ztl) at (5.4, 0.75) {};
		\node [style=none] (Zbl) at (5.4, -0.75) {};
		\node [style=none] (Zbr) at (4.75, -0.75) {};

		\node [style=inputdot] (6) at (5, 0) {};
		\node [style=empty] at (5.2, 0) {$6$};	
		
	\end{pgfonlayer}
	\begin{pgfonlayer}{edgelayer}
		\draw [style=inarrow] (A) to (a);
		\draw [style=inarrow] (B) to (a);
		\draw [style=inarrow] (a) to (C);
		\draw [style=inarrow] (a) to (D);
		\draw [style=inarrow] (E) to (b);
		\draw [style=inarrow] (b) to (F);
		\draw [style=inputarrow] (1) to (A);
		\draw [style=inputarrow] (2) to (B);
		\draw [style=inputarrow] (3) to (B);
		\draw [style=inputarrow] (4) to (C);
		\draw [style=inputarrow] (5) to (D);
		\draw [style=inputarrow] (4) to (E);
		\draw [style=inputarrow] (5) to (E);
		\draw [style=inputarrow] (6) to (F);
		\draw [style=simple] (Xtl.center) to (Xtr.center);
		\draw [style=simple] (Xtr.center) to (Xbr.center);
		\draw [style=simple] (Xbr.center) to (Xbl.center);
		\draw [style=simple] (Xbl.center) to (Xtl.center);
		\draw [style=simple] (Ytl.center) to (Ytr.center);
		\draw [style=simple] (Ytr.center) to (Ybr.center);
		\draw [style=simple] (Ybr.center) to (Ybl.center);
		\draw [style=simple] (Ybl.center) to (Ytl.center);
		\draw [style=simple] (Ztl.center) to (Ztr.center);
		\draw [style=simple] (Ztr.center) to (Zbr.center);
		\draw [style=simple] (Zbr.center) to (Zbl.center);
		\draw [style=simple] (Zbl.center) to (Ztl.center);
	\end{pgfonlayer}
\end{tikzpicture}
\]
At this point, if we ignore the sets $X,Y,Z$, we have a new reaction network whose set of species is the disjoint union of those for $R$ and $R'$.  The second step is to identify a species of $R$ with a species of $R'$ whenever both are images of the same point in $Y$.  After this step we can stop drawing everything involving $Y$, and get an open reaction network from $X$ to $Z$:
\[
\begin{tikzpicture}
	\begin{pgfonlayer}{nodelayer}
		\node [style=species] (A) at (-4, 0.5) {$A$};
		\node [style=species] (B) at (-4, -0.5) {$B$};;
             \node [style=transition] (a) at (-2.5, 0) {$\alpha$}; 
		\node [style = species] (E) at (-1, 0) {$C$};
		\node [style = transition] (b) at (.5, 0) {$\beta$};
		\node [style = species] (F) at (2,0) {$F$};
		
		\node [style=empty] (X) at (-5.1, 1) {$X$};
		\node [style=none] (Xtr) at (-4.75, 0.75) {};
		\node [style=none] (Xbr) at (-4.75, -0.75) {};
		\node [style=none] (Xtl) at (-5.4, 0.75) {};
             \node [style=none] (Xbl) at (-5.4, -0.75) {};
	
		\node [style=inputdot] (1) at (-5, 0.5) {};
		\node [style=empty] at (-5.2, 0.5) {$1$};
		\node [style=inputdot] (2) at (-5, 0) {};
		\node [style=empty] at (-5.2, 0) {$2$};
		\node [style=inputdot] (3) at (-5, -0.5) {};
		\node [style=empty] at (-5.2, -0.5) {$3$};	
		
		\node [style=empty] (Z) at (3, 1) {$Z$};
		\node [style=none] (Ztr) at (2.75, 0.75) {};
		\node [style=none] (Ztl) at (3.4, 0.75) {};
		\node [style=none] (Zbl) at (3.4, -0.75) {};
		\node [style=none] (Zbr) at (2.75, -0.75) {};

		\node [style=inputdot] (6) at (3, 0) {};
		\node [style=empty] at (3.2, 0) {$6$};	
		
	\end{pgfonlayer}
	\begin{pgfonlayer}{edgelayer}
		\draw [style=inarrow] (A) to (a);
		\draw [style=inarrow] (B) to (a);
	     \draw [style=inarrow, bend right=15, looseness=1.00] (a) to (E);
	     \draw [style=inarrow, bend left =15, looseness=1.00] (a) to (E);		
		\draw [style=inarrow] (E) to (b);
		\draw [style=inarrow] (b) to (F);
		\draw [style=inputarrow] (1) to (A);
		\draw [style=inputarrow] (2) to (B);
		\draw [style=inputarrow] (3) to (B);
		\draw [style=inputarrow] (6) to (F);
		\draw [style=simple] (Xtl.center) to (Xtr.center);
		\draw [style=simple] (Xtr.center) to (Xbr.center);
		\draw [style=simple] (Xbr.center) to (Xbl.center);
		\draw [style=simple] (Xbl.center) to (Xtl.center);
		\draw [style=simple] (Ztl.center) to (Ztr.center);
		\draw [style=simple] (Ztr.center) to (Zbr.center);
		\draw [style=simple] (Zbr.center) to (Zbl.center);
		\draw [style=simple] (Zbl.center) to (Ztl.center);
	\end{pgfonlayer}
\end{tikzpicture}
\]
This new open reaction network is the desired composite $R' R \maps X \to Z$.  Note that we have identified the species $C,D$ and $E$ and arbitrarily named the resulting species $C$.

Formalizing all this, and proving that we get a category whose morphisms are open
reaction networks, is precisely what Fong's theory of decorated cospans  \cite{Fong2015,FongThesis} is designed to accomplish.   A bit of notation is useful: given a finite set $S$, let $F(S)$ be the set of all reaction networks on $S$.  Thus, an open reaction network with rates is a cospan of finite sets with apex $S$ together with an element $R \in F(S)$.  We say that the cospan is \define{decorated} by
$R$.

Now, suppose we have open reaction networks $R \maps X \to Y$ and $R' \maps Y \to Z$ and we wish to define their composite $R' R \maps X \to Z$.   Thus, we have cospans of finite sets
\[ \xymatrix { & S & & S' & \\ X \ar[ur]^{i} & &  Y \ar[ul]_o \ar[ur]^{i'} & & Z \ar[ul]_{o'} } \] 
decorated by elements $R \in F(S)$ and $R' \in F(S')$.   To define their composite, first we 
compose the cospans, and then we compose the decorations.
  
To compose the cospans, first we write them as part of a single diagram:
\[ \xymatrix{ & S & & S' & \\ 
X \ar[ur]^{i} & &  Y \ar[ul]_{o} \ar[ur]^{i'} & & Z \ar[ul]_{o'} \\ } \] 
Then we form the pushout of $S$ and $S'$ over $Y$.   This is denoted $S +_Y S'$, and is
formed by first taking the disjoint union of $S$ and $S'$ and then taking the quotient of this by the finest equivalence relation such that $n \sim n'$ if $o(y) = n$ and $i'(y) = n'$ for some $y \in Y$.    This identifies a species of $R$ with one of $R'$ whenever both are images of the same point in $Y$.  The pushout comes with canonical maps $j\maps S \to S+_{Y}S'$ and $j'\maps S' \to S+_{Y}S'$, so we get a diagram
  \[
    \xymatrix{
      && S +_Y S' \\
      & S \ar[ur]^{j} && S' \ar[ul]_{j'} \\
      \quad X\quad \ar[ur]^{i} && Y \ar[ul]_{o} \ar[ur]^{i'} &&\quad Z \quad \ar[ul]_{o'}
    }
  \]
and we define the composite cospan to be
\[ \xymatrix{ & S +_Y S' & \\ \; X \; \ar[ur]^{j i} & & \; \ar[ul]_{j' o'} \; Z. } \]

Next we need to decorate this composite cospan with an element of $F(S +_Y S')$. 
To get this element we start with what we have, namely $R \in F(S)$ and $R' \in F(S')$, and apply two maps.  First, we apply the obvious map
\[   \varphi_{S,S'} \maps F(S) \times F(S') \to F(S + S') \]
that sends a pair of reaction networks, one on $S$ and one on $S'$, to one
on the disjoint union $S + S'$ of these two sets.    Then, we apply a map
\[F([j,j']) \maps F(S+S') \to F(S+_Y S')\]
that glues together the two pieces of this reaction network to get one on 
$S +_Y S'$.  We explain these maps in turn.

First, the `obvious' map $\varphi_{S,S'}$ is defined by
\beq
\label{eq:Phi}
    \varphi_{S,S'} ((S,T,s,t), (S',T',s',t')) = (S + S', T + T', s+s', t+t').  
\eeq
In more detail, for the reaction network on the right hand side:
\begin{itemize}
\item The set of species is the disjoint union $S + S'$.
\item The set of transitions is the disjoint union $T + T'$.
\item The source map $s + s' \maps T + T' \to S + S'$ sends any transition $\tau \in T$ 
to $s(\tau) \in S$ and any transition $\tau' \in T'$ to $s'(\tau') \in S'$.  
\item The target map $t + t' \maps T + T' \to S + S'$ sends any transition $\tau \in T$ 
to $t(\tau) \in S$ and any transition $\tau' \in T'$ to $t'(\tau') \in S'$.  
\end{itemize}

Second, to define the map $F([j,j'])$, we take the canonical map
\[  [j,j'] \maps S+S' \to S+_Y S' \] 
from the disjoint union $S+S'$ onto the pushout, and use the fact that the
$F$ is actually a functor, so it is defined not only on finite sets but also on functions
betwen these:

\begin{lem}
\label{lemma:RFunctor}
There is a functor $F \maps \FinSet \to \Set$ such that:
\begin{itemize}
\item For any finite set $S$, $F(S)$ is the set of all reaction networks on $S$.
\item For any function $f \maps S \to S'$ between finite sets and any reaction network $(S,T,s,t) \in F(S)$, we have
\[    F(f)(S,T,s,t) = (S',T, f_*(s), f_*(t)) \]
where 
\beq  
f_*(s)(\tau)(\sigma') =\sum_{\{\sigma \in S : f(\sigma) = \sigma' \}} s(\tau)(\sigma)  
\label{stilde}
\eeq
and 
\beq  
f_*(t)(\tau)(\tilde{\sigma}) =\sum_{\{\sigma \in S : f(\sigma) = \sigma' \}} t(\tau)(\sigma) .
\label{ttilde}
\eeq
\end{itemize}
\end{lem}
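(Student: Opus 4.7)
The plan is to verify three things: (i) that the formula $F(f)(S,T,s,t) = (S', T, f_*(s), f_*(t))$ genuinely yields a reaction network on $S'$, (ii) that $F$ preserves identities, and (iii) that $F$ preserves composition. The key structural observation is that $f_*$ is the pushforward along $f$ of $\N$-valued functions: if one unpacks the notation, $f_*$ is the value on morphisms of the well-known free commutative monoid functor $\N^{(-)} \maps \FinSet \to \Set$, applied pointwise via $s$ to each $\tau \in T$. Granting this, the lemma reduces to the familiar functoriality of that construction.

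For well-definedness: since $S$ is finite, the fiber $\{\sigma \in S : f(\sigma) = \sigma'\}$ is finite, so the defining sum for $f_*(s)(\tau)(\sigma')$ is a finite sum of natural numbers, yielding an element of $\N$. Hence $f_*(s)$ and $f_*(t)$ are honest functions $T \to \N^{S'}$, and $(S', T, f_*(s), f_*(t))$ is a reaction network on $S'$. The identity law $F(\id_S) = \id_{F(S)}$ then follows because for each $\sigma \in S$ the fiber $\{\sigma' \in S : \id_S(\sigma') = \sigma\}$ is the singleton $\{\sigma\}$, so $(\id_S)_*(s)(\tau)(\sigma) = s(\tau)(\sigma)$, and similarly for $t$.

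For composition, given $f \maps S \to S'$ and $g \maps S' \to S''$, the goal is $F(g \circ f) = F(g) \circ F(f)$, which unwinds to the pointwise identities $(gf)_*(s) = g_*(f_*(s))$ and $(gf)_*(t) = g_*(f_*(t))$. Evaluating both sides at a transition $\tau$ and a species $\sigma'' \in S''$, the left-hand side is $\sum_{\sigma : gf(\sigma) = \sigma''} s(\tau)(\sigma)$, while the right-hand side is the iterated sum $\sum_{\sigma' : g(\sigma') = \sigma''} \sum_{\sigma : f(\sigma) = \sigma'} s(\tau)(\sigma)$. These agree because the fiber $(gf)^{-1}(\sigma'')$ partitions as the disjoint union $\bigsqcup_{\sigma' \in g^{-1}(\sigma'')} f^{-1}(\sigma')$, so the double sum is just a regrouping of the single sum.

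There is no serious obstacle here: the only nontrivial bookkeeping is recognizing the disjoint-union partition of fibers, and even that is a standard observation. In effect the proof amounts to noting that the assignment $S \mapsto \N^S$, $f \mapsto f_*$ is a covariant functor $\FinSet \to \Set$, and that the data $(T, s, t)$ is simply carried along without alteration, with $s$ and $t$ post-composed with $f_*$.
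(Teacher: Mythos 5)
Your proposal is correct and follows the same route as the paper, which simply notes that one must check preservation of identities and composition and calls both ``straightforward calculations''; you have carried out exactly those calculations, with the fiber-partition identity $(gf)^{-1}(\sigma'') = \bigsqcup_{\sigma' \in g^{-1}(\sigma'')} f^{-1}(\sigma')$ doing the work for composition, plus the (equally routine) well-definedness check. Your remark that $f \mapsto f_*$ is just the free commutative monoid functor $\N^{(-)}$ acting on morphisms is a nice way to see why everything is automatic, but it does not change the argument.
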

\begin{proof}
To prove that $F$ is a functor we need only check that $F$ preserves composition
and sends identity functions to identity functions. Both are straightforward calculations.
\end{proof}

To show that our procedure for composing open reaction networks gives a category, we use a result of Fong: it suffices to show that $F$ is `lax symmetric monoidal'.  We recall this concept at the beginning of Appendix \ref{sec:deccospan}, and Fong's result in Lemma \ref{lemma:fcospans}.

\begin{lem}
\label{lemma:RLax}
The functor $F$ becomes lax symmetric monoidal from $(\FinSet, +)$ to $(\Set, \times)$ if we equip it with the natural transformation $\varphi_{S,S'} \maps F(S) \times F(S') \to F(S + S')$ defined in Equation \eqref{eq:Phi} and the map $\varphi \maps 1 \to F(\emptyset)$, where $1$ is a chosen one-element set, sending its one element to the unique reaction network with no species and no transitions.
\end{lem}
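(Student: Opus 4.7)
The plan is to verify in turn the four pieces of data/coherence that define a lax symmetric monoidal functor, namely naturality of $\varphi_{S,S'}$, the left and right unit axioms involving $\varphi \maps 1 \to F(\emptyset)$, the associativity hexagon, and compatibility with the symmetry. All four reduce to direct unpacking of the definitions: reaction networks on $S+S'$ are specified by a transition set and source/target maps into $\N^{S+S'} \cong \N^S \times \N^{S'}$, and both $\varphi_{S,S'}$ and the pushforward $F(f)$ described in Lemma \ref{lemma:RFunctor} act on these pieces essentially by disjoint union and by the obvious coproduct of functions.

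First I would handle naturality. Given $f \maps S \to T$ and $g \maps S' \to T'$ and a pair $((S,T_0,s,t),(S',T_0',s',t')) \in F(S)\times F(S')$, both composites $F(f+g)\circ\varphi_{S,S'}$ and $\varphi_{T,T'}\circ(F(f)\times F(g))$ produce a reaction network on $T+T'$ whose transition set is $T_0+T_0'$. One checks by inspection that the resulting source maps on $\tau\in T_0$ are given by $(f+g)_*(s)(\tau) = f_*(s)(\tau)$ viewed in $\N^{T+T'}$, and symmetrically on $T_0'$; this is immediate from the pushforward formula \eqref{stilde}, because preimages of a point of $T+T'$ under $f+g$ lie entirely in one summand. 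The target maps work the same way.

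Next I would verify the unit axioms: $\varphi_{\emptyset,S}(\varphi(\ast),R)$ on the first factor equals $F(\emptyset+S \to S)(R) = R$ after applying $F$ of the left unitor, and symmetrically on the right. This holds because the empty reaction network has empty transition set, so disjoint union with it does nothing. For the associativity hexagon, the two routes from $F(S)\times F(S')\times F(S'')$ to $F(S+S'+S'')$ both give the reaction network whose transition set is $T_0+T_0'+T_0''$ with the three source/target maps concatenated; strict associativity of disjoint union in $\FinSet$ then makes the hexagon commute. Finally, for the symmetry axiom, swapping $R$ and $R'$ before applying $\varphi$ and afterwards applying $F$ of the swap $S+S' \to S'+S$ produces the same result as applying $\varphi_{S',S}$ after braiding in $\Set$, because both amount to relabelling the two disjoint pieces by the swap.

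The proof is bookkeeping rather than mathematics; the only place where a miscalculation is easy is the naturality square, since one must keep careful track of which preimages under $f+g$ contribute to each coordinate in $\N^{T+T'}$. Once that single observation is recorded, the remaining three axioms are immediate from strict (co)associativity and strict symmetry of $(\FinSet,+)$.
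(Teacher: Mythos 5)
Your overall strategy coincides with the paper's: check naturality of $\varphi_{S,S'}$ by unwinding the pushforward formulas, then verify the unit, associativity and braiding coherence diagrams directly. The paper compresses exactly this into ``naturality is a straightforward calculation'' plus ``commutativity of the laxness diagrams follows from the universal property of the coproduct in $\FinSet$.'' Your naturality argument --- that preimages of a point of $T+T'$ under $f+g$ lie entirely in one summand, so the pushforward of a coproduct of source maps is the coproduct of the pushforwards --- is precisely the needed observation, and your unitor and braiding verifications are carried out correctly, since there you do apply $F$ of the unitor and of the swap map rather than pretending these are identities.

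The one genuine flaw is your justification of the associativity hexagon (and your closing summary) by appeal to ``strict (co)associativity and strict symmetry of $(\FinSet,+)$.'' Disjoint union is not strictly associative or symmetric; the paper itself stresses that $+$ is associative only up to isomorphism, which is why composition of decorated cospans is only associative up to the equivalence relation introduced later. The sets $(S+S')+S''$ and $S+(S'+S'')$ are merely canonically isomorphic, so the two routes around the hexagon do not literally produce the same reaction network: the hexagon of Definition \ref{defn.lmf} has $F$ applied to the associator of $\FinSet$ along its bottom edge, and what must actually be checked is that $F(\alpha_{S,S',S''})$, which relabels species along the canonical bijection, carries the reaction network obtained from one bracketing to the one obtained from the other. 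This check is immediate --- it is exactly what the paper's appeal to the universal property of the coproduct accomplishes --- so the repair is routine, but as written the step's stated justification fails; the same caveat applies to your summary remark about the braiding square, since the swap $S+S'\to S'+S$ is a nontrivial bijection, not an identity.
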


\begin{proof}
Here $(\FinSet, +)$ denotes $\FinSet$ made into a symmetric monoidal category using  disjoint union as its tensor product, while  $(\Set, \times)$ denotes $\Set$ made into a symmetric monoidal category with the cartesian product as its tensor product.  
To check this lemma we must first show that $\varphi_{S,S'}$ is natural, which is a straightforward calculation.  Commutativity of the diagrams needed for the laxness of $F$ follows from the universal property of the coproduct in $\FinSet$.  \end{proof}

Fong's result \emph{almost} gives a category with finite sets as objects and open reaction networks as morphisms.  However, since the disjoint union of sets is associative only up to isomorphism, so is composition of open reaction networks.   Thus, instead of a category, open reaction networks are really morphisms of a bicategory \cite{Courser}.   To obtain a category, Fong uses certain \emph{equivalence classes} of open reaction networks $R \maps X \to Y$ as morphisms from $X$ to $Y$.  His result then implies:

\begin{thm}
There is a category $\RNet$ where:
\begin{itemize}
\item an object is a finite set,
\item a morphism from $X$ to $Y$ is an equivalence class of open reaction networks 
from $X$ to $Y$, 
\item Given morphisms represented by an open reaction network from $X$ to $Y$ and one from $Y$ to $Z$:
 \[
    (X \stackrel{i}\longrightarrow S \stackrel{o}\longleftarrow Y, R) 
    \quad \textrm{ and } \quad
    (Y \stackrel{i'}\longrightarrow S' \stackrel{o'}\longleftarrow Z, R'), 
  \]
their composite is the equivalence class of this cospan constructed via a pushout:
  \[
    \xymatrix{
      && S +_Y S' \\
      & S \ar[ur]^{j} && S' \ar[ul]_{j'} \\
      \quad X\quad \ar[ur]^{i} && Y \ar[ul]_{o} \ar[ur]^{i'} &&\quad Z \quad \ar[ul]_{o'}
    }
  \]
together with the reaction network on $S +_Y S'$ obtained by applying the map
\[      
\xymatrix{      F(S) \times F(S') \ar[rr]^-{\varphi_{S,S'}} && 
                     F(S + S') \ar[rr]^-{F([j,j'])} && F(S +_Y S') } \]
to the pair $(R,R') \in F(S) \times F(S')$.  
\end{itemize}
\end{thm}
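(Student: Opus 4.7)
The plan is to recognize this theorem as a direct instance of Fong's general construction of decorated cospan categories, which is recalled as Lemma \ref{lemma:fcospans} in Appendix \ref{sec:deccospan}. That result says: given any lax symmetric monoidal functor $F \maps (C, +) \to (\Set, \times)$, where $C$ has finite colimits and $+$ denotes its coproduct, one obtains a category whose objects are the objects of $C$ and whose morphisms are equivalence classes of $F$-decorated cospans, composed by pushout on the underlying cospans and transport of decorations along the laxator and functoriality of $F$. My strategy is simply to check the hypotheses and match up the data.

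First, I would note that $\FinSet$ has all finite colimits, so in particular the pushouts needed for composing cospans exist, and coproducts in $\FinSet$ are given by disjoint union $+$. Lemma \ref{lemma:RFunctor} supplies a functor $F \maps \FinSet \to \Set$ assigning to each finite set $S$ the set $F(S)$ of reaction networks on $S$, and Lemma \ref{lemma:RLax} equips $F$ with the natural transformation $\varphi_{S,S'} \maps F(S) \times F(S') \to F(S+S')$ of Equation \eqref{eq:Phi} together with $\varphi \maps 1 \to F(\emptyset)$, making $F$ into a lax symmetric monoidal functor. Thus every hypothesis of Fong's construction is in place.

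Next, I would unpack what Fong's construction yields in this setting and confirm that it matches the theorem's description. Objects are finite sets; a morphism from $X$ to $Y$ is an equivalence class (under isomorphism of the apex fixing the decoration and the legs) of cospans $X \to S \leftarrow Y$ together with a decoration $R \in F(S)$ — that is, an equivalence class of open reaction networks. Fong's composition rule takes cospans $X \to S \leftarrow Y$ and $Y \to S' \leftarrow Z$, forms the pushout $S +_Y S'$ with canonical maps $j, j'$, combines the decorations $R \in F(S)$ and $R' \in F(S')$ via the laxator to produce $\varphi_{S,S'}(R,R') \in F(S+S')$, and then transports along $[j,j']$ using functoriality of $F$ to obtain $F([j,j'])(\varphi_{S,S'}(R,R')) \in F(S +_Y S')$. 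This is precisely the composite described in the theorem. Defining $\RNet$ to be the category produced by Fong's construction completes the proof.

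The genuine work has already been absorbed into the two preceding lemmas, where functoriality of $F$ and naturality plus coherence of $\varphi_{S,S'}$ are checked; there is no further obstacle. If anything, the only subtlety is to justify the passage from a bicategory (where composition is associative only up to isomorphism, since pushouts are determined only up to canonical iso) to a strict category, which is handled inside Fong's framework by the equivalence-class quotient, and is exactly the reason the theorem is stated with \emph{equivalence classes} of open reaction networks as morphisms.
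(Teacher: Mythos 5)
Your proposal is correct and matches the paper's own proof, which likewise cites Lemma \ref{lemma:RLax} together with Fong's result (Lemma \ref{lemma:fcospans}) and notes that the equivalence-class quotient is what makes composition strictly associative. No gaps: the verification of the hypotheses is exactly where the paper places it, in Lemmas \ref{lemma:RFunctor} and \ref{lemma:RLax}.
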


\begin{proof}
This follows from Lemma \ref{lemma:RLax} together with Lemma 
\ref{lemma:fcospans}, where we explain the equivalence relation in detail.  
\end{proof}

In fact, Fong's machinery proves more.  We can take the `tensor product' of  two open reaction networks by setting them side by side.  They then act in parallel with no interaction between them.  This makes $\RNet$ into symmetric monoidal category. 
In fact it is one of a very nice sort, called a `hypergraph category'.  For more on this concept, see Fong's thesis \cite{FongThesis}.

\begin{thm}
The category $\RNet$ is a symmetric monoidal category where the tensor product of objects $X$ and $Y$ is their disjoint union $X + Y$, while the tensor product of the morphisms
\[
    (X \stackrel{i}{\longrightarrow} S \stackrel{o}{\longleftarrow} Y, R) 
    \quad \textrm{ and } \quad
    (X' \stackrel{i'}{\longrightarrow} S' \stackrel{o'}{\longleftarrow} Y', R') 
  \]
is defined to be
\[  ( X + X' \stackrel{i+i'}{\longrightarrow} S + S'  \stackrel{o + o'}{\longleftarrow} Y + Y', \;
\varphi_{S,S'}(R,R') ) .\]
In fact $\RNet$ is a hypergraph category, and thus a dagger-compact category.
\end{thm}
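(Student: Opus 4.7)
The plan is to reduce the entire statement to Fong's general theorem on decorated cospans (Lemma \ref{lemma:fcospans} in the appendix), whose input is a lax symmetric monoidal functor $F \maps (\FinSet, +) \to (\Set, \times)$ and whose output is a symmetric monoidal hypergraph category of $F$-decorated cospans. Lemma \ref{lemma:RLax} already verified that our $F$ is lax symmetric monoidal with coherence maps $\varphi_{S,S'}$ and unit $\varphi \maps 1 \to F(\emptyset)$, so no new hypothesis needs checking. The task is thus to match the monoidal and hypergraph structure that Fong's construction places on $\RNet$ with the one described in the statement.

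For the tensor product, Fong's construction takes the underlying tensor product from $\Cospan(\FinSet)$, where objects are tensored via disjoint union and two cospans are tensored leg-wise, giving $X + X' \stackrel{i + i'}{\longrightarrow} S + S' \stackrel{o + o'}{\longleftarrow} Y + Y'$. The decorations are then combined by applying the coherence map $\varphi_{S,S'} \maps F(S) \times F(S') \to F(S+S')$ to the pair $(R, R')$; since $\varphi_{S,S'}$ was defined in Equation \eqref{eq:Phi} as the disjoint-union-of-reaction-networks operation, this reproduces the formula in the statement exactly. Symmetry and the coherence axioms follow mechanically from the corresponding axioms in $\Cospan(\FinSet)$ together with the lax symmetric monoidal axioms for $F$ established in Lemma \ref{lemma:RLax}.

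For the hypergraph (and hence dagger-compact) structure, I would recall that $\Cospan(\FinSet)$ is itself a hypergraph category, with the special commutative Frobenius algebra on each finite set $X$ built from cospans formed using the codiagonal $\nabla_X \maps X + X \to X$, the map $\emptyset \to X$, and their opposites. Fong's construction lifts these to decorated cospans by equipping each such structural cospan with the trivial reaction network on its apex, obtained from the unit map $\varphi \maps 1 \to F(\emptyset)$ together with functoriality of $F$ applied to $\emptyset \to X$. The Frobenius, special, and commutativity axioms reduce to the corresponding identities in $\Cospan(\FinSet)$ because the decorations involved are all transported from the empty reaction network. Dagger-compactness then comes for free: in any hypergraph category the dagger swaps the two legs of a cospan and fixes the decoration, and the compact duals come from the Frobenius cup and cap. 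The main ``obstacle'' here is purely bookkeeping: one must be comfortable with the fact that all of the hypergraph structure lives at the level of cospans, with decorations riding along via the lax monoidal coherence maps, so that no new computations beyond Lemma \ref{lemma:RLax} are required.
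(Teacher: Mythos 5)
Your proposal is correct and follows essentially the same route as the paper, which simply invokes Fong's Theorem 3.4 on decorated cospans (Theorem \ref{thm:fcospans} in the appendix) together with the lax symmetric monoidal structure already verified in Lemma \ref{lemma:RLax}. The only slip is a citation: the symmetric monoidal and hypergraph structure comes from Theorem \ref{thm:fcospans}, not Lemma \ref{lemma:fcospans}, which only yields the bare category.
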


\begin{proof}
This follows from Theorem 3.4 of Fong's paper on decorated cospans \cite{Fong2015}.
\end{proof}

\section{Open reaction networks with rates}
\label{sec:RxNet}

To get a dynamical system from an open reaction network we need to equip its transitions with nonnegative real numbers called `rate constants':

\begin{defn}
A \define{reaction network with rates} $R = (S,T,s,t,r)$ consists of a reaction network
$(S,T,s,t)$ together with a function $r \maps T \to (0,\infty)$ specifying a \define{rate constant} for each transition.   We call any reaction network with rates having $S$ as its set of species a reaction network with rates \define{on} $S$.
\end{defn}

\noindent
Just as reaction networks are equivalent to Petri nets, reaction networks with rates
are equivalent to Petri nets where each transition is equipped with a rate constant.  
These are usually called `stochastic Petri nets', because they can be used to define 
stochastic processes \cite{BaezBiamonte, GossPeccoud, Haas}.   

The results of the last section easily generalize to reaction networks with rates: 

\begin{defn}
Given finite sets $X$ and $Y$, an \define{open reaction network with rates from} $X$ \define{to} $Y$ is a cospan of finite sets
\[ \xymatrix{  & S  &  \\ X \ar[ur]^{i} & & Y \ar[ul]_{o} } \] 
together with a reaction network with rates on $R$ on $S$. We often abbreviate all this data as $R \maps X \to Y$. 
\end{defn}

\begin{lem}
\label{lemma:RxFunctor}
There is a functor $F \maps \FinSet \to \Set$ such that:
\begin{itemize}
\item For any finite set $S$, $F(S)$ is the set of all reaction networks with rates on $S$.
\item For any function $f \maps S \to S'$ between finite sets and any reaction network with rates $(S,T,s,t,r) \in F(S)$, we have
\[    F(f)(S,T,s,t,r) = (S',T, f_*(s), f_*(t), r) \]
where $f_*(s)$ and $f_*(t)$ are defined as in Equations \eqref{stilde} and \eqref{ttilde}.
\end{itemize}
\end{lem}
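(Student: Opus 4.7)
The plan is to reduce this lemma to Lemma \ref{lemma:RFunctor}, since equipping each reaction network with a rate constant function $r \maps T \to (0,\infty)$ adds data attached to the transitions, and the transition set $T$ is left untouched by $F(f)$. Concretely, I would define $F$ on objects exactly as stated, and on a morphism $f \maps S \to S'$ I would set $F(f)(S,T,s,t,r) = (S',T, f_*(s), f_*(t), r)$, using the same pushforward formulas \eqref{stilde} and \eqref{ttilde} for $s$ and $t$. The rate function is carried along unchanged because its domain $T$ is preserved.

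The verification that $F$ is a functor then consists of two checks: that $F(\id_S)$ is the identity on $F(S)$, and that $F(g \circ f) = F(g) \circ F(f)$ for composable $f \maps S \to S'$ and $g \maps S' \to S''$. For the identity, note that $(\id_S)_*(s) = s$ and $(\id_S)_*(t) = t$ directly from \eqref{stilde} and \eqref{ttilde}, and $r$ is untouched. For composition, the identity $(g \circ f)_*(s) = g_*(f_*(s))$ is a routine reindexing: by \eqref{stilde}, both sides evaluated at a transition $\tau$ and a species $\sigma'' \in S''$ equal $\sum_{\{\sigma \in S \,:\, g(f(\sigma)) = \sigma''\}} s(\tau)(\sigma)$, obtained either directly from $g \circ f$ or by grouping the sum according to the intermediate value $\sigma' \in S'$. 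The analogous identity holds for $t$, and the rate $r$ is preserved on the nose, so $F(g \circ f)(S,T,s,t,r)$ and $F(g)(F(f)(S,T,s,t,r))$ agree componentwise.

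Since both steps are essentially the content of the proof of Lemma \ref{lemma:RFunctor} with the rate function $r$ carried along as an inert extra component, no genuine obstacle arises. The main thing to be careful about is notational: the symbol $F$ is being reused from the previous section to denote a different (though structurally identical) functor, so I would open the proof by stating explicitly that this $F$ is new and that the arguments of Lemma \ref{lemma:RFunctor} apply verbatim once one observes that the rate data is preserved under pushforward along $f$ because $f$ does not act on the transition set.
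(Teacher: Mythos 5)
Your proposal is correct and follows the same route as the paper, which simply observes that this lemma is a slight variation on Lemma \ref{lemma:RFunctor}: the rate function $r$ is inert data on the unchanged transition set $T$, and the identity and composition checks for $f_*(s)$ and $f_*(t)$ are the same straightforward reindexing calculations. Your explicit verification of $(g \circ f)_* = g_* \circ f_*$ just spells out what the paper leaves as routine.
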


\begin{proof}
This is a slight variation on Lemma \ref{lemma:RFunctor}.
\end{proof}

\begin{lem}
\label{lemma:RxLax}
The functor $F$ can be made lax symmetric monoidal from $(\FinSet, +)$ to $(\Set, \times)$.  To do this we equip it with the map $\varphi \maps 1 \to F(\emptyset)$ sending the one element of $1$ to the unique reaction network with rates having no species and no transitions, together with natural transformation $\varphi_{S,S'} \maps F(S) \times F(S') \to F(S + S')$ 
such that
\[
    \varphi_{S,S'} ((S,T,s,t,r), (S',T',s',t',r')) = (S + S', T + T', s+s', t+t',[r,r']).  
\]
where:
\begin{itemize}
\item The map $s + s' \maps T + T' \to S + S'$ sends any transition $\tau \in T$ 
to $s(\tau) \in S$ and any transition $\tau' \in T'$ to $s'(\tau') \in S'$.  
\item The map $t + t' \maps T + T' \to S + S'$ sends any transition $\tau \in T$ 
to $t(\tau) \in S$ and any transition $\tau' \in T'$ to $t'(\tau') \in S'$.  
\item The map $[r,r'] \maps T + T' \to [0,\infty)$ sends any transition $\tau \in T$ to $r(\tau)$ and any transition $\tau' \in T'$ to $r'(\tau')$.  
\end{itemize}
\end{lem}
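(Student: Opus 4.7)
The plan is to reduce everything to the already-proven Lemma \ref{lemma:RLax} by treating the rate constant function $r \maps T \to (0,\infty)$ as auxiliary data carried along with the transition set $T$. Since the underlying reaction network $(S,T,s,t)$ and its formation of disjoint unions is unaffected by rates, and since $F(f)$ on the rated version simply leaves the rate function alone (as the transition set is untouched by $F(f)$), the structure maps $\varphi$ and $\varphi_{S,S'}$ I need to check are essentially those of Lemma \ref{lemma:RLax} with an extra copair $[r,r']$ slot tacked on. I would organise the proof in three short steps: naturality of $\varphi_{S,S'}$, verification of the lax monoidal coherence diagrams (associativity, unitality, symmetry), and a brief remark on the unit map $\varphi$.

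First I would check naturality. Given functions $f \maps S \to U$ and $g \maps S' \to U'$ and a pair $((S,T,s,t,r),(S',T',s',t',r')) \in F(S) \times F(S')$, both composites $F(f+g) \circ \varphi_{S,S'}$ and $\varphi_{U,U'} \circ (F(f) \times F(g))$ produce a reaction network with rates on $U+U'$ whose transition set is $T+T'$, whose source/target maps are $(f+g)_\ast(s+s')$ and $(f+g)_\ast(t+t')$, and whose rate function is $[r,r']$. The first four components match by naturality in the unrated case (Lemma \ref{lemma:RLax}), and the rate components agree because neither $F(f)$ nor $F(g)$ changes $r$ or $r'$ and copairing commutes with this trivial action.

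Next I would verify the lax monoidal coherence. For finite sets $S, S', S''$ the two induced maps $F(S) \times F(S') \times F(S'') \to F(S + S' + S'')$ agree on species, transitions, source, and target by the universal property of the coproduct in $\FinSet$, exactly as in Lemma \ref{lemma:RLax}; on rates the two ways of copairing $r,r',r''$ into a single function on $T+T'+T''$ agree, again by the universal property of the coproduct. The left and right unitor diagrams hold because $F(\emptyset)$ contains the unique rated network with empty transition set, whose contribution to $[r, -]$ or $[-,r]$ is vacuous. The symmetry diagram holds because the braiding $S + S' \cong S' + S$ in $\FinSet$ permutes the summands of $T + T'$ and, by the uniqueness clause of the copair, sends $[r,r']$ to $[r',r]$.

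The only real potential obstacle is bookkeeping: one must confirm that the pushforward formulas \eqref{stilde} and \eqref{ttilde} interact correctly with disjoint union, and that $[r,r']$ is well-defined as a map into $(0,\infty)$ (not just $[0,\infty)$) since $r$ and $r'$ each take values in $(0,\infty)$. Both are routine, so I expect the proof to be, as the authors indicate, a slight variation on Lemma \ref{lemma:RLax} with the rate component tracked via the universal property of the coproduct.
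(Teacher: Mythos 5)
Your proposal is correct and matches the paper's approach: the paper's entire proof is that this is ``a slight variation on Lemma \ref{lemma:RLax},'' and you simply spell out that variation, carrying the rate function along as inert data handled by copairing and the universal property of the coproduct. Your closing remark that $[r,r']$ in fact lands in $(0,\infty)$ is a sound observation (the lemma's statement writes $[0,\infty)$, but the rates are positive by definition), and nothing in your argument goes beyond or against what the paper intends.
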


\begin{proof}
This is a slight variation on Lemma \ref{lemma:RLax}. 
\end{proof}

We now come to the star of the show, the category $\RxNet$:

\begin{thm}
\label{thm:RxNet}
There is a category $\RxNet$ where:
\begin{itemize}
\item an object is a finite set,
\item a morphism from $X$ to $Y$ is an equivalence class of open reaction networks 
with rates from $X$ to $Y$, 
\item Given morphisms represented by an open reaction network with rates from $X$ to $Y$ and one from $Y$ to $Z$:
 \[
    (X \stackrel{i}\longrightarrow S \stackrel{o}\longleftarrow Y, R) 
    \quad \textrm{ and } \quad
    (Y \stackrel{i'}\longrightarrow S' \stackrel{o'}\longleftarrow Z, R'), 
  \]
their composite consists of the equivalence class of this cospan:
  \[
    \xymatrix{
      & S +_Y S' \\
      \quad X\quad \ar[ur]^{ji} && \quad Z \quad \ar[ul]_{j'o'}
    }
  \]
together with the reaction network with rates on $S +_Y S'$ 
obtained by applying the map
\[      
\xymatrix{      F(S) \times F(S') \ar[rr]^-{\varphi_{S,S'}} && 
                     F(S + S') \ar[rr]^-{F([j,j'])} && F(S +_Y S') } \]
to the pair $(R,R') \in F(S) \times F(S')$.  
\end{itemize}
The category $\RxNet$ is a symmetric monoidal category where the tensor product of objects $X$ and $Y$ is their disjoint union $X + Y$, while the tensor product of the morphisms
\[
    (X \stackrel{i}{\longrightarrow} S \stackrel{o}{\longleftarrow} Y, R) 
    \quad \textrm{ and } \quad
    (X' \stackrel{i'}{\longrightarrow} S' \stackrel{o'}{\longleftarrow} Y', R') 
  \]
is defined to be
\[  ( X + X' \stackrel{i+i'}{\longrightarrow} S + S'  \stackrel{o + o'}{\longleftarrow} Y + Y', \; \varphi_{S,S'}(R,R') ) .\]
In fact $\RxNet$ is a hypergraph category.
\end{thm}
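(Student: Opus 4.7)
The plan is to invoke Fong's machinery of decorated cospans in exactly the same way as was done for $\RNet$ in the previous section. Since Lemma \ref{lemma:RxFunctor} gives a functor $F\maps \FinSet \to \Set$ whose value on a finite set $S$ is the set of reaction networks with rates on $S$, and Lemma \ref{lemma:RxLax} shows that $F$ is lax symmetric monoidal with respect to the symmetric monoidal structures $(\FinSet,+)$ and $(\Set,\times)$, we are precisely in the situation of Fong's general construction recalled in Lemma \ref{lemma:fcospans} in Appendix \ref{sec:deccospan}. Thus one gets a category whose objects are finite sets and whose morphisms are equivalence classes of $F$-decorated cospans, with composition given by taking a pushout on the cospan level and then pushing decorations forward along the universal maps to the pushout.

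The first step is to unpack this general result and check that the resulting composite agrees with the recipe written in the statement. Given representatives $(X\stackrel{i}\to S\stackrel{o}\leftarrow Y, R)$ and $(Y\stackrel{i'}\to S'\stackrel{o'}\leftarrow Z, R')$, one forms the pushout $S+_Y S'$ with canonical maps $j\maps S\to S+_YS'$ and $j'\maps S'\to S+_Y S'$; the decoration of the composite cospan is, by Fong's prescription, $F([j,j'])\bigl(\varphi_{S,S'}(R,R')\bigr) \in F(S+_YS')$, which is exactly what the theorem states. Well-definedness on equivalence classes and associativity/unitality of composition then follow from Fong's general theorem applied to our specific lax symmetric monoidal $F$.

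For the monoidal structure, the second step is to invoke Theorem 3.4 of Fong's paper on decorated cospans \cite{Fong2015}, which upgrades the decorated cospan category built from any lax symmetric monoidal functor into a hypergraph category. This gives the symmetric monoidal structure with tensor product of objects by disjoint union and tensor product of morphisms by $\varphi_{S,S'}$ as stated, and at the same time furnishes the special commutative Frobenius structure on each object that constitutes the hypergraph structure (from which dagger-compactness follows automatically).

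There is no real obstacle beyond the routine verifications already carried out in the rate-free case: the proof of Theorem \ref{thm:RxNet} is obtained from the proof of the analogous theorem for $\RNet$ by carrying along the rate constant function $r\maps T\to (0,\infty)$ everywhere and using $[r,r']$ in place of the absent rates. The only point requiring a brief check is that $F(f)$ as defined in Lemma \ref{lemma:RxFunctor} leaves the rate constants unchanged (since the map $f\maps S\to S'$ acts only on species, not on transitions), so that the naturality squares for $\varphi_{S,S'}$ and the coherence diagrams for lax monoidality commute on the nose — this is immediate from the explicit formulas, and hence the hypotheses of Fong's result are genuinely satisfied.
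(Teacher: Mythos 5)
Your proposal is correct and follows essentially the same route as the paper: the paper's proof simply cites Lemma \ref{lemma:RxLax} together with Fong's results (Lemma \ref{lemma:fcospans} and Theorem 3.4 of \cite{Fong2015}, recalled as Theorem \ref{thm:fcospans}) to obtain the category, its composition rule, and the symmetric monoidal and hypergraph structure. Your additional unpacking of the composite decoration and the observation that $F(f)$ leaves rate constants untouched are exactly the routine checks the paper delegates to Lemmas \ref{lemma:RxFunctor} and \ref{lemma:RxLax}.
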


\begin{proof}
This follows from Lemmas \ref{lemma:RxLax} and \ref{lemma:fcospans}, 
where we explain the equivalence relation in detail.  
\end{proof}

\section{The open rate equation}
\label{sec:openrate}

In chemistry, a reaction network with rates is frequently used as a tool to specify a dynamical system.   A dynamical system is often defined as a smooth manifold $M$ whose points are `states', together with a smooth vector field on $M$ saying how these states evolve in time.   In chemistry we take $M = [0,\infty)^S$ where $S$ is the set of species: a point $c \in [0,\infty)^S$ describes the \define{concentration} $c_\sigma$ of each species $\sigma \in S$.   The corresponding dynamical system is a first-order differential equation called the `rate equation':
\[   \frac{dc(t)}{dt} = v(c(t)) \]
where now $c \maps \R \to [0,\infty)^S$ describes the concentrations as a function of 
time, and $v$ is a vector field on $[0,\infty)^S$.   Of course, $[0,\infty)^S$ is not a smooth manifold.  On the other hand, the vector field $v$ is better than smooth: its components are polynomials, so it is \define{algebraic}.   For mathematical purposes this lets us treat $v$ as an vector field on all of $\R^S$, even though negative concentrations are unphysical.  

In more detail, suppose $R = (S,T,s,t,r)$ is a reaction network with rates. 
Then the rate equation is determined by a rule called the \define{law of mass action}. 
This says that each transition $\tau \in T$ contributes 
to $dc(t)/dt$ by the product of:
\begin{itemize}
\item the rate constant $r(\tau)$, 
\item the concentration of each species $\sigma$ raised to the power given by the number of times $\sigma$ appears as an input to $\tau$, namely $s(\tau)(\sigma)$, and
\item the vector $t(\tau) - s(\tau) \in \R^S$ whose $\sigma$th component is the change
in the number of items of the species $\sigma \in S$ caused by the transition $\tau$,
\end{itemize}
The second factor is a product over all species, and it deserves an abbreviated notation: 
given $c \in \R^S$ we write
\begin{equation}
\label{eq:power}
 c^{s(\tau)} = \prod_{\sigma \in S} {c_\sigma}^{s(\tau)(\sigma)} .
\end{equation}
Thus:

\begin{defn} 
We say the time-dependent concentrations $c \maps \R \to \R^S$ obey the
\define{rate equation} for the reaction network with rates $R = (S,T,s,t,r)$ if
\[
\frac{dc(t)}{d t} = 
\sum_{\tau \in T} r(\tau) (t(\tau) - s(\tau)) c(t)^{s(\tau)} .
\]
\end{defn}

\noindent
For short, we can write the rate equation as
\[    \frac{dc(t)}{d t} = v^R(c(t)) \]
where $v^R$ is the vector field on $\R^S$ given by
\begin{equation}
\label{eq:v^R}
v^R(c) = \sum_{\tau \in T} r(\tau) \, ( t(\tau) - s(\tau) ) c^{s(\tau)} 
\end{equation}
at any point $c \in \R^S$.  We call the components of this vector field \define{reaction velocities}, since in the rate equation they describe rates of change of concentrations.   

Given an \emph{open} reaction network with rates, we can go further: we can obtain an \emph{open} dynamical system.   We give a specialized definition of this concept suited to the case at hand:

\begin{defn}
Given finite sets $X$ and $Y$, an \define{open dynamical system from} $X$ \define{to} $Y$ is a cospan of finite sets
\[ \xymatrix{  & S  &  \\ X \ar[ur]^{i} & & Y \ar[ul]_{o} } \] 
together with an algebraic vector field $v$ on $\R^S$. 
\end{defn}

\noindent
The point is that given an open dynamical system of this sort, we can write down a generalization of the rate equation that takes into account `inflows' and `outflows' as well as the intrinsic dynamics given by the vector field $v$.   

To make this precise, let the
\define{inflows} $I \maps \R \to \R^X $ and \define{outflows} $O \maps \R \to \R^Y$ be arbitrary smooth functions of time.   We write the inflow at the point $x \in X$ as $I_x(t)$ or simply $I_x$, and similarly for the outflows.   Given an open dynamical system and a choice of inflows and outflows, we define the pushforward $i_*(I) \maps \R \to \R^S$ by
\[ i_*(I)_\sigma = \sum_{ \{ x : i(x) = \sigma \} } I_x \]
and define $o_*(O) \maps \R \to \R^S$ by
\[   o_*(O)_\sigma = \sum_{ \{ y : o(y) = \sigma \} } O_y \]
With this notation, the \define{open rate equation} is
\[ \frac{dc(t)}{dt} = v(c(t)) + i_*(I(t)) - o_*(O(t)). \]
The pushforwards here say that for any species $\sigma \in S$, the time derivative of 
the concentration $c_\sigma(t)$ takes into account the sum of all inflows at $x \in X$ such that $i(x) = \sigma$, minus the sum of outflows at $y \in Y$ such that $o(y) = \sigma$.  

To make these ideas more concrete, let us see in an example how to go from an open reaction network with rates to an open dynamical system and then its open rate equation.  Let $R$ be the following reaction network:
\[   \xymatrix{ A+B \ar[r]^{\tau} & C + D. }  \] 
The set of species is $S = \{A,B,C,D\}$ and the set of transitions is just $T = \{\tau\}$.   We can make $R$ into a reaction network with rates by saying the rate constant of $\tau$ is some positive number $r$.   This gives a vector field 
\[      v^R(A,B,C,D) = (-r A B, -r A B, r A B, r A B) \]
where we abuse notation in a commonly practiced way and use $(A,B,C,D)$ as the  coordinates for a point in $\R^S$: that is, the concentrations of the four species
with the same names.  The resulting rate equation is
\[ 
\begin{array}{rcl} 
\displaystyle{\frac{dA(t)}{dt}} &=& - r A(t) B(t) \\ \\
\displaystyle{\frac{dB(t)}{dt}} &=& - r A(t) B(t) \\ \\
\displaystyle{\frac{dC(t)}{dt}}&=& r A(t) B(t) \\ \\
\displaystyle{\frac{dD(t)}{dt}} &=& r A(t) B(t) .
\end{array}
\]
Next, we can make $R$ into an open reaction network $R \maps X \to Y$ as follows:
\[
\begin{tikzpicture}
	\begin{pgfonlayer}{nodelayer}
		\node [style=species] (A) at (-4, 0.5) {$A$};
		\node [style=species] (B) at (-4, -0.5) {$B$};
		\node [style=species] (C) at (-1, 0.5) {$C$};
		\node [style=species] (D) at (-1, -0.5) {$D$};
             \node [style=transition] (a) at (-2.5, 0) {$\tau$}; 
		
		\node [style=empty] (X) at (-5.1, 1) {$X$};
		\node [style=none] (Xtr) at (-4.75, 0.75) {};
		\node [style=none] (Xbr) at (-4.75, -0.75) {};
		\node [style=none] (Xtl) at (-5.4, 0.75) {};
             \node [style=none] (Xbl) at (-5.4, -0.75) {};
	
		\node [style=inputdot] (1) at (-5, 0.5) {};
		\node [style=empty] at (-5.2, 0.5) {$1$};
		\node [style=inputdot] (2) at (-5, 0) {};
		\node [style=empty] at (-5.2, 0) {$2$};
		\node [style=inputdot] (3) at (-5, -0.5) {};
		\node [style=empty] at (-5.2, -0.5) {$3$};

		\node [style=empty] (Y) at (0.1, 1) {$Y$};
		\node [style=none] (Ytr) at (.4, 0.75) {};
		\node [style=none] (Ytl) at (-.25, 0.75) {};
		\node [style=none] (Ybr) at (.4, -0.75) {};
		\node [style=none] (Ybl) at (-.25, -0.75) {};

		\node [style=inputdot] (4) at (0, 0) {};
		\node [style=empty] at (0.2, 0) {$4$};
		
		
	\end{pgfonlayer}
	\begin{pgfonlayer}{edgelayer}
		\draw [style=inarrow] (A) to (a);
		\draw [style=inarrow] (B) to (a);
		\draw [style=inarrow] (a) to (C);
		\draw [style=inarrow] (a) to (D);
		\draw [style=inputarrow] (1) to (A);
		\draw [style=inputarrow] (2) to (B);
		\draw [style=inputarrow] (3) to (B);
		\draw [style=inputarrow] (4) to (C);
		\draw [style=simple] (Xtl.center) to (Xtr.center);
		\draw [style=simple] (Xtr.center) to (Xbr.center);
		\draw [style=simple] (Xbr.center) to (Xbl.center);
		\draw [style=simple] (Xbl.center) to (Xtl.center);
		\draw [style=simple] (Ytl.center) to (Ytr.center);
		\draw [style=simple] (Ytr.center) to (Ybr.center);
		\draw [style=simple] (Ybr.center) to (Ybl.center);
		\draw [style=simple] (Ybl.center) to (Ytl.center);
	\end{pgfonlayer}
\end{tikzpicture}
\]
Here $X = \{1,2,3\}$ and $Y = \{4,5\}$, while the functions $i \maps X \to S$ and $o \maps Y \to S$ are given by
\[      i(1) = A, \; i(2) = i(3) = B, \; o(4) = C. \]
The corresponding open dynamical system is the cospan
\[ \xymatrix{  & S  &  \\ X \ar[ur]^{i} & & Y \ar[ul]_{o} } \]
decorated by the vector field $v^R$ on $\R^S$.   Finally, the corresponding open rate equation is
\[ 
\begin{array}{rcl} 
\displaystyle{\frac{dA(t)}{dt}} &=& - r A(t) B(t)  + I_1(t)\\ \\
\displaystyle{\frac{dB(t)}{dt}} &=& - r A(t) B(t) + I_2(t) + I_3(t) \\ \\
\displaystyle{\frac{dC(t)}{dt}} &=& r A(t) B(t) - O_4(t) \\ \\
\displaystyle{\frac{dD(t)}{dt}} &=& r A(t) B(t).
\end{array}
\]
Note that $dB/dt$ involves the sum of two inflow terms $I_1$ and $I_2$ 
since $i(2) = i(3) = B$, while $dD/dt$ involves neither inflow nor outflow terms 
since $D$ is in the range of neither $i$ nor $o$.  

\section{The category of open dynamical systems}
\label{sec:opendynam}

There is a category $\Dynam$ where the morphisms are open dynamical systems ----or more precisely, certain equivalence classes of these.  We compose two open dynamical systems by connecting the outputs of the first to the inputs of the second.  

To construct the category $\Dynam$, we again use the machinery of decorated cospans.  
For this we need a lax monoidal functor $D \maps \FinSet \to \Set$ sending any finite set $S$ to the set of algebraic vector fields on $\R^S$.

\begin{lem}
\label{lemma:Dfunctor}
There is a functor $D \maps \FinSet \to \Set$ such that:
\begin{itemize}
\item $D$ maps any finite set $S$ to 
\[ D(S) = \{ v \maps \R^S \to \R^S : \; v \textrm{ is algebraic}  \}. \] 
\item $D$ maps any function $f \maps S \to S'$ between finite sets to the function $D(f) \maps D(S) \to D(S')$ given as follows:
\[ D(f)(v) = f_* \circ v \circ f^* \]
where the pullback $ f^* \maps \R^{S'} \to \R^S $ is given by
\[ f^*(c)(\sigma) = c(f(\sigma)) \] 
while the pushforward $ f_* \maps \R^{S} \to \R^{S'} $ is given by
\[ f_*(c)(\sigma') = \sum_{ \{ \sigma \in S : f(\sigma) = \sigma' \} } c(\sigma). \]
\end{itemize}
\end{lem}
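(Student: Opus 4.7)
The plan is to verify three things in order: (i) $D$ is well-defined on morphisms, i.e.\ $D(f)(v)$ really is an algebraic vector field on $\R^{S'}$; (ii) $D$ preserves identities; (iii) $D$ preserves composition. Since everything reduces to manipulating the pullback $f^\ast$ and pushforward $f_\ast$, I would begin by recording the features of these two maps that will be used throughout.

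First I would observe that both $f^\ast \maps \R^{S'} \to \R^{S}$ and $f_\ast \maps \R^{S} \to \R^{S'}$ are $\R$-linear. For $f^\ast$ each component $(f^\ast c)(\sigma) = c(f(\sigma))$ is simply a coordinate projection, and for $f_\ast$ each component $(f_\ast c)(\sigma') = \sum_{\sigma \in f^{-1}(\sigma')} c(\sigma)$ is a finite sum of coordinates. In particular, both maps are polynomial, so their components are algebraic. It follows that if $v \maps \R^S \to \R^S$ is algebraic, then $f_\ast \circ v \circ f^\ast \maps \R^{S'} \to \R^{S'}$ is a composite of algebraic maps and therefore algebraic, settling (i).

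Next, for (ii), since $\id_S^\ast$ and $(\id_S)_\ast$ are literally the identity on $\R^S$, we get $D(\id_S)(v) = \id \circ v \circ \id = v$. For (iii), the content is the pair of identities
\[
(g \circ f)^\ast = f^\ast \circ g^\ast, \qquad (g \circ f)_\ast = g_\ast \circ f_\ast,
\]
both of which are routine: the first is just the equality $c(g(f(\sigma))) = (g^\ast c)(f(\sigma)) = (f^\ast(g^\ast c))(\sigma)$, and the second follows by reindexing the sum over $(g \circ f)^{-1}(\sigma'')$ as an iterated sum over $g^{-1}(\sigma'')$ and then $f^{-1}(\sigma')$ (using that $\FinSet$ is well-behaved, so these preimages are finite and the reindexing is valid). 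Granted these, we compute
\[
D(g)(D(f)(v)) = g_\ast \circ (f_\ast \circ v \circ f^\ast) \circ g^\ast
= (g_\ast \circ f_\ast) \circ v \circ (f^\ast \circ g^\ast)
= (g \circ f)_\ast \circ v \circ (g \circ f)^\ast = D(g \circ f)(v),
\]
which is the required functoriality.

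There is no real obstacle here, but the step that deserves the most care is the verification of $(g \circ f)_\ast = g_\ast \circ f_\ast$, since the pushforward involves summing over preimages and one must make sure that the partition of $(g \circ f)^{-1}(\sigma'')$ into the sets $f^{-1}(g^{-1}(\sigma''))$ matches the iterated sum appearing in $g_\ast \circ f_\ast$. This is essentially the same bookkeeping encountered in Lemma \ref{lemma:RFunctor} for the pushforwards $f_\ast(s)$ and $f_\ast(t)$, so I would simply invoke the same calculation.
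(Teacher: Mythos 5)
Your proposal is correct and follows essentially the same route as the paper: the paper's proof is exactly the one-line computation $D(f)D(g)(v) = f_* \circ g_* \circ v \circ g^* \circ f^* = (f\circ g)_* \circ v \circ (f \circ g)^* = D(fg)(v)$, resting on the covariance of pushforward and contravariance of pullback, which is what you verify in detail. Your additional checks (linearity/algebraicity of $f_*$, $f^*$ so that $D(f)(v)$ lands in $D(S')$, and preservation of identities) are just the routine details the paper leaves implicit.
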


\begin{proof}
The functoriality of $D$ follows from the fact that pushforward is a covariant functor and pullback is a contravariant functor:
\[   D(f)D(g)(v) = f_* \circ g_* \circ v \circ g^* \circ f^* = (f\circ g)_* \circ v \circ (f\circ g)^* = D(fg)(v).  \qedhere \]
\end{proof}

\begin{lem}
\label{lemma:DLax}
The functor $D$ becomes lax symmetric monoidal from $(\FinSet, +)$ to $(\Set, \times)$ if we equip it with the natural transformation 
\[ \delta_{S,S'} \maps D(S) \times D(S') \to D(S + S') \]
given by
\[  \delta_{S,S'}(v,v') = i_* \circ v \circ i^* + i'_* \circ v' \circ {i'}^* \]
together with the unique map $\delta \maps 1 \to D(\emptyset)$.
Here $i \maps S \to S+S'$ and $i' \maps S' \to S+S'$ are the inclusions of $S$ and $S'$ into their disjoint union, and we add vector fields in the usual way.
\end{lem}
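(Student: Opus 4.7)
The plan is to check the three conditions defining a lax symmetric monoidal functor: (i) well-definedness of the map $\delta_{S,S'}$, (ii) naturality in both arguments, and (iii) the coherence axioms for associativity, unit, and symmetry. The proof closely parallels that of Lemma \ref{lemma:RLax}. The key conceptual remark is that the disjoint-union structure of $\FinSet$ induces a direct-sum decomposition $\R^{S+S'} \cong \R^S \oplus \R^{S'}$ under which $i_*, i'_*$ are the summand inclusions and $i^*, i'^*$ are the corresponding projections, satisfying $i^* i_* = \id_{\R^S}$, $i'^* i_* = 0$, and $i_* i^* + i'_* i'^* = \id_{\R^{S+S'}}$.

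First I would observe that $\delta_{S,S'}(v, v')$ is algebraic: the maps $i^*, i'^*, i_*, i'_*$ are linear (hence polynomial), so composing them with the polynomial maps $v, v'$ and summing yields a map with polynomial components. For naturality, given $f \maps S \to T$ and $f' \maps S' \to T'$, I would use the identities $(f+f')_* \circ i_* = j_* \circ f_*$ and $i^* \circ (f+f')^* = f^* \circ j^*$ (where $j$ denotes the inclusion $T \to T+T'$, and dually for $j'$) to show that both $\delta_{T,T'}(D(f)(v), D(f')(v'))$ and $D(f+f')(\delta_{S,S'}(v, v'))$ reduce to the same expression $j_* \circ f_* \circ v \circ f^* \circ j^* + j'_* \circ f'_* \circ v' \circ f'^* \circ j'^*$.

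For the coherence axioms the direct-sum picture makes each diagram collapse to a tautology. For associativity, both parenthesizations of $\delta$ applied to $(v, v', v'')$ produce the vector field on $\R^{S+S'+S''}$ that acts as $v$, $v'$, $v''$ on the respective summands and vanishes on the complement, and these agree under the associator of $+$ in $\FinSet$. For the unit axiom, $\R^\emptyset$ is the zero vector space so $D(\emptyset)$ contains only the zero vector field, and combining any $v \in D(S)$ with this zero via $\delta$ recovers $v$ (using $i^* i_* = \id_{\R^S}$ and that the other summand vanishes). For symmetry, applying $D(\sigma_{S,S'})$ to $\delta_{S,S'}(v,v')$ uses $\sigma \circ i = i'$ and $\sigma \circ i' = i$ to interchange the two summands, yielding $\delta_{S',S}(v', v)$ as required.

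The main obstacle is purely bookkeeping: carefully tracking pushforwards, pullbacks, and inclusions through the pentagon and hexagon diagrams. Once one adopts the direct-sum viewpoint, no substantive difficulty remains, so I expect nothing beyond a routine verification analogous to Lemma \ref{lemma:RLax}.
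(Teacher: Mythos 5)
Your proposal is correct and matches the paper's approach: the paper's proof of this lemma is simply the assertion that the conditions of Definition \ref{defn.lmf} follow by straightforward calculation, and your write-up is exactly that verification, organized around the decomposition $\R^{S+S'} \cong \R^S \oplus \R^{S'}$ and the pushforward/pullback identities. (Only a cosmetic quibble: the coherence diagrams here are the hexagon, unitor squares, and braiding square—there is no pentagon to check for a lax monoidal functor.)
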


\begin{proof}
By straightforward calculations one can verify all the conditions in the definition of lax symmetric monoidal functor, Def.\ \ref{defn.lmf}.   \end{proof}

\begin{thm}
\label{thm:dynam}
There is a category $\Dynam$ where:
\begin{itemize}
\item an object is a finite set,
\item a morphism from $X$ to $Y$ is an equivalence class of open dynamical
systems from $X$ to $Y$, 
\item Given an open dynamical system from $X$ to $Y$ and one from $Y$ to $Z$:
 \[
    (X \stackrel{i}\longrightarrow S \stackrel{o}\longleftarrow Y, v) 
    \quad \textrm{ and } \quad
    (Y \stackrel{i'}\longrightarrow S' \stackrel{o'}\longleftarrow Z, v'), 
  \]
their composite consists of the equivalence class of this cospan:
  \[
    \xymatrix{
      & S +_Y S' \\
      \quad X\quad \ar[ur]^{ji} && \quad Z \quad \ar[ul]_{j'o'}
    }
  \]
together with the algebraic vector field on $\R^{S+_Y S'}$ obtained by applying the map
\[      
\xymatrix{      D(S) \times D(S') \ar[rr]^-{\delta_{S,S'}} && 
                     D(S + S') \ar[rr]^-{D([j,j'])} && D(S +_Y S') } \]
to the pair $(v,v') \in D(S) \times D(S')$.  
\end{itemize}
The category $\Dynam$ is a symmetric monoidal category where the tensor product of objects $X$ and $Y$ is their disjoint union $X + Y$, while the tensor product of the morphisms
\[
    (X \stackrel{i}{\longrightarrow} S \stackrel{o}{\longleftarrow} Y, v) 
    \quad \textrm{ and } \quad
    (X' \stackrel{i'}{\longrightarrow} S' \stackrel{o'}{\longleftarrow} Y', v') 
  \]
is defined to be
\[  ( X + X' \stackrel{i+i'}{\longrightarrow} S + S'  \stackrel{o + o'}{\longleftarrow} Y + Y', \; \delta_{S,S'}(v,v') ) .\]
In fact $\Dynam$ is a hypergraph category.
\end{thm}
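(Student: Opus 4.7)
The plan is to deduce this theorem directly from Fong's decorated cospan machinery, in exactly the same way Theorem \ref{thm:RxNet} was deduced for $\RxNet$. The two key ingredients are already in place: Lemma \ref{lemma:Dfunctor} gives a functor $D\maps \FinSet \to \Set$, and Lemma \ref{lemma:DLax} endows it with the structure of a lax symmetric monoidal functor from $(\FinSet,+)$ to $(\Set,\times)$ via the natural transformation $\delta_{S,S'}$ and the map $\delta\maps 1 \to D(\emptyset)$. Given these, Lemma \ref{lemma:fcospans} (Fong's theorem, recalled in Appendix \ref{sec:deccospan}) immediately produces a symmetric monoidal category whose objects are finite sets and whose morphisms are equivalence classes of $D$-decorated cospans, and moreover this category is automatically a hypergraph category.

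The only real work is to check that the composite and tensor product prescribed by Fong's machinery agree with the formulas stated in the theorem. For composition, Fong's construction first takes the pushout $S +_Y S'$ of the underlying cospans in $\FinSet$, giving the cospan $X \to S +_Y S' \leftarrow Z$ with legs $ji$ and $j'o'$; it then decorates this cospan by first applying $\delta_{S,S'}\maps D(S)\times D(S') \to D(S+S')$ to the pair $(v,v')$ and then applying $D([j,j'])$, where $[j,j']\maps S+S' \to S+_Y S'$ is the canonical map out of the coproduct. This is precisely the composite written in the theorem statement. Similarly, the tensor product of morphisms as given by the decorated cospan construction amounts to taking the disjoint union of cospans and decorating the result via $\delta_{S,S'}(v,v')$, which again matches the formula in the theorem.

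The main thing I would want to verify carefully is that everything really is well-defined on equivalence classes and that the lax monoidal coherence conditions for $D$ hold; the former is handled uniformly by Fong's framework once $D$ is lax symmetric monoidal, while the latter is the substance of Lemma \ref{lemma:DLax}. The mildly subtle point, which I would flag explicitly, is that the composition formula requires us to interpret $\delta_{S,S'}(v,v')$ concretely: it is the algebraic vector field on $\R^{S+S'}$ obtained by letting $v$ act on the $S$-coordinates and $v'$ on the $S'$-coordinates independently, and subsequently applying $D([j,j'])$ pushes forward and pulls back along the quotient map to glue the two pieces of dynamics on the shared species indexed by $Y$. Once this is unpacked, no further calculation is required and the theorem follows by direct appeal to Lemmas \ref{lemma:DLax} and \ref{lemma:fcospans}.
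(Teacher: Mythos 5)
Your proposal is correct and follows essentially the same route as the paper: the paper's proof of Theorem \ref{thm:dynam} is exactly a direct appeal to Lemma \ref{lemma:DLax} together with Fong's decorated cospan results (Lemma \ref{lemma:fcospans}, with the symmetric monoidal and hypergraph structure coming from Theorem \ref{thm:fcospans}). Your additional check that Fong's prescribed composition and tensor product match the formulas in the statement, and your unpacking of $\delta_{S,S'}$ followed by $D([j,j'])$, are accurate and consistent with how the paper treats the analogous case of $\RxNet$.
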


\begin{proof}
This follows from Lemmas \ref{lemma:DLax} and \ref{lemma:fcospans}.
\end{proof}

\section{The gray-boxing functor}
\label{sec:gray}

Now we are ready to describe the `gray-boxing' functor $\graysquare \maps \RxNet \to \Dynam$.  This sends any open reaction network to the open dynamical system that it determines.  The functoriality of this process says that we can first compose networks and then find the open dynamical system of the resulting larger network, or first find the open dynamical system for each network and then compose these systems: either way, the result is the same.

To construct the gray-boxing functor we again turn to Fong's theory of decorated
cospans.  Just as this theory gives decorated cospan categories from lax symmetric 
monoidal functors, it gives functors between such categories from monoidal natural transformations.  For details, see Theorem \ref{thm:decoratedfunctors} in Appendix \ref{sec:deccospan}.

\begin{thm}
There is a symmetric monoidal functor $\graysquare \maps \RxNet \to \Dynam$ that is
the identity on objects and sends each morphism represented by an open reaction network $( X \stackrel{i}{\to} S \stackrel{o}{\leftarrow} Y , R )$ to the morphism represented by the open dynamical system $( X \stackrel{i}{\to} S \stackrel{o}{\leftarrow} Y , v^R )$, where $v^R$ is defined by Equation \ref{eq:v^R}.  Moreover, $\graysquare$ is a hypergraph functor.
\end{thm}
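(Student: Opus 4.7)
The plan is to apply Fong's general result (Theorem \ref{thm:decoratedfunctors} in the appendix) that a monoidal natural transformation between two lax symmetric monoidal functors $\FinSet \to \Set$ induces a hypergraph functor between the associated decorated cospan categories. We already have the two lax symmetric monoidal functors in hand: $F \maps \FinSet \to \Set$ (Lemma \ref{lemma:RxFunctor}, Lemma \ref{lemma:RxLax}), whose decorated cospan category is $\RxNet$, and $D \maps \FinSet \to \Set$ (Lemma \ref{lemma:Dfunctor}, Lemma \ref{lemma:DLax}), whose decorated cospan category is $\Dynam$. What I need to construct is a monoidal natural transformation $\theta \maps F \Rightarrow D$ whose component at a finite set $S$ is the map
\[ \theta_S \maps F(S) \to D(S), \qquad R = (S,T,s,t,r) \mapsto v^R, \]
with $v^R$ the algebraic vector field defined in Equation \eqref{eq:v^R}. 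Once this is done, Theorem \ref{thm:decoratedfunctors} produces the desired symmetric monoidal hypergraph functor $\graysquare \maps \RxNet \to \Dynam$, identity on objects and sending the decorated cospan $(X \stackrel{i}{\to} S \stackrel{o}{\leftarrow} Y, R)$ to $(X \stackrel{i}{\to} S \stackrel{o}{\leftarrow} Y, v^R)$.

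The first real task is naturality: for every $f \maps S \to S'$ I must show $D(f) \circ \theta_S = \theta_{S'} \circ F(f)$. Unpacking both sides at $R = (S,T,s,t,r)$, the right-hand side produces $v^{(S',T,f_*(s),f_*(t),r)}$, while the left-hand side is $f_* \circ v^R \circ f^*$. Evaluating $v^R(f^*(c'))$ for $c' \in \R^{S'}$ and using the identity
\[ \prod_{\sigma \in S} c'(f(\sigma))^{s(\tau)(\sigma)} \;=\; \prod_{\sigma' \in S'} c'(\sigma')^{f_*(s)(\tau)(\sigma')} \;=\; (c')^{f_*(s)(\tau)}, \]
together with the fact that $\sum_{\sigma \in f^{-1}(\sigma')}(t(\tau)(\sigma) - s(\tau)(\sigma)) = f_*(t)(\tau)(\sigma') - f_*(s)(\tau)(\sigma')$, shows that both sides agree term by term in $\tau$. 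This is the heart of the proof and the step I expect to be the main obstacle, not because it is conceptually difficult but because it requires keeping careful track of several pushforwards/pullbacks and the multi-index exponentiation in Equation \eqref{eq:power}.

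Next comes monoidality of $\theta$: I need the two squares asserting that $\theta_\emptyset$ intertwines the units $\varphi, \delta$ and that
\[ \theta_{S+S'} \circ \varphi_{S,S'} \;=\; \delta_{S,S'} \circ (\theta_S \times \theta_{S'}). \]
The unit square is trivial because there is a unique vector field on $\R^\emptyset = 0$. For the binary square, $\varphi_{S,S'}(R,R') = (S+S',T+T',s+s',t+t',[r,r'])$, and splitting the transition sum $\tau \in T+T'$ into its two halves gives $v^R$ on the $S$-part and $v^{R'}$ on the $S'$-part, which is precisely $i_* \circ v^R \circ i^* + i'_* \circ v^{R'} \circ (i')^*$, matching the definition of $\delta_{S,S'}$ in Lemma \ref{lemma:DLax}.

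Finally, Theorem \ref{thm:decoratedfunctors} converts $\theta$ into a symmetric monoidal functor $\graysquare \maps \RxNet \to \Dynam$ acting as the identity on objects and by $\theta_S$ on decorations; that same theorem (or a direct check using the explicit composition formulas in Theorems \ref{thm:RxNet} and \ref{thm:dynam}) promotes it to a hypergraph functor, since the hypergraph structure on both sides is inherited from the underlying cospan category of $\FinSet$ and is preserved by any decorated-cospan functor of this form. This completes the construction.
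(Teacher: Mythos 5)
Your proposal is correct and follows essentially the same route as the paper: it reduces the theorem to exhibiting a monoidal natural transformation $\theta_S(R) = v^R$ from $F$ to $D$ and then invokes Fong's Theorem \ref{thm:decoratedfunctors}, with the naturality check resting on exactly the exponent identity $f^*(c')^{s(\tau)} = (c')^{f_*(s)(\tau)}$ together with linearity of the pushforward applied to $t(\tau)-s(\tau)$, just as in the paper. Your sketch of the monoidality square (splitting the sum over $T+T'$) fills in the step the paper labels ``straightforward,'' so nothing is missing.
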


\begin{proof}
Recall that the functors $F,D \maps (\FinSet,+) \to (\Set,\times)$ assign to a set $S$ the set of all possible reaction networks on $S$ and the set of all algebraic vector fields on $\R^S$, respectively.  By Theorem \ref{thm:decoratedfunctors}, we may obtain a hypergraph functor $\graysquare \maps \RxNet \to \Dynam$ from a monoidal natural transformation $\theta_S \maps F(S) \to D(S)$.    For any $R \in F(S)$, let us define $\theta_S(R) \in D(S)$ by
\[  \theta_S(R) = v^R \]
where the vector field $v^R$ is given by Equation \ref{eq:v^R}.

To check the naturality of $\theta$, we must prove that the following square commutes:
\[  \xymatrix{   F(S) \ar_{\theta_S}[d] \ar^{F(f)}[r] & F(S') \ar^{\theta_{S'} }[d] \\
D(S) \ar_{D(f)}[r] & D(S') } \]
where $F(f)$ was defined in Lemma \ref{lemma:RxFunctor} and $D(f)$ was defined in Lemma \ref{lemma:Dfunctor}.   So, consider any element $R \in F(S)$: that is, any reaction network with 
rates 
\[    R = (S,T,s,t,r) .\]
Let $R' = F(f)(R)$.   Thus,
\[   R' = (S',T,f_*(s),f_*(t),r).\]
We need to check that $D(f)(v^R) = v^{R'}$, or in other words,
\[    f_* \circ v^R \circ f^* = v^{R'}  .\]

To do this, recall from Equation \ref{eq:v^R} that for any concentrations $c' \in \R^{S'}$ we have
\[    v^{R'}(c') = \sum_{\tau \in T} r(\tau) (f_*(t)(\tau) - f_*(s)(\tau)) \;{c'}^{f_*(s)(\tau)} .\]
Using Equation (\ref{eq:power}) and the definition of pushforward and pullback, we obtain
\[  
\begin{array}{ccl}
{c'}^{f_*(s)(\tau)} &=& \displaystyle{\prod_{\sigma' \in S'} {c'_{\sigma'}}^{f_*(s)(\tau)(\sigma')} }
\\ \\ 
&=& \displaystyle{ \prod_{\sigma' \in S'} {c'_{\sigma'}}^{\sum_{\{\sigma : \; f(\sigma) = \sigma'\}} s(\tau)(\sigma)} } \\ \\
&=& \displaystyle{ \prod_{\sigma' \in S'}  \prod_{\{\sigma : \; f(\sigma) = \sigma'\}} 
{c'_{\sigma'}}^{s(\tau)(\sigma)} } \\ \\
&=& \displaystyle{  \prod_{\sigma \in S} {c'_{f(\sigma)}}^{s(\tau)(\sigma)} }  \\ \\
&=& \displaystyle{  \prod_{\sigma \in S} {f^*(c')_\sigma}^{s(\tau)(\sigma)} }  \\ \\
&=& f^*(c')^{s(\tau)} .
\end{array} 
\]
Thus, 
\[
\begin{array}{ccl}
   v^{R'}(c') &=&
\displaystyle{ \sum_{\tau \in T} r(\tau) (f_*(t)(\tau) - f_*(s)(\tau))  \; f^*(c')^{s(\tau)} } \\ \\
&=& f_*(v^R (f^*(c'))) .
\end{array}
\]
so $v^{R'} = f_* \circ v^R \circ f^*$ as desired.

We must also check that $\theta$ is monoidal.   By Definition \ref{def:monnattran}, this means showing
that
   \[
    \xymatrix{
      F(S) \times F(S') \ar[r]^-{\varphi_{S,S'}} \ar[d]_{\theta_S \times
      \theta_{S'}} & F(S + S') \ar[d]^{\theta_{S + S'}} \\
      D(S) \times D(S') \ar[r]^-{\delta_{S,S'}} & D(S+S')
    }
  \]
commutes for all $S,S'\in \FinSet$, where $\varphi$ was defined in Lemma \ref{lemma:RxLax} and
$\delta$ was defined in Lemma \ref{lemma:DLax}.   This is straightforward.
\end{proof}

The idea behind this theorem is best explained with an example.  Consider two composable open reaction networks with rates.   The first, $R \maps X \to Y$, is this:
\[
\begin{tikzpicture}
	\begin{pgfonlayer}{nodelayer}
		\node [style=species] (A) at (-4, 0.5) {$A$};
		\node [style=species] (B) at (-4, -0.5) {$B$};
		\node [style=species] (C) at (-1, 0) {$C$};
             \node [style=transition] (a) at (-2.5, 0) {$\alpha$}; 
		
		\node [style=empty] (X) at (-5.1, 1) {$X$};
		\node [style=none] (Xtr) at (-4.75, 0.75) {};
		\node [style=none] (Xbr) at (-4.75, -0.75) {};
		\node [style=none] (Xtl) at (-5.4, 0.75) {};
             \node [style=none] (Xbl) at (-5.4, -0.75) {};
	
		\node [style=inputdot] (1) at (-5, 0.5) {};
		\node [style=empty] at (-5.2, 0.5) {$1$};
		\node [style=inputdot] (2) at (-5, 0) {};
		\node [style=empty] at (-5.2, 0) {$2$};
		\node [style=inputdot] (3) at (-5, -0.5) {};
		\node [style=empty] at (-5.2, -0.5) {$3$};

		\node [style=empty] (Y) at (0.1, 1) {$Y$};
		\node [style=none] (Ytr) at (.4, 0.75) {};
		\node [style=none] (Ytl) at (-.25, 0.75) {};
		\node [style=none] (Ybr) at (.4, -0.75) {};
		\node [style=none] (Ybl) at (-.25, -0.75) {};

		\node [style=inputdot] (4) at (0, 0) {};
		\node [style=empty] at (0.2, 0) {$4$};
		
	\end{pgfonlayer}
	\begin{pgfonlayer}{edgelayer}
		\draw [style=inarrow] (A) to (a);
		\draw [style=inarrow] (B) to (a);
		\draw [style=inarrow, bend left =15] (a) to (C);
		\draw [style=inarrow, bend right =15] (a) to (C);
		\draw [style=inputarrow] (1) to (A);
		\draw [style=inputarrow] (2) to (B);
		\draw [style=inputarrow] (3) to (B);
		\draw [style=inputarrow] (4) to (C);
		\draw [style=simple] (Xtl.center) to (Xtr.center);
		\draw [style=simple] (Xtr.center) to (Xbr.center);
		\draw [style=simple] (Xbr.center) to (Xbl.center);
		\draw [style=simple] (Xbl.center) to (Xtl.center);
		\draw [style=simple] (Ytl.center) to (Ytr.center);
		\draw [style=simple] (Ytr.center) to (Ybr.center);
		\draw [style=simple] (Ybr.center) to (Ybl.center);
		\draw [style=simple] (Ybl.center) to (Ytl.center);
	\end{pgfonlayer}
\end{tikzpicture}
\]
It has species $S = \{A,B,C\}$ and transitions $T = \{\alpha\}$.  
The vector field describing its dynamics is 
\begin{equation}
\label{eq_v_1}
 v^R(A,B,C) = ( -r(\alpha) AB, -r(\alpha) AB , 2r(\alpha) AB). 
\end{equation}
The corresponding open rate equation is 
\begin{equation}
\label{eq:open_rate_1}
\begin{array}{rcl} 
\displaystyle{\frac{dA(t)}{dt}} &=& - r(\alpha) A(t) B(t)  + I_1(t)\\ \\
\displaystyle{\frac{dB(t)}{dt}} &=& - r(\alpha) A(t) B(t) + I_2(t) + I_3(t) \\ \\
\displaystyle{\frac{dC(t)}{dt}} &=& 2r(\alpha) A(t) B(t) - O_4(t).
\end{array}
\end{equation}

The second open reaction network with rates, $R' \maps Y \to Z$, is this:
\[
\begin{tikzpicture}
	\begin{pgfonlayer}{nodelayer}
		\node [style = species] (D) at (1, 0) {$D$};
		\node [style = transition] (b) at (2.5, 0) {$\beta$};
		\node [style = species] (E) at (4,0.5) {$E$};
		\node [style = species] (F) at (4,-0.5) {$F$};

		\node [style=empty] (Y) at (-0.1, 1) {$Y$};
		\node [style=none] (Ytr) at (.25, 0.75) {};
		\node [style=none] (Ytl) at (-.4, 0.75) {};
		\node [style=none] (Ybr) at (.25, -0.75) {};
		\node [style=none] (Ybl) at (-.4, -0.75) {};

		\node [style=inputdot] (4) at (0, 0) {};
		\node [style=empty] at (-0.2, 0) {$4$};
		
		\node [style=empty] (Z) at (5, 1) {$Z$};
		\node [style=none] (Ztr) at (4.75, 0.75) {};
		\node [style=none] (Ztl) at (5.4, 0.75) {};
		\node [style=none] (Zbl) at (5.4, -0.75) {};
		\node [style=none] (Zbr) at (4.75, -0.75) {};

		\node [style=inputdot] (5) at (5, 0.5) {};
		\node [style=empty] at (5.2, 0.5) {$5$};	
		\node [style=inputdot] (6) at (5, -0.5) {};
		\node [style=empty] at (5.2, -0.5) {$6$};	

	\end{pgfonlayer}
	\begin{pgfonlayer}{edgelayer}
		\draw [style=inarrow] (D) to (b);
		\draw [style=inarrow] (b) to (E);
		\draw [style=inarrow] (b) to (F);
		\draw [style=inputarrow] (4) to (D);
		\draw [style=inputarrow] (5) to (E);
		\draw [style=inputarrow] (6) to (F);
		\draw [style=simple] (Ytl.center) to (Ytr.center);
		\draw [style=simple] (Ytr.center) to (Ybr.center);
		\draw [style=simple] (Ybr.center) to (Ybl.center);
		\draw [style=simple] (Ybl.center) to (Ytl.center);
		\draw [style=simple] (Ztl.center) to (Ztr.center);
		\draw [style=simple] (Ztr.center) to (Zbr.center);
		\draw [style=simple] (Zbr.center) to (Zbl.center);
		\draw [style=simple] (Zbl.center) to (Ztl.center);
	\end{pgfonlayer}
\end{tikzpicture}
\]
It has species $S'=\{D,E,F\}$ and transitions $T' = \{\beta\}$.    The vector field describing its dynamics is
\begin{equation}
\label{eq:v_2}
 v^{R'}(D,E,F) = ( -r(\beta) D , r(\beta) D , r(\beta) D). 
\end{equation}
The corresponding open rate equation is
\begin{equation}
\label{eq:open_rate_2}
\begin{array}{rcl} 
\displaystyle{\frac{dD(t)}{dt}} &=& - r(\beta) D(t)  + I_4(t)\\ \\
\displaystyle{\frac{dE(t)}{dt}} &=& r(\beta) D(t) - O_5(t) \\ \\
\displaystyle{\frac{dF(t)}{dt}} &=& r(\beta) D(t) - O_6(t).
\end{array}
\end{equation}

Composing $R$ and $R'$ gives $R' R \maps X \to Z$, which looks like this:
\[
\begin{tikzpicture}
	\begin{pgfonlayer}{nodelayer}
		\node [style=inputdot] (0) at (-4.25, 0) {};
		\node [style=empty] at (-4.55, 0) {$2$};
		\node [style=species] (1) at (-3.25, 0.5) {$A$};
		\node [style=none] (2) at (-4, 0.75) {};
		\node [style=none] (3) at (4, -0.75) {};
		\node [style=transition] (4) at (-1.75, -0) {$\alpha$};
		\node [style=none] (5) at (-4.7, 0.75) {};
		\node [style=none] (6) at (4, 0.75) {};
		\node [style=transition] (7) at (1.5, -0) {$\beta$};
		\node [style=inputdot] (8) at (4.25, 0.5) {};
		\node [style=empty] at (4.55, 0.5) {$5$};
		\node [style=none] (9) at (-4.7, -0.75) {};
		\node [style=species] (10) at (0, -0) {$C$};
		\node [style=none] (11) at (4.7, -0.75) {};
		\node [style=inputdot] (12) at (-4.25, 0.5) {};
		\node [style=empty] at (-4.55, 0.5) {$1$};
		\node [style=none] (13) at (4.7, 0.75) {};
		\node [style=empty] (14) at (-4.4, 1.05) {$X$};
		\node [style=none] (15) at (-4, -0.75) {};
		\node [style=empty] (16) at (4.3, 1.05) {$Z$};
		\node [style=species] (17) at (3.25, 0.5) {$E$};
		\node [style=species] (18) at (-3.25, -0.5) {$B$};
		\node [style=inputdot] (19) at (-4.25, -0.5) {};
		\node [style=empty] at (-4.55, -0.5) {$3$};
		\node[ style=species] (20) at (3.25, -0.5) {$F$};
		\node[ style=inputdot] (21) at (4.25, -0.5) {};
		\node [style=empty] at (4.55, -0.5) {$6$};
	\end{pgfonlayer}
	\begin{pgfonlayer}{edgelayer}
		\draw [style=inarrow] (1) to (4);
		\draw [style=inarrow] (18) to (4);
		\draw [style=inarrow, bend right=15, looseness=1.00] (4) to (10);
		\draw [style=inarrow] (7) to (17);
		\draw [style=inarrow, bend left=15, looseness=1.00] (4) to (10);
		\draw [style=inputarrow] (12) to (1);
		\draw [style=inputarrow] (0) to (18);
		\draw [style=inputarrow] (19) to (18);
		\draw [style=inputarrow] (8) to (17);
		\draw [style=simple] (5.center) to (2.center);
		\draw [style=simple] (2.center) to (15.center);
		\draw [style=simple] (15.center) to (9.center);
		\draw [style=simple] (9.center) to (5.center);
		\draw [style=simple] (6.center) to (13.center);
		\draw [style=simple] (13.center) to (11.center);
		\draw [style=simple] (11.center) to (3.center);
		\draw [style=simple] (3.center) to (6.center);
		\draw [style=inarrow] (10) to (7);
		\draw [style=inarrow]  (7) to (20);
		\draw[ style=inputarrow] (21) to (20);
	\end{pgfonlayer}
\end{tikzpicture}
\]
Note that the states $C$ and $D$ have been identified, and we have arbitrarily called the resulting state $C$.   Thus, $R'R$ has species $S+_Y S' =\{ A,B,C,E,F\}$.   If we had chosen to call the resulting state something else, we would obtain an equivalent open reaction network with rates, and thus the same morphism in $\RxNet$.

At this point we can either compute the vector field $v^{R'R}$ or combine the vector fields $v^R$ and $v^{R'}$ following the procedure given in Theorem \ref{thm:dynam}.  Because $\graysquare$ is a functor, we should get the same answer either way.

The vector field $v^{R'R}$ can be read off from the above picture of $R'R$.  It is
\begin{equation}
\label{eq:v_3}
     v^{R'R}(A,B,C,E,F) = 
\end{equation}
\[
(-r(\alpha) A B, \; -r(\alpha) A B, \; 2 r(\alpha) AB - r(\beta) C, \; r(\beta) C, \; r(\beta) C ) .
\]
On the other hand, the procedure in Theorem \ref{thm:dynam} is to apply the composite of these maps:
\[      
\xymatrix{      D(S) \times D(S') \ar[rr]^-{\delta_{S,S'}} && 
                     D(S + S') \ar[rr]^-{D([j,j'])} && D(S +_Y S') } \]
to the pair of vector fields $(v^R,v^{R'}) \in D(S) \times D(S')$.    The first map
was defined in Lemma \ref{lemma:DLax}, and it yields
\[   \begin{array}{ccl}
  \delta_{S,S'}(v^R,v^{R'}) &=& i_* \circ v^R \circ i^* + i'_* \circ v^{R'} \circ {i'}^* \\  \\
&=&  ( -r(\alpha) AB, \; -r(\alpha) AB , \; 2r(\alpha) AB, \, 0, \; 0, \; 0) \; + \\ 
&& (0, \; 0,\; 0,\, -r(\beta) D , \;r(\beta) D , \; r(\beta) D) \\ \\
&=& ( -r(\alpha) AB,\; -r(\alpha) AB ,\; 2r(\alpha) AB, \; -r(\beta) D , \; r(\beta) D ,\; r(\beta) D).
\end{array}
\]
If we call this vector field $u$, the second map yields
\[  D([j,j'])(u) = [j,j']_* \circ u \circ [j,j']^*.\]
Applying $[j,j']^*$ to any vector of concentrations $(A,B,C,E,F) \in \R^{S +_Y S'}$ yields
$(A,B,C,C,E,F) \in \R^{S+S'}$, since the species $C$ and $D$ are identified by $[j,j']$.   Thus, 
\[   u \circ [j,j']^* =  ( -r(\alpha) AB, -r(\alpha) AB , 2r(\alpha) AB, -r(\beta) C , r(\beta) C , r(\beta) C). \]
Applying $[j,j']_*$ to this, we sum the third and fourth components, again because $C$
and $D$ are identified by $[j,j']$.  Thus,
\[   [j,j']_* \circ u \circ [j,j']^* =  ( -r(\alpha) AB, \, -r(\alpha) AB , \, 2r(\alpha) AB -r(\beta) C , \, r(\beta) C , \, r(\beta) C). \]
As expected, this vector field equals $v^{R'R}$.     The open rate equation of the composite open dynamical system is
\begin{equation}
\label{eq:open_rate_3}
\begin{array}{rcl} 
\displaystyle{\frac{dA(t)}{dt}} &=& - r(\alpha) A(t) B(t)  + I_1(t)\\ \\
\displaystyle{\frac{dB(t)}{dt}} &=& - r(\alpha) A(t) B(t) + I_2(t) + I_3(t) \\ \\
\displaystyle{\frac{dC(t)}{dt}} &=& 2r(\alpha) A(t) B(t) - r(\beta) C(t)  \\ \\
\displaystyle{\frac{dD(t)}{dt}} &=& r(\beta) C(t) - O_5(t) \\ \\
\displaystyle{\frac{dE(t)}{dt}} &=& r(\beta) C(t) - O_6(t) .
\end{array}
\end{equation}

One general lesson here is that when we compose open reaction networks, the process of identifying some of their species via the map $[j,j'] \maps S + S' \to S +_Y S'$ has two effects: copying concentrations and summing reaction velocities.  Concentrations are copied via the pullback $[j,j']^*$, while reaction velocities are summed via the pushfoward $[j,j']_*$.  A similar phenomenon occurs the compositional framework for electrical circuits, where voltages are copied and currents are summed \cite{BaezFong}.  For a deeper look at this, see Section 6.6 of Fong's thesis \cite{FongThesis}.

\section{The black-boxing functor}
\label{sec:black}

The open rate equation describes the behavior of an open dynamical system for any choice of inflows and outflows.  One option is to choose these flows so that the input and output concentrations do not change with time.  In chemistry this is called `chemostatting'.    There will then frequently---though not always---be solutions of the open rate equation where \emph{all} concentrations are constant in time.   These are called `steady states'.  

In this section we take an open dynamical system and extract from it the relation between input and output concentrations and flows that holds in steady state.  We call the process of extracting this relation `black-boxing', since it discards information that cannot be seen at the inputs and ouputs.  The relation thus obtained is always `semi-algebraic', meaning that it can be described by polynomials and inequalities.  In fact, black-boxing defines a functor 
\[        \blacksquare \maps \Dynam \to \SemiAlgRel \]
where $\SemiAlgRel$ is the category of semi-algebraic relations between real vector spaces.    The functoriality of black-boxing means that we can compose two open dynamical systems and then black-box them, or black-box each one and compose the resulting relations: either way, the final answer is the same.   

We can also black-box open reaction networks with rates.  To do this, we simply compose the gray-boxing functor with the black-boxing functor:
\[   \RxNet \stackrel{\graysquare}{\longrightarrow} \Dynam \stackrel{\blacksquare}{\longrightarrow} \SemiAlgRel .\]

We begin by explaining how to black-box an open dynamical system.

\begin{defn}
Given an open dynamical system
$ (X \stackrel{i}\longrightarrow S \stackrel{o}\longleftarrow Y, v) $
we define the \define{boundary} species to be those in $B = i(X) \cup o(Y)$, and the \define{internal} species to be those in $S - B$.
\end{defn}

The open rate equation says
\[ \frac{dc(t)}{dt} = v(c(t)) + i_*(I(t)) - o_*(O(t)) \]
but if we fix $c$, $I$ and $O$ to be constant in time, this reduces to
\[     v(c) + i_*(I) - o_*(O) = 0 .\]   
This leads to the definition of `steady state':

\begin{defn}
Given an open dynamical system 
$ (X \stackrel{i}\longrightarrow S \stackrel{o}\longleftarrow Y, v) $  
together with $I \in \R^X$ and $O \in \R^Y$,
a \define{steady state} with inflows $I$ and outflows $O$ is an element $c \in \R^S$ such that 
\[     v(c) + i_*(I) - o_*(O) = 0 .\]   
\end{defn}

Thus, in a steady state, the inflows and outflows conspire to exactly compensate for
the reaction velocities.    In particular, we must have
\[   \left. v(c)\right|_{S - B} = 0 \]
since the inflows and outflows vanish on internal species.  

\begin{defn}
Given a morphism $F \maps X \to Y$ in $\Dynam$ represented by the open dynamical system 
\[     (X \stackrel{i}\longrightarrow S \stackrel{o}\longleftarrow Y, v), \]
define its \define{black-boxing} to be the set
\[   \blacksquare(F) \subseteq \R^X \oplus \R^X \oplus \R^Y \oplus \R^Y \]
consisting of all 4-tuples $(i^*(c),I,o^*(c),O)$ where $c \in \R^S$ is a steady state
with inflows $I \in \R^X$ and outflows $O \in \R^Y$.
\end{defn}

We call $i^*(c)$ the \define{input concentrations} and $o^*(c)$ the \define{output concentrations}.   Thus, black-boxing records the relation between input concentrations, inflows, output concentrations and outflows that holds in steady state.  This is the `externally observable steady state behavior' of the open dynamical system.

Category theory enters the picture because relations are morphisms in a category.  For any sets $X$ and $Y$, a relation $A \maps X \relto Y$ is a subset $A \subseteq X \times Y$.   Given relations $A \maps X \relto Y$ and $B \maps Y \relto Z$, their composite $B \circ A \maps X \relto Z$ is the set of all pairs $(x,z) \in X \times Z$ such that there exists $y \in Y$ with $(x,y) \in A$ and $(y,z) \in B$.   This gives a category $\Rel$ with sets as objects and relations as morphisms.   

Black-boxing an open dynamical system $F \maps X \to Y$ gives a relation
\[   \blacksquare(F) \maps \R^X \oplus \R^X \relto \R^Y \oplus \R^Y  .\]
This immediately leads to the question of whether black-boxing is a functor from $\Dynam$ to 
$\Rel$.  

The answer is yes.  To get a sense for this, consider the example from Section \ref{sec:gray}, where we composed two open dynamical systems.   We first considered this open reaction network with rates:
\[
\begin{tikzpicture}
	\begin{pgfonlayer}{nodelayer}
		\node [style=species] (A) at (-4, 0.5) {$A$};
		\node [style=species] (B) at (-4, -0.5) {$B$};
		\node [style=species] (C) at (-1, 0) {$C$};
             \node [style=transition] (a) at (-2.5, 0) {$\alpha$}; 
		
		\node [style=empty] (X) at (-5.1, 1) {$X$};
		\node [style=none] (Xtr) at (-4.75, 0.75) {};
		\node [style=none] (Xbr) at (-4.75, -0.75) {};
		\node [style=none] (Xtl) at (-5.4, 0.75) {};
             \node [style=none] (Xbl) at (-5.4, -0.75) {};
	
		\node [style=inputdot] (1) at (-5, 0.5) {};
		\node [style=empty] at (-5.2, 0.5) {$1$};
		\node [style=inputdot] (2) at (-5, 0) {};
		\node [style=empty] at (-5.2, 0) {$2$};
		\node [style=inputdot] (3) at (-5, -0.5) {};
		\node [style=empty] at (-5.2, -0.5) {$3$};

		\node [style=empty] (Y) at (0.1, 1) {$Y$};
		\node [style=none] (Ytr) at (.4, 0.75) {};
		\node [style=none] (Ytl) at (-.25, 0.75) {};
		\node [style=none] (Ybr) at (.4, -0.75) {};
		\node [style=none] (Ybl) at (-.25, -0.75) {};

		\node [style=inputdot] (4) at (0, 0) {};
		\node [style=empty] at (0.2, 0) {$4$};
		
	\end{pgfonlayer}
	\begin{pgfonlayer}{edgelayer}
		\draw [style=inarrow] (A) to (a);
		\draw [style=inarrow] (B) to (a);
		\draw [style=inarrow, bend left =15] (a) to (C);
		\draw [style=inarrow, bend right =15] (a) to (C);
		\draw [style=inputarrow] (1) to (A);
		\draw [style=inputarrow] (2) to (B);
		\draw [style=inputarrow] (3) to (B);
		\draw [style=inputarrow] (4) to (C);
		\draw [style=simple] (Xtl.center) to (Xtr.center);
		\draw [style=simple] (Xtr.center) to (Xbr.center);
		\draw [style=simple] (Xbr.center) to (Xbl.center);
		\draw [style=simple] (Xbl.center) to (Xtl.center);
		\draw [style=simple] (Ytl.center) to (Ytr.center);
		\draw [style=simple] (Ytr.center) to (Ybr.center);
		\draw [style=simple] (Ybr.center) to (Ybl.center);
		\draw [style=simple] (Ybl.center) to (Ytl.center);
	\end{pgfonlayer}
\end{tikzpicture}
\]
Gray-boxing this gives a morphism in $\Dynam$, say $F \maps X \to Y$, represented by the
open dynamical system
\[         (X \stackrel{i}\longrightarrow S \stackrel{o}\longleftarrow Y, v^R) \]
where the cospan is visible in the figure and $v^R$ is the vector field on $\R^S$ given in
Equation (\ref{eq_v_1}).  If we now black-box $F$, we obtain the relation
\[   \blacksquare(F) = \{  (i^*(c),I,o^*(c),O) : \; v^R(c) + i_*(I) - i_*(O) = 0 \}. \] 
Here the inflows and outflows are
\[   I = (I_1, I_2, I_3) \in \R^X, \qquad O = O_4 \in \R^Y, \]
and vector of concentrations is $c = (A,B,C) \in \R^S$, so the input and output 
concentrations are 
\[   i^*(c) = (A,B,B) \in \R^X, \qquad o^*(c) = C \in \R^Y .\]
To find steady states with inflows $I$ and outflows $O$ we take the open rate equation, Equation (\ref{eq:open_rate_1}), and set all concentrations, inflows and outflows to constants:
\[
\begin{array}{rcr} 
 I_1 &=& r(\alpha) AB\, \\ 
 I_2 + I_3 &=& r(\alpha) AB\, \\ 
 O_4 &=& 2r(\alpha) AB.
\end{array}
\]
Thus, 
\begin{equation}   
\label{eq:black_1}
\blacksquare(F) = 
\end{equation}
\[ \{(A,B,B,I_1,I_2,I_3,C,O_4): \; I_1 = I_2 + I_3 = r(\alpha) AB , 
 O_4 = 2r(\alpha) AB \}.\]

Next we considered this open reaction network with rates:
\[
\begin{tikzpicture}
	\begin{pgfonlayer}{nodelayer}
		\node [style = species] (D) at (1, 0) {$D$};
		\node [style = transition] (b) at (2.5, 0) {$\beta$};
		\node [style = species] (E) at (4,0.5) {$E$};
		\node [style = species] (F) at (4,-0.5) {$F$};

		\node [style=empty] (Y) at (-0.1, 1) {$Y$};
		\node [style=none] (Ytr) at (.25, 0.75) {};
		\node [style=none] (Ytl) at (-.4, 0.75) {};
		\node [style=none] (Ybr) at (.25, -0.75) {};
		\node [style=none] (Ybl) at (-.4, -0.75) {};

		\node [style=inputdot] (4) at (0, 0) {};
		\node [style=empty] at (-0.2, 0) {$4$};
		
		\node [style=empty] (Z) at (5, 1) {$Z$};
		\node [style=none] (Ztr) at (4.75, 0.75) {};
		\node [style=none] (Ztl) at (5.4, 0.75) {};
		\node [style=none] (Zbl) at (5.4, -0.75) {};
		\node [style=none] (Zbr) at (4.75, -0.75) {};

		\node [style=inputdot] (5) at (5, 0.5) {};
		\node [style=empty] at (5.2, 0.5) {$5$};	
		\node [style=inputdot] (6) at (5, -0.5) {};
		\node [style=empty] at (5.2, -0.5) {$6$};	

	\end{pgfonlayer}
	\begin{pgfonlayer}{edgelayer}
		\draw [style=inarrow] (D) to (b);
		\draw [style=inarrow] (b) to (E);
		\draw [style=inarrow] (b) to (F);
		\draw [style=inputarrow] (4) to (D);
		\draw [style=inputarrow] (5) to (E);
		\draw [style=inputarrow] (6) to (F);
		\draw [style=simple] (Ytl.center) to (Ytr.center);
		\draw [style=simple] (Ytr.center) to (Ybr.center);
		\draw [style=simple] (Ybr.center) to (Ybl.center);
		\draw [style=simple] (Ybl.center) to (Ytl.center);
		\draw [style=simple] (Ztl.center) to (Ztr.center);
		\draw [style=simple] (Ztr.center) to (Zbr.center);
		\draw [style=simple] (Zbr.center) to (Zbl.center);
		\draw [style=simple] (Zbl.center) to (Ztl.center);
	\end{pgfonlayer}
\end{tikzpicture}
\]
Gray-boxing this gives a morphism $F' \maps Y \to Z$ in $\Dynam$ represented by the open dynamical system 
\[         (Y \stackrel{i'}\longrightarrow S' \stackrel{o'}\longleftarrow Z, v^{R'}) \]
where $v^{R'}$ is given by Equation (\ref{eq:v_2}).   To black-box $F'$ we can follow the same
procedure as for $F$.  We take the open rate equation, Equation (\ref{eq:open_rate_2}), and look
for steady-state solutions:
\[
\begin{array}{rcr} 
I_4 &=& r(\beta) D\,  \\
O_5 &=& r(\beta) D\, \\
O_6 &=& r(\beta) D.
\end{array}
\]
Then we form the relation between input concentrations, inflows, output concentrations and outflows
that holds in steady state:
\begin{equation}
\label{eq:black_2}
\blacksquare(F') = 
 \{ (D,I_4,E,F,O_5,O_6) : \; I_4 = O_5 = O_6 = r(\beta) D \} .  
\end{equation}

Finally, we can compose these two open reaction networks with rates:
\[
\begin{tikzpicture}
	\begin{pgfonlayer}{nodelayer}
		\node [style=inputdot] (0) at (-4.25, 0) {};
		\node [style=empty] at (-4.55, 0) {$2$};
		\node [style=species] (1) at (-3.25, 0.5) {$A$};
		\node [style=none] (2) at (-4, 0.75) {};
		\node [style=none] (3) at (4, -0.75) {};
		\node [style=transition] (4) at (-1.75, -0) {$\alpha$};
		\node [style=none] (5) at (-4.7, 0.75) {};
		\node [style=none] (6) at (4, 0.75) {};
		\node [style=transition] (7) at (1.5, -0) {$\beta$};
		\node [style=inputdot] (8) at (4.25, 0.5) {};
		\node [style=empty] at (4.55, 0.5) {$5$};
		\node [style=none] (9) at (-4.7, -0.75) {};
		\node [style=species] (10) at (0, -0) {$C$};
		\node [style=none] (11) at (4.7, -0.75) {};
		\node [style=inputdot] (12) at (-4.25, 0.5) {};
		\node [style=empty] at (-4.55, 0.5) {$1$};
		\node [style=none] (13) at (4.7, 0.75) {};
		\node [style=empty] (14) at (-4.4, 1.05) {$X$};
		\node [style=none] (15) at (-4, -0.75) {};
		\node [style=empty] (16) at (4.3, 1.05) {$Z$};
		\node [style=species] (17) at (3.25, 0.5) {$E$};
		\node [style=species] (18) at (-3.25, -0.5) {$B$};
		\node [style=inputdot] (19) at (-4.25, -0.5) {};
		\node [style=empty] at (-4.55, -0.5) {$3$};
		\node[ style=species] (20) at (3.25, -0.5) {$F$};
		\node[ style=inputdot] (21) at (4.25, -0.5) {};
		\node [style=empty] at (4.55, -0.5) {$6$};
	\end{pgfonlayer}
	\begin{pgfonlayer}{edgelayer}
		\draw [style=inarrow] (1) to (4);
		\draw [style=inarrow] (18) to (4);
		\draw [style=inarrow, bend right=15, looseness=1.00] (4) to (10);
		\draw [style=inarrow] (7) to (17);
		\draw [style=inarrow, bend left=15, looseness=1.00] (4) to (10);
		\draw [style=inputarrow] (12) to (1);
		\draw [style=inputarrow] (0) to (18);
		\draw [style=inputarrow] (19) to (18);
		\draw [style=inputarrow] (8) to (17);
		\draw [style=simple] (5.center) to (2.center);
		\draw [style=simple] (2.center) to (15.center);
		\draw [style=simple] (15.center) to (9.center);
		\draw [style=simple] (9.center) to (5.center);
		\draw [style=simple] (6.center) to (13.center);
		\draw [style=simple] (13.center) to (11.center);
		\draw [style=simple] (11.center) to (3.center);
		\draw [style=simple] (3.center) to (6.center);
		\draw [style=inarrow] (10) to (7);
		\draw [style=inarrow]  (7) to (20);
		\draw[ style=inputarrow] (21) to (20);
	\end{pgfonlayer}
\end{tikzpicture}
\]
Gray-boxing the composite gives a morphism $F'F \maps X \to Z$ represented by the open dynamical system
\[         (Y \stackrel{i'}\longrightarrow S' \stackrel{o'}\longleftarrow Z, v^{R'R}) \]
where $v^{R'R}$ is given in Equation (\ref{eq:v_3}).   To black-box $F'F$ we take its open rate equation, Equation (\ref{eq:open_rate_3}), and look for steady state solutions:
\[
\begin{array}{rcl} 
I_1 &=& r(\alpha) AB \\
I_2 + I_3 &=& r(\alpha) AB \\ 
r(\beta) C &=& 2r(\alpha) AB  \\
O_5 &=& r(\beta) C \\
O_6 &=& r(\beta) C .
\end{array}
\]
The concentrations of internal species play only an indirect role after we black-box an open
dynamical system, since black-boxing only tells us the steady state relation between input concentrations, inflows, output concentrations and outflows.  In $F$ and $F'$ there were no internal species.  In $F'F$ there is one, namely $C$.  However, in this particular example the concentration $C$ is completely determined by the other data, so we can eliminate it from the above equations.  This is not true in every example.  But we can take advantage of this special feature here, obtaining these equations:
\[
\begin{array}{rcl} 
I_1 &=& r(\alpha) AB \\
I_2 + I_3 &=& r(\alpha) AB \\ 
O_5 &=& 2r(\alpha) AB  \\
O_5 &=& O_6 .
\end{array}
\]
We thus obtain
\begin{equation}
\label{eq:black_3}
\blacksquare(F'F)= 
\end{equation}
\[  \{ (A,B,B,I_1,I_2,I_3,E,F,O_5,O_6) : \; I_1  = I_2 + I_3 = r(\alpha) AB, O_5 = O_6 = 2 r(\alpha) AB \}   .\]
We leave it to the reader to finish checking the functoriality of black-boxing in this example:
\[         \blacksquare(F' F) = \blacksquare(F') \blacksquare(F) .\]
To do this, it suffices to compose the relations $\blacksquare(F)$ given in Equation (\ref{eq:black_1}) and $\blacksquare(F')$ given in Equation (\ref{eq:black_2}).

This example was a bit degenerate, because in each open dynamical system considered there was at most one steady state compatible with any choice of input concentrations, inflows, output concentrations and outflows.   In other words, even when there was an internal species, its concentration was determined by this `boundary data'.    This is far from generally true! 
Even for relatively simple `closed' reaction networks, namely those with no boundary species, multiple steady states may be possible.  Such reaction networks often involve features such as `autocatalysis', meaning that a certain species is present as both an input and an output to the same reaction.   Criteria for the uniqueness of steady states have been a major focus of reaction network theory ever since its birth \cite{BanajiCraciun, Feinberg, FeinbergHorn, Feinberg1995a}, and so have criteria for the existence of multiple steady states \cite{CraciunFeinbergTang, Feinberg1995b}.  We expect the study of open reaction networks to give a new outlook on these questions.  However, our proof of the functoriality of black-boxing sidesteps this issue.

Before proving this result, it is nice to refine the framework slightly.  The black-boxing of an open dynamical system is far from an arbitrary relation: it is always `semialgebraic'. To understand this, we need a lightning review of semialgebraic geometry \cite{Coste}.

Let us use `vector space' to mean a finite-dimensional real vector space.  Given a vector space $V$, the collection of \define{semialgebraic subsets} of $V$ is the smallest collection that contains all sets of the form $\{P(v) = 0\}$ and $\{P(v) > 0\}$, where $P \maps V \to \R$ is any polynomial, and is closed under finite intersections, finite unions and complements.   The Tarski--Seidenberg theorem says that if $S \subseteq V \oplus W$ is semialgebraic then so is its projection to $V$, that is, the subset of $V$ given by
\[                \{v \in V :\; \exists w \in W \; (v,w) \in S \} .\]

If $U$ and $V$ are vector spaces, a \define{semialgebraic relation} $A \maps U \relto V$ is a semialgebraic subset $A \subseteq U \oplus V$.    If $A \maps U \relto V$ and $B \maps V \relto W$ are semialgebraic relations, so is their composite
\[      B \circ A = \{(u,w) : \; \exists v \in V \; (u,v) \in A \textrm{ and } (v,w) \in B \} \]
thanks to the Tarski--Seidenberg theorem.   The identity relation on any vector space is
also semialgebraic, so we obtain a category:

\begin{defn}
Let $\SemiAlgRel$ be the category with vector spaces as objects and semialgebraic relations as morphisms.
\end{defn}

This category is symmetric monoidal in a natural way, where the tensor product of objects is given by the direct sum of vector spaces.   We can now state the main theorem about black-boxing:

\begin{thm}
\label{thm:black}
There is a symmetric monoidal functor $\blacksquare \maps \Dynam \to \SemiAlgRel$
sending any finite set $X$ to the vector space $\R^X \oplus \R^X$ and any morphism $F \maps X \to Y$ to its black-boxing $\blacksquare(F)$.
\end{thm}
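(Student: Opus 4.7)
The plan is to verify the four standard conditions defining a symmetric monoidal functor: well-definedness of $\blacksquare$ on morphisms (including semi-algebraicity of the image), preservation of identities, preservation of composition, and preservation of the monoidal structure. First I would verify that $\blacksquare(F)$ is always a semi-algebraic relation. Since the vector field $v$ has polynomial components and $i_*, o_*$ are linear, the steady state set
\[ \{(c,I,O)\in\R^S\oplus\R^X\oplus\R^Y : v(c)+i_*(I)-o_*(O)=0\} \]
is algebraic; its image under the linear map $(c,I,O)\mapsto(i^*(c),I,o^*(c),O)$ is then semi-algebraic by the Tarski--Seidenberg theorem (eliminating $c$). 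Well-definedness on equivalence classes is routine: any isomorphism of apices commuting with the cospan legs gives a bijection on steady-state data preserving the four recorded entries. For the identity, the identity morphism on $X$ in $\Dynam$ is represented by $X\stackrel{\id}{\to}X\stackrel{\id}{\leftarrow}X$ decorated by the zero vector field, so the steady-state condition collapses to $I=O$ with $i^*(c)=o^*(c)=c$, giving the identity relation on $\R^X\oplus\R^X$.

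The core of the proof is functoriality under composition: given composable $F\maps X\to Y$ and $F'\maps Y\to Z$ represented by $(X\stackrel{i}{\to} S\stackrel{o}{\leftarrow} Y,v)$ and $(Y\stackrel{i'}{\to} S'\stackrel{o'}{\leftarrow} Z,v')$, we must show $\blacksquare(F'F)=\blacksquare(F')\circ\blacksquare(F)$. The key ingredient is the pushout identity $jo=j'i'$, which yields $j_*\circ o_*=j'_*\circ i'_*$. For the inclusion $\supseteq$: given steady states $c_1\in\R^S$ of $F$ with outflow $J$ and $c_2\in\R^{S'}$ of $F'$ with inflow $J$, the agreement $o^*(c_1)=i'^*(c_2)$ lets us glue $c_1,c_2$ into $c\in\R^{S+_YS'}$ with $j^*c=c_1$ and $j'^*c=c_2$. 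A direct expansion using the explicit form of $v^{R'R}$ from Theorem \ref{thm:dynam} yields
\[ v^{R'R}(c)+(ji)_*(I)-(j'o')_*(O) = j_*\bigl[v(c_1)+i_*(I)-o_*(J)\bigr]+j'_*\bigl[v'(c_2)+i'_*(J)-o'_*(O)\bigr], \]
where the two $J$-containing terms $-j_*o_*(J)$ and $+j'_*i'_*(J)$ cancel by the pushout identity, and both bracketed expressions vanish by the individual steady-state conditions.

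The main obstacle will be the reverse inclusion $\subseteq$, which requires producing an intermediate flow $J\in\R^Y$ witnessing the composition. Given a composite steady state $c$ with inflows $I$ and outflows $O$, set $c_1=j^*c$, $c_2=j'^*c$, and seek $J$ satisfying $o_*(J)=v(c_1)+i_*(I)$ and $i'_*(J)=o'_*(O)-v'(c_2)$. First observe that for $s\in S$ not in $o(Y)$, the equivalence class $[s]$ in $S+_Y S'$ reduces to $\{s\}$ (since $s$ cannot enter any generating relation of the pushout), so the composite steady state equation at $[s]$ reduces to $(v(c_1)+i_*(I))(s)=0$; the right-hand side of the first equation is therefore supported on $o(Y)$, and symmetrically for the second. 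The two systems are consistent if and only if, for each equivalence class of the bipartite graph on $Y$ with vertices $o(Y)\sqcup i'(Y)$ and edges indexed by $Y$, the total $S$-side demand matches the total $S'$-side demand; this equality is exactly the composite steady-state equation summed over that class. A standard bipartite flow argument then produces the required $J$, and by construction this $J$ makes both pieces steady states. Finally, symmetric monoidality follows because the tensor product of open dynamical systems corresponds to disjoint union of species and direct sum of vector fields, under which steady states and the pullbacks $i^*,o^*$ factor, so the black-boxing of a tensor product is the direct product of black-boxings, and the symmetry morphisms of $\Dynam$ are sent to swap relations in $\SemiAlgRel$.
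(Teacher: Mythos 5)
Your proposal is correct, and its overall skeleton (semi-algebraicity via Tarski--Seidenberg, the two inclusions, gluing concentrations along the pushout for the easy inclusion) matches the paper's proof; the genuine difference is in how you produce the interface flow $J$ in the harder inclusion $\blacksquare(F'F)\subseteq\blacksquare(F')\blacksquare(F)$. The paper gets $O=I'$ abstractly: applying the free vector space functor to the pushout square and using pushout $=$ coproduct followed by coequalizer, it identifies $\ker(j_*+j'_*)$ with the image of $(o_*,-i'_*)$, and the composite steady-state equation says exactly that $(v(c)+i_*(I),\,v'(c')-o'_*(O'))$ lies in that kernel. You instead prove the same exactness statement by hand: the singleton-class observation shows $v(c_1)+i_*(I)$ and $o'_*(O)-v'(c_2)$ are supported on $o(Y)$ and $i'(Y)$, and solvability of the unsigned bipartite vertex--edge system is equivalent to componentwise balance of demands, which is precisely the composite equation read off at each point of $S+_Y S'$ meeting the interface. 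Both are correct; the paper's route is shorter, avoids any case analysis on supports, and generalizes immediately (it only uses that $\R^{(-)}$ preserves the pushout), while yours is more elementary and makes the combinatorics of the interface explicit. The only soft spot is the phrase ``a standard bipartite flow argument'': to make it airtight you should state the linear-algebra fact you are invoking, namely that the left null space of the unsigned incidence matrix of a bipartite multigraph is spanned, per connected component, by the vector that is $+1$ on one side and $-1$ on the other, so the system $o_*(J)=a$, $i'_*(J)=b$ (with $a,b$ supported on $o(Y)$, $i'(Y)$) is solvable over $\R$ exactly when the totals of $a$ and $b$ agree on each component---which is the coequalizer fact the paper uses, in concrete form. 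Your added checks of well-definedness on equivalence classes and of identities (zero vector field, giving the identity relation) are fine, and your treatment of the monoidal structure is at the same level of detail as the paper's, which also leaves it as a routine verification.
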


\begin{proof}
For any morphism $F \maps X \to Y$ in $\Dynam$ represented by the open dynamical system 
\[         (X \stackrel{i}\longrightarrow S \stackrel{o}\longleftarrow Y, v) \]
the set 
\[       \{(c, i^*(c),I,o^*(c),O) : \; v(c) + i_*(I) - i_*(O) = 0 \} \subseteq
\R^S \oplus \R^X \oplus \R^X \oplus \R^Y \oplus \R^Y \] 
is defined by polynomial equations, since $v$ is algebraic.   Thus, by the Tarski--Seidenberg theorem, the set 
\[    \blacksquare(F) = \{  (i^*(c),I,o^*(c),O) : \; v(c) + i_*(I) - i_*(O) = 0 \} \]
is semialgebraic.   

Next we prove that $\blacksquare$ is a functor.  Consider composable morphisms $F \maps X \to Y$ and $F' \maps Y \to Z$ in $\Dynam$.   We know that $F$ is represented by some open dynamical system 
\[         (X \stackrel{i}\longrightarrow S \stackrel{o}\longleftarrow Y, v) \]
while $F'$ is represented by some
\[         (Y \stackrel{i'}\longrightarrow S' \stackrel{o'}\longleftarrow Z, v') .\]
To compose these, we form the pushout
\[
    \xymatrix{
      && S +_Y S' \\
      & S \ar[ur]^{j} && S' \ar[ul]_{j'} \\
      \quad X\quad \ar[ur]^{i} && Y \ar[ul]_{o} \ar[ur]^{i'} &&\quad Z \quad \ar[ul]_{o'}
    }
\]
Then $F'F \maps X \to Z$ is represented by the open dynamical system
\[ (X \stackrel{j i}{\longrightarrow} S +_Y S' \stackrel{j' o'}{\longrightarrow} Z, u ) \]
where 
\[    u = j_* \circ v \circ j^* + {j'}_* \circ v' \circ {j'}^*  .\]

To prove that $\blacksquare$ is a functor, we first show that 
\[  \blacksquare(F') \blacksquare(F) \subseteq \blacksquare(F'F). \]  
Thus, given
\[     (i^*(c),I,o^*(c),O) \in \blacksquare(F), \qquad  ({i'}^*(c'),I',{o'}^*(c'),O') \in \blacksquare(F') \]
with 
\[   o^*(c) = {i'}^*(c'), \qquad O = I' \]
we need to prove that 
\[     (i^*(c),I,{o'}^*(c'),O') \in \blacksquare(F'F). \]
To do this, it suffices to find concentrations $b \in \R^{S +_Y S'}$ such that 
\[   (i^*(c),I,{o'}^*(c'),O') = ((ji)^*(b), I, {(j'o')}^*(b), O') \]
and $b$ is a steady state of $F'F$ with inflows $I$ and outflows $O'$.

Since  $o^*(c) = {i'}^*(c'),$ this diagram commutes:
\[
    \xymatrix{
      && \R \\
      & S \ar[ur]^{c} && S' \ar[ul]_{c'} \\
       && Y \ar[ul]^{o} \ar[ur]_{i'} &&
    }
\]
so by the universal property of the pushout there is a unique map $b \maps S +_Y S' \to \R$ such that
this commutes:
\begin{equation}
\label{eq:pushout}
    \xymatrix{
      && \R \\
     && S +_Y  S' \ar[u]^b \\      
      & S \ar@/^/[uur]^{c} \ar[ur]^{j} && S' \ar@/_/[uul]_{c'} \ar[ul]_{j'} \\
       && Y \ar[ul]^{o} \ar[ur]_{i'} &&
    }
\end{equation}
This simply says that because the concentrations $c$ and $c'$ agree on the `overlap' of our two
open dynamical systems, we can find a concentration $b$ for the composite system that restricts
to $c$ on $S$ and $c'$ on $S'$.

We now prove that $b$ is a steady state of the composite open dynamical system with
inflows $I$ and outflows $O'$:
\begin{equation}
\label{eq:steady_state_3}
   u(b) + (ji)_*(I) - (j'o')_*(O') = 0.
\end{equation}
To do this we use the fact that $c$ is a steady state of $F$ with inflows $I$ and outflows $O$:
\begin{equation}
\label{eq:steady_state_1}
   v(c) + i_*(I) - {o}_*(O) = 0
\end{equation}
and $c'$ is a steady state of $F'$ with inflows $I'$ and outflows $O'$:
\begin{equation}
\label{eq:steady_state_2}
   v'(c') + {i'}_*(I') - {o'}_*(O') = 0.
\end{equation}
We push forward Equation (\ref{eq:steady_state_1}) along $j$, push forward Equation
(\ref{eq:steady_state_2}) along $j'$, and sum them:
\[   j_*(v(c))  + (ji)_*(I) - (jo)_*(O) + j'_*(v'(c')) + (j'i')_*(I') - (j'o')_*(O') = 0. \]
Since $O = I'$ and $jo = j'i'$, two terms cancel, leaving us with
\[     j_*(v(c))  + (ji)_*(I) + j'_*(v'(c')) - (j'o')_*(O') = 0. \]
Next we combine the terms involving the vector fields $v$ and $v'$, with the help of Equation (\ref{eq:pushout}) and the definition of $u$:
\begin{equation}
\label{eq:u}
   \begin{array}{ccl}
  j_*(v(c)) + j'_*(v'(c')) &=& j_*(v(b \circ j)) + j'_*(v'(b \circ j')) \\
                                    &=& (j_* \circ v \circ j^* + j'_* \circ v' \circ j'^*)(b) \\
                                    &=& u(b)  .
\end{array}
\end{equation}
This leaves us with
\[         u(b) +  (ji)_*(I) - (j'o')_*(O') = 0 \]
which is Equation (\ref{eq:steady_state_3}), precisely what we needed to show.

To finish showing that $\blacksquare$ is a functor, we need to show that 
\[   \blacksquare(F'F) \subseteq \blacksquare(F') \blacksquare(F)  .\] 
So, suppose we have 
\[    ((ji)^*(b), I, {(j'o')}^*(b), O') \in \blacksquare(F'F) .\]
We need to show
\begin{equation}
\label{eq:composite}
  ((ji)^*(b), I, {(j'o')}^*(b), O') = (i^*(c),I,{o'}^*(c'),O')
\end{equation}
where 
\[     (i^*(c),I,o^*(c),O) \in \blacksquare(F), \qquad  ({i'}^*(c'),I',{o'}^*(c'),O') \in \blacksquare(F') \]
and
\[   o^*(c) = {i'}^*(c'), \qquad O = I' .\]

To do this, we begin by choosing
\[   c = j^*(b), \qquad c' = {j'}^*(b) .\]
This ensures that Equation (\ref{eq:composite}) holds, and since $jo = j'i'$, it also ensures that 
\[  o^*(c) = (jo)^*(b) = (j'i')^*(b) = {i'}^*(c')  .\]
So, to finish the job, we only need to find an element $O = I' \in \R^Y$ such that $c$ is a steady state of $F$ with inflows $I$ and outflows $O$ and $c'$ is a steady state of $F'$ with inflows $I'$ and outflows $O'$.  Of course, we are given the fact that $b$ is a steady state of $F'F$ with inflows $I$ and outflows $O'$.   

In short, we are given Equation (\ref{eq:steady_state_3}), and we want to find $O = I'$ such that Equations (\ref{eq:steady_state_1}) and (\ref{eq:steady_state_2}) hold.  Thanks to our choices of $c$ and $c'$,  we can use Equation (\ref{eq:u}) and rewrite Equation (\ref{eq:steady_state_3}) as
\begin{equation}
\label{eq:steady_state_3'}
  j_*(v(c) + i_*(I)) \; + \; {j'}_*(v'(c') - {o'}_*(O')) = 0 .  
\end{equation}
Equations  (\ref{eq:steady_state_1}) and (\ref{eq:steady_state_2}) say that
\begin{equation}
\label{eq:steady_state_1'2'}
\begin{array}{lcl}
   v(c) + i_*(I) - {o}_*(O) &=& 0 \\  \\
   v'(c') + {i'}_*(I') - {o'}_*(O') &=& 0.
\end{array}
\end{equation}

Now we use the fact that 
\[
    \xymatrix{
      & S +_Y S' \\
       S \ar[ur]^{j} && S' \ar[ul]_{j'} \\
       & Y \ar[ul]^{o} \ar[ur]_{i'} &
    }
\]
is a pushout.  Applying the `free vector space on a finite set' functor, which preserves colimits, this implies that
\[
    \xymatrix{
      & \R^{S +_Y S'} \\
       \R^S \ar[ur]^{j_*} && \R^{S'} \ar[ul]_{{j'}_*} \\
       & \R^Y \ar[ul]^{o_*} \ar[ur]_{i'_*} &
    }
\]
is a pushout in the category of vector spaces.   Since a pushout is formed by taking first a coproduct and then a coequalizer, this implies that 
\[
     \xymatrix{
      \R^Y \ar@<-.5ex>[rr]_-{(0,i'_*)} \ar@<.5ex>[rr]^-{(o_*,0)} && \R^S \oplus \R^{S'} \ar[rr]^{j_* + j'_*}
   && \R^{S +_Y S'}
}
\]
is a coequalizer.  Thus, the kernel of $j_* + j'_*$ is the image of $(o_*,0) - (0,i'_*)$.   Equation (\ref{eq:steady_state_3'}) says precisely that 
\[    (v(c) + i_*(I), v'(c') - o'_*(O')) \in \ker(j_* + j'_*)  .\]
Thus, it is in the image of $o_* - i'_*$.  In other words, there exists some element $O = I' \in \R^Y$
such that 
\[   (v(c) + i_*(I), v'(c') - o'_*(O')) = (o_*(O), -i'_*(I')).\]
This says that Equations (\ref{eq:steady_state_1}) and (\ref{eq:steady_state_2}) hold, as desired.

Finally, we need to check that $\blacksquare$ is symmetric monoidal.  But this is a straightforward calculation, so we leave it to the reader.
\end{proof}

This theorem can also be proved using Fong's theory of `decorated corelations' \cite{FongThesis,Fong2017}.  Decorated corelations are generalization of decorated cospans well-suited to extracting the externally observable behavior from an open system.    The category 
$\SemiAlgRel$ is an example of a decorated corelation category that is not a decorated cospan category.

\section{Conclusions}
\label{sec:conclusions}

It is worth comparing our black-boxing theorem, Theorem \ref{thm:black}, to Spivak's work on open dynamical systems \cite{Spivak}.  He describes various  categories where the morphisms are open dynamical systems, and constructs functors from these categories to $\Rel$, which describe the steady state relations between inputs and outputs.  None of his results subsume ours, but they are philosophically very close.  Both are doubtless special cases of a more general theorem that is yet to be formulated.

It is also worth comparing Theorem \ref{thm:black} to our previous black-boxing theorem for open Markov processes \cite{BaezFongPollard}.  In this previous work, we began by describing a category $\Mark$ whose morphisms are equivalence classes of open Markov processes.  In fact $\Mark$ is equivalent to the subcategory of $\RxNet$ with all the same objects but only those morphisms coming from open reaction networks for which each transition has exactly one input and one output.  There is
thus an inclusion of categories
\[        I \maps \Mark \to \RxNet .\]

We did not construct a black-boxing functor for $\Mark$, but only for a related category $\DetBalMark$ where the morphisms are equivalence classes of `detailed balanced' Markov process.  Since the differential equation describing a Markov process is linear, we obtained a black-boxing functor
\[  \square \maps \DetBalMark \to \LinRel  \]
where $\LinRel$ is the category of vector spaces and \emph{linear} relations.    

Since every open detailed balanced Markov process has an underlying open Markov process, there is a forgetful functor 
\[     F \maps \DetBalMark \to \Mark . \]
Since every linear relation is a semialgebraic relation, there is also a functor
\[     U \maps \LinRel \to \SemiAlgRel . \]
Combining all the results mentioned so far, we obtain this diagram:
\[ 
\xymatrix{
\DetBalMark \ar[r]^-F \ar[ddr]_{\square} & \Mark \ar[r]^I & \RxNet  \ar[d]^{\graysquare} \\ 
& &   \Dynam \ar[d]^{\blacksquare} \\
& \LinRel \ar[r]_-U & \SemiAlgRel 
}
\]
This commutes up to natural isomorphism, because our previous black-boxing functor $\square$ agrees with our new procedure in the special case of open detailed balanced Markov processes.  Furthermore, our new procedure assigns a \emph{linear} relation to any open dynamical system coming from an open Markov processes.  Thus, there is a functor 
\[           \darkgraysquare \maps \Mark \to \LinRel  \]
making this diagram commute up to natural isomorphism:
\[ 
\xymatrix{
\DetBalMark \ar[r]^-F \ar[ddr]_{\square} & \Mark \ar[r]^I  \ar[dd]^{\darkgraysquare} & \RxNet  \ar[d]^{\graysquare} \\ 
& &   \Dynam \ar[d]^{\blacksquare} \\
& \LinRel \ar[r]_-U & \SemiAlgRel 
}
\]
Our previous work obtained the functor $\square \maps \DetBalMark \to \LinRel$ by reducing
detailed balanced Markov processes to electrical circuits made of resistors and using an already established procedure for black-boxing these \cite{BaezFong}.   We constructed a triangle of functors that commutes up to natural isomorphism:
\[ 
\xymatrix{
\DetBalMark \ar[dd]_K \ar[ddr]^{\square}  \\ 
& &   \\
\Circ \ar[r]_{\blacksquare} & \LinRel 
}
\]
Here $\Circ$ is a category whose morphisms are equivalence classes of open circuits made of
resistors, and $\vdarkgraysquare \maps \Circ \to \LinRel$ is the black-boxing functor for these.

Combining all these results, we obtain the following diagram of functors:
\[ 
\xymatrix{
\DetBalMark \ar[dd]_K \ar[r]^-F \ar[ddr]_{\square} & \Mark \ar[r]^I  \ar[dd]^{\darkgraysquare} & \RxNet  \ar[d]^{\graysquare} \\ 
& &   \Dynam \ar[d]^{\blacksquare} \\
\Circ \ar[r]_{\vdarkgraysquare} & \LinRel \ar[r]_-U & \SemiAlgRel 
}
\]
This diagram, which commutes up to natural isomorphism, makes our current understanding of these various subjects seem a bit more unified than it actually is.  Steady states of circuits made of resistors obey a variational principle: the principle of least power \cite{BaezFong}.  Similarly, steady states of detailed balanced Markov processes obey the principle of minimum dissipation \cite{BaezFongPollard}.  We used these variational principles to define the black-boxing functors for these systems, and as a result we obtained not merely linear relations between vector spaces, but \emph{Lagrangian} relations between \emph{symplectic} vector spaces.  As far as we know, the steady states of more general Markov processes, or reaction networks with rates, do not obey any extremal principle. 

\appendix
\section{Decorated cospan categories}
\label{sec:deccospan}

For ease of reference, we recall some definitions and theorems from Fong's work on decorated cospans \cite{Fong2015,FongThesis}.  We assume some familiarity with basic category theory, such as colimits and symmetric monoidal categories \cite{MacLane}.
We need the concept of lax symmetric monoidal functor:

\begin{defn} \label{defn.lmf}
  Let $(\CC,\boxtimes)$ and $(\D,\otimes)$ be symmetric monoidal categories. A \define{lax symmetric monoidal functor} 
  \[
    (F,\varphi) \maps (\CC,\boxtimes) \to (\D,\otimes)
  \]
  is a functor $F \maps \CC \to \D$ equipped with a natural transformation 
  \[
    \varphi_{-,-} \maps F(-)\otimes F(-) \Rightarrow F(-\boxtimes -)
  \]
and a morphism
  \[
    \varphi \maps I_\D \to F(I_{\CC})
  \]
where $I_\CC$ and $I_\D$ are the unit objects for $\CC$ and $\D$, 
  such that four diagrams commute.  These diagrams are the \define{associator hexagon}:
 \[
   \xymatrix{ F(A) \otimes (F(B) \otimes F(C)) \ar[d]_{\mathrm{id} \otimes \varphi_{B,C}} 
                    \ar[rr]^{\sim} &&
                    (F(A) \otimes F(B)) \otimes F(C) \ar[d]^{\varphi_{A,B}
		    \otimes \mathrm{id}}  \\
                    F(A) \otimes F(B \boxtimes C) \ar[d]_{\varphi_{A,B\boxtimes C}} &&
                     F(A \boxtimes B) \otimes F(C) \ar[d]^{\varphi_{A\boxtimes B,C}} \\
                     F(A \boxtimes (B\boxtimes C)) \ar[rr]^{\sim} && 
                     F((A \boxtimes B) \boxtimes C)                    
    }
  \]
where the horizontal arrows come from the associators for $\otimes$ and
$\boxtimes$, the \define{left and right unitor squares}:
\[ 
  \xymatrix{
    1_\D \otimes F(A) \ar[r]^-{\varphi \otimes \mathrm{id}} \ar[d]_{\sim} &
    F(1_{\CC}) \otimes F(A) \ar[d]^{\varphi_{1_\CC,A}} \\
    F(A) & F(1_\CC \boxtimes A) \ar[l]_{\sim}
  }
\]
and
\[ 
  \xymatrix{
    F(A) \otimes 1_\D \ar[r]^-{\mathrm{id} \otimes \varphi} \ar[d]_{\sim} &
    F(A) \otimes F(1_\CC) \ar[d]^{\varphi_{A,1_\CC}} \\
    F(A) & F(A \boxtimes 1_\CC) \ar[l]_{\sim}
  }
\]
where the isomorphisms come from the unitors for $\otimes$ and $\boxtimes$, and
the \define{braiding square}:
\[ 
  \xymatrix{
    F(A) \boxtimes F(B) \ar[r]^-{\varphi_{A,B}} \ar[d]_{\sim} &
    F(A \otimes B) \ar[d]^{\sim} \\
    F(B) \boxtimes F(A) \ar[r]^-{\varphi_{B,A}} & F(B \otimes A) 
  }
\]
where the isomorphisms come from the braidings for $\otimes$ and $\boxtimes$.
\end{defn}

Let $(\Set, \times)$ denote the category of sets and functors made into a symmetric monoidal category with the cartesian product as its tensor product.  We write $1$ for a chosen one-element set serving as the unit for this tensor product.  Given a category $\CC$ with finite colimits, we let $(\CC,+)$ denote this category made into a symmetric monoidal category with the coproduct as its tensor product, and write $0$ for a chosen initial object serving as the unit of this tensor product.

The decorated cospan construction is then as follows.  Suppose $\CC$ is a category with finite colimits and 
  \[
    (F,\varphi)\maps (\CC,+) \longrightarrow (\Set, \times)
  \]
   is a lax symmetric monoidal functor.  

\begin{defn}
   Given objects $X, Y \in \CC$, an \define{$F$-decorated cospan} from $X$ to $Y$ 
  is a pair consisting of a cospan $X \stackrel{i}\longrightarrow S \stackrel{o}\longleftarrow Y$ in $\CC$ and an element $d \in F(S)$.  We call $d$ the \define{decoration}.
\end{defn}

\begin{defn}
Given two decorated cospans
 \[
    (X \stackrel{i}\longrightarrow S \stackrel{o}\longleftarrow Y, d) 
    \quad \textrm{ and } \quad
    (Y \stackrel{i'}\longrightarrow S' \stackrel{o'}\longleftarrow Z, d'), 
  \]
their \define{composite} is the cospan from $X$ to $Y$ constructed via a pushout as follows:
  \[
    \xymatrix{
      && S +_Y S' \\
      & S \ar[ur]^{j} && S' \ar[ul]_{j'} \\
      \quad X\quad \ar[ur]^{i} && Y \ar[ul]_{o} \ar[ur]^{i'} &&\quad Z \quad \ar[ul]_{o'}
    }
  \]
together with the decoration obtained by applying the map
\[      
\xymatrix{      F(S) \times F(S') \ar[rr]^-{\varphi_{S,S'}} && 
                     F(S + S') \ar[rr]^-{F([j,j'])} && F(S +_Y S') } \]
to the pair $(d,d') \in F(S) \times F(S')$.  Here $[j,j'] \maps S + S' \to S +_Y S'$ is the
canonical morphism from the coproduct to the pushout.
\end{defn}

Composition of decorated cospans is not associative until we quotient by the following 
equivalence relation.   It is straightforward to check that composition of decorated cospans is well-defined at the level of equivalence classes.

\begin{defn}
Two decorated
cospans from $X \in \CC$ to $Y \in \CC$
\[
   (X \stackrel{i}\longrightarrow S \stackrel{o}\longleftarrow Y, d) 
 \quad \textrm{ and } \quad
 (X \stackrel{i'}\longrightarrow S' \stackrel{o'}\longleftarrow Y, d')
  \]
are said to be \define{equivalent} if there is an isomorphism $f \maps S \to S'$ such that $F(f)(d) = d'$ and this diagram commutes:
\[
  \xymatrix{
    & S \ar[dd]^n  \\
    X \ar[ur]^{i} \ar[dr]_{i'} && Y \ar[ul]_{o} \ar[dl]^{o'}\\
    & S'.
  }
\]
\end{defn}

\begin{lem}[\textbf{Fong}] \label{lemma:fcospans}
There is a category $F\Cospan$, whose objects are those of $\CC$ and whose
morphisms from $X \in \CC$ to $Y \in \CC$ are equivalence classes of decorated cospans from $X$ to $Y$, composed as above.  
\end{lem}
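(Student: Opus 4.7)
The plan is to verify the three category axioms---existence of identities, well-definedness of composition on equivalence classes, and associativity---for the proposed structure $F\Cospan$.

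First I would pin down the identity morphism. For each object $X \in \CC$, I take the identity cospan to be $X \stackrel{1_X}{\longrightarrow} X \stackrel{1_X}{\longleftarrow} X$ decorated by $F(!_X)(\varphi(*))$, where $!_X \maps 0 \to X$ is the unique morphism from the initial object and $\varphi \maps 1 \to F(0)$ is the unit laxator. To verify that this is a two-sided identity, I would compose it with an arbitrary decorated cospan $(X \stackrel{i}{\to} S \stackrel{o}{\leftarrow} Y, d)$: the underlying pushout collapses to $S$ (up to the canonical isomorphism $X +_X S \cong S$), and the composite decoration is assembled by applying $\varphi_{X,S}$ and then $F$ of the canonical map. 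The left unitor square in Definition \ref{defn.lmf} together with the naturality of $\varphi_{-,-}$ is exactly what forces this recombined decoration to agree with $d$ under the identifying isomorphism; the right unitor square handles the other side.

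Second, I would show composition is well-defined on equivalence classes. Suppose $(X \to S \leftarrow Y, d) \sim (X \to \tilde{S} \leftarrow Y, \tilde{d})$ via an isomorphism $f \maps S \to \tilde{S}$ with $F(f)(d) = \tilde{d}$, and similarly $(Y \to S' \leftarrow Z, d') \sim (Y \to \tilde{S}' \leftarrow Z, \tilde{d}')$ via $f'$. The universal property of the pushout produces an induced isomorphism $f +_Y f' \maps S +_Y S' \to \tilde{S} +_Y \tilde{S}'$ compatible with the inclusions of $X$ and $Z$. Functoriality of $F$ together with naturality of $\varphi_{-,-}$ then shows that $F(f +_Y f')$ carries the composite decoration $(F[j,j']\circ\varphi_{S,S'})(d,d')$ to $(F[\tilde{j},\tilde{j}']\circ\varphi_{\tilde{S},\tilde{S}'})(\tilde{d},\tilde{d}')$, which is exactly the equivalence required.

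Third, and most substantively, I would establish associativity. Given three composable decorated cospans with apices $S_1, S_2, S_3$ glued along $Y_1$ and $Y_2$, both bracketings yield cospans whose apices are canonically isomorphic via the universal property of iterated pushouts---this is just associativity of colimits in $\CC$. The real content is verifying that the two chains of operations building the composite decoration---one of the form $F([j,j'']) \circ \varphi_{S_1 +_{Y_1} S_2, S_3} \circ (\varphi_{S_1,S_2} \times \id)$, the other with the bracketing shifted---agree modulo $F$ of the canonical isomorphism of apices. This is precisely the conclusion of the associator hexagon in Definition \ref{defn.lmf} applied in $(\Set,\times)$, combined with the fact that $F$ sends the coproduct-associator in $(\CC,+)$ to its image.

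The main obstacle is this last step: keeping the bookkeeping straight as decorations are transported along iterated pushouts, and correctly identifying which instance of the associator hexagon, which naturality square for $\varphi_{-,-}$, and which functoriality of $F$ is invoked at each juncture. Once these coherences are assembled, associativity at the level of equivalence classes is a direct consequence, and the category $F\Cospan$ is then well-defined.
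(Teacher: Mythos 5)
Your proposal is correct in approach, but note that the paper itself does not prove this lemma at all: its ``proof'' is a citation to Proposition 3.2 of Fong's decorated cospans paper, so what you have written is essentially a reconstruction of the argument in the cited source rather than of anything in this paper. Your reconstruction matches that argument: the identity on $X$ is the identity cospan decorated by $F(!_X)(\varphi(\ast))$ with $!_X \maps 0 \to X$ the map from the initial object, the unit laws follow from the unitor squares of Definition \ref{defn.lmf} together with naturality of $\varphi_{-,-}$ and the collapse of the pushout $X +_X S \cong S$; well-definedness on equivalence classes follows from the pushout-induced isomorphism $f +_Y f'$, naturality of $\varphi_{-,-}$, and functoriality of $F$ (since $g \circ [j,j'] = [\tilde\jmath,\tilde\jmath\,'] \circ (f + f')$, one gets $F(g)$ carrying one composite decoration to the other); and associativity combines associativity of pushouts with the associator hexagon. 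One small precision for the last step: the decoration of the bracketing $(S_1 +_{Y_1} S_2) +_{Y_2} S_3$ is built by the chain $F([k,k']) \circ \varphi_{S_1 +_{Y_1} S_2,\, S_3} \circ \bigl(\bigl(F([j,j']) \circ \varphi_{S_1,S_2}\bigr) \times \id\bigr)$, so the factor $F([j,j']) \times \id$ that your formula omits must be commuted past $\varphi$ using its naturality before the hexagon can be applied; that is exactly the bookkeeping you flag as the main obstacle, and it does go through, so there is no genuine gap---only the coherence-chasing that constitutes Fong's actual proof remains to be written out.
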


\begin{proof}
This is Proposition 3.2 in Fong's paper on decorated cospans \cite{Fong2015}.
\end{proof}

In fact $F\Cospan$ is much better than a mere category.  To begin with, it is
a monoidal category, where the tensor product of objects is the disjoint union of sets and the tensor product of morphisms arising from two decorated cospans 
\[
    (X \stackrel{i}{\longrightarrow} S \stackrel{o}{\longleftarrow} Y, d) 
    \quad \textrm{ and } \quad
    (X' \stackrel{i'}{\longrightarrow} S' \stackrel{o'}{\longleftarrow} Y', d')
  \]
is the morphism associated to this decorated cospan:
\[  ( X + X' \stackrel{i+i'}{\longrightarrow} S + S'  \stackrel{o + o'}{\longleftarrow} Y + Y', \;
\varphi_{S,S'}(d,d') ) .\]
Fong goes further: he proves that $F\Cospan$ is a specially nice sort of 
symmetric monoidal category, called a `dagger compact category',
in which any morphism $f \maps A \to B$ can be turned around to give a 
morphism $f^\dagger \maps B \to A$.  The dagger of 
\[
    (X \stackrel{i}{\longrightarrow} S \stackrel{o}{\longleftarrow} Y, d) 
\]
is simply 
\[
    (Y \stackrel{o}{\longrightarrow} S \stackrel{i}{\longleftarrow} X, d) 
\]
but some properties must hold for this operation to give a dagger compact category.

The importance of such categories was highlighted in Abramsky, Coecke and 
Selinger's work on quantum theory \cite{AC,Selinger}.   But in fact, 
$F\Cospan$ is a particularly well-behaved sort of dagger compact category: a 
`hypergraph category'.   These were first introduced by Carboni \cite{Carboni} under another name, `well-supported compact closed categories'.  They were renamed `hypergraph categories' by Kissinger \cite{Kissinger}, in reference to the fact that such categories contain morphisms corresponding to arbitrary hypergraphs: that is, diagrams with wires connecting any number of inputs to any number of outputs.   The definition and significance of all these concepts are clearly explained in Fong's thesis \cite{FongThesis}.  

\begin{thm}[\textbf{Fong}] \label{thm:fcospans}
  Suppose $\CC$ is a category with finite colimits and 
  \[
    (F,\varphi)\maps (\CC,+) \longrightarrow (\Set, \times)
  \]
is a lax symmetric monoidal functor.  Then the category $F\Cospan$ is a symmetric
monoidal category, and in fact a hypergraph
category.
\end{thm}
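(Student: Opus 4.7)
The plan is to follow Fong's decorated cospan recipe, bootstrapping the symmetric monoidal and hypergraph structure on $F\Cospan$ from the coproduct structure of $\CC$, with all additional data supplied by the coherence of the lax monoidal functor $F$. I would define the tensor product on objects as the coproduct in $\CC$ and on morphisms by placing two decorated cospans side by side: the cospan with apex $S+S'$, legs $i+i'$ and $o+o'$, decorated by $\varphi_{S,S'}(d,d') \in F(S+S')$. The first sanity check is that this operation descends to equivalence classes---if isomorphisms $f\maps S\to T$ and $f'\maps S'\to T'$ witness equivalences, then $f+f'$ witnesses equivalence of the tensor products, and $F(f+f')\bigl(\varphi_{S,S'}(d,d')\bigr)=\varphi_{T,T'}(F(f)(d),F(f')(d'))$ by naturality of $\varphi$.

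Next I would verify functoriality of $\otimes$ and construct the symmetric monoidal coherence data. The interchange law reduces at the cospan level to the fact that pushouts commute with coproducts in $\CC$, so $(S+_Y T)+(S'+_{Y'} T') \cong (S+S')+_{Y+Y'}(T+T')$ canonically, while at the level of decorations it follows from naturality of $\varphi$ together with the coherence axioms of Definition~\ref{defn.lmf}. The associator, unitors, and braiding in $F\Cospan$ come from the corresponding universal isomorphisms of $(\CC,+)$, packaged as cospans with one leg the identity and decorated by pushing the canonical element $\varphi\maps 1\to F(0)$ forward along the initial morphism into the apex. The pentagon, triangle, and hexagon axioms each unpack to two synchronized diagrams---one of cospans in $\CC$, commuting by coherence of $(\CC,+)$, and one of decorations in $\Set$, commuting by the associator hexagon, unitor squares, and braiding square of the lax structure on $F$.

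For the hypergraph enhancement, I would equip each object $X$ with the canonical special commutative Frobenius algebra available on any object of a cospan category: multiplication has apex $X$ with legs $[\id,\id]\maps X+X \to X$ and $\id\maps X\to X$; unit has apex $X$ with legs $!\maps 0\to X$ and $\id\maps X\to X$; comultiplication and counit are their daggers, obtained by swapping the two legs. All four are decorated by pushing $\varphi\maps 1\to F(0)$ forward along the relevant initial morphism into $X$. The Frobenius, commutativity, and specialness axioms reduce to small pushout computations in $\CC$ paired with tracking of trivial decorations, both of which commute for formal reasons. Compatibility of this Frobenius structure with tensor products---needed to assemble the data into a genuine hypergraph category---follows once more from the coherence axioms of $F$, in particular the braiding square.

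The main obstacle throughout is bookkeeping rather than depth: each axiom splits into a cospan-level diagram that is handled by universal properties of pushouts and coproducts in $\CC$, and a decoration-level diagram that is handled by $F$'s lax symmetric monoidal coherence. The actual work consists in keeping these two layers synchronized through every composition and every coherence isomorphism, systematically reducing to naturality of $\varphi$ and the three diagrams in Definition~\ref{defn.lmf}. Once this reduction is organized, the proof collapses to a finite list of diagram chases, which is exactly the content of Fong's argument in \cite{Fong2015}.
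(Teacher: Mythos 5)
Your outline is correct and follows exactly the route of the proof the paper relies on: the paper itself offers no argument beyond citing Theorem 3.4 of \cite{Fong2015}, and your two-layer scheme (cospan-level diagrams handled by coproducts and pushouts in $\CC$, decoration-level diagrams handled by naturality of $\varphi$ and the coherence axioms of Definition~\ref{defn.lmf}, with the hypergraph structure given by the trivially decorated Frobenius cospans on each object) is a faithful reconstruction of Fong's argument. So this is essentially the same approach as the paper's, just spelled out rather than cited.
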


\begin{proof}  
This is Theorem 3.4 in Fong's paper on decorated cospans \cite{Fong2015}.
\end{proof}

Decorated cospan categories help us understand diagrams of open systems, and
how to manipulate them.  We also need the ability to take such diagrams and read off 
their `meaning', for example via the rate equation.  For this we need functors between
decorated cospan categories.   Fong found a way to construct these from monoidal natural transformations \cite{Fong2015}.   Moreover, he proved that the resulting functors are actually  `hypergraph functors', meaning the sort that preserves all the structure of a hypergraph category.  As a corollary they are a symmetric monoidal dagger functors.

\begin{defn}
\label{def:monnattran}
  A \define{monoidal natural transformation} $\alpha$ from a lax symmetric monoidal
  functor
  \[    (F,\varphi)\maps (\CC,\otimes) \longrightarrow (\Set,\times)
  \]
  to a lax symmetric monoidal functor
  \[
    (G,\gamma)\maps (\CC,\otimes) \longrightarrow (\Set,\times)
  \]
  is a natural transformation $\alpha\maps F \Rightarrow G$ such that
  \[
    \xymatrix{
      F(A) \times F(B) \ar[r]^-{\varphi_{A,B}} \ar[d]_{\alpha_A \times
      \alpha_B} & F(A \otimes B) \ar[d]^{\alpha_{A\otimes B}} \\
      G(A) \times G(B) \ar[r]^-{\gamma_{A,B}} & G(A \otimes B)
    }
  \]
  commutes.
\end{defn}

\begin{thm}[\textbf{Fong}] \label{thm:decoratedfunctors}
  Let $\CC$ be a category with finite colimits and let
  \[
    (F,\varphi) \maps (\CC,+) \longrightarrow (\Set,\times)
  \]
  and
  \[
    (G,\gamma) \maps (\CC,+) \longrightarrow (\Set,\times)
  \]
  be lax symmetric monoidal functors. This gives rise to decorated cospan
  categories $F\Cospan$ and $G\Cospan$. 

  Suppose furthermore that we have a monoidal natural transformation $\theta_S \maps F(s) \to G(s)$.   Then there is a symmetric monoidal functor 
  \[
    T \maps F\Cospan \longrightarrow G\Cospan
  \]
  mapping each object $X \in F\Cospan$ to the object $X \in G\Cospan$ and
  each morphism represented by an $F$-decorated cospan
  \[
    (X \stackrel{i}\longrightarrow S \stackrel{o}\longleftarrow Y, d)
  \]
  to the morphism represented by the $G$-decorated cospan
  \[
    (X \stackrel{i}{\longrightarrow} S \stackrel{o}{\longleftarrow} Y,
    \; 
    \theta_S(d) ).
  \]
In fact $T$ is a hypergraph functor.
\end{thm}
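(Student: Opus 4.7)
The plan is to define $T$ on objects as the identity and on morphisms by the assignment given in the statement, and then to verify the successive properties: well-definedness on equivalence classes, functoriality, monoidality, and preservation of the hypergraph structure. Since every ingredient of $F\Cospan$ is built in a canonical way from $F$, and the morphism of decorated cospan data is implemented entirely by a single natural transformation $\theta$, each verification reduces to a diagram chase using either naturality of $\theta$ or its compatibility with the laxators $\varphi$ and $\gamma$.

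First I would check well-definedness. If two $F$-decorated cospans from $X$ to $Y$ are equivalent via an isomorphism $f \maps S \to S'$ satisfying $F(f)(d) = d'$ and the usual commuting triangles, then applying $\theta$ and using the naturality square
\[ \xymatrix{ F(S) \ar[r]^{F(f)} \ar[d]_{\theta_S} & F(S') \ar[d]^{\theta_{S'}} \\ G(S) \ar[r]_{G(f)} & G(S') } \]
gives $G(f)(\theta_S(d)) = \theta_{S'}(d')$, so the images are equivalent as $G$-decorated cospans. Identities are preserved because the identity cospan on $X$ carries the trivial decoration $F(!_X)(\varphi(*)) \in F(X)$ coming from the unique map $0 \to X$, and by the unit square in the definition of a monoidal natural transformation this is sent by $\theta_X$ to the corresponding trivial $G$-decoration.

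The key step is functoriality under composition. Given composable decorated cospans $(X \to S \leftarrow Y, d)$ and $(Y \to S' \leftarrow Z, d')$, the decoration of their composite is $F([j,j'])(\varphi_{S,S'}(d,d')) \in F(S +_Y S')$, and I must show that $\theta_{S +_Y S'}$ applied to this equals $G([j,j'])(\gamma_{S,S'}(\theta_S(d), \theta_{S'}(d')))$. This follows from a two-step chase: applying naturality of $\theta$ at the morphism $[j,j'] \maps S + S' \to S +_Y S'$ reduces the problem to showing $\theta_{S+S'} \circ \varphi_{S,S'} = \gamma_{S,S'} \circ (\theta_S \times \theta_{S'})$, which is precisely the square in Definition \ref{def:monnattran}. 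This is the main obstacle in the sense that it is the step whose correctness is the entire reason monoidality of $\theta$ is assumed; everything else is mechanical.

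Finally, to establish that $T$ is symmetric monoidal, I would verify on tensor products of decorated cospans that
\[ \theta_{S+S'}(\varphi_{S,S'}(d,d')) = \gamma_{S,S'}(\theta_S(d),\theta_{S'}(d')), \]
which is again the monoidality square of $\theta$, together with the analogous check on the monoidal unit (the unit triangle of $\theta$). The coherence hexagons and unit triangles for $T$ reduce to those of $F\Cospan$ and $G\Cospan$ because $T$ acts as the identity on the underlying cospans. For the hypergraph structure, the special commutative Frobenius algebra on each object $X$ in $F\Cospan$ is built entirely from the codiagonal $X + X \to X$ and unit $0 \to X$ in $\CC$, decorated by the distinguished elements coming from $\varphi$ and the laxator unit map; monoidality of $\theta$ ensures these distinguished decorations are sent to the corresponding distinguished decorations in $G\Cospan$, so $T$ preserves the Frobenius structures on the nose and is therefore a hypergraph functor, as claimed.
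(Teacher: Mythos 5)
Your proposal is correct, but note that the paper itself does not argue this theorem at all: its ``proof'' is a one-line citation of Theorem 4.1 of Fong \cite{Fong2015}. What you have written is essentially a sketch of Fong's own argument, so the comparison is between your direct verification and the paper's delegation. Your decomposition is the right one: well-definedness on equivalence classes and functoriality both come from naturality of $\theta$ (at the isomorphism $f$ and at the canonical map $[j,j'] \maps S+S' \to S+_Y S'$, respectively), while matching the decorations of composites and of tensor products is exactly the square of Definition \ref{def:monnattran}; this correctly isolates where monoidality of $\theta$ is indispensable. The one point worth flagging is that the paper's Definition \ref{def:monnattran} records only the binary square and omits the unit condition $\theta_0 \circ \varphi = \gamma$ relating the maps $1 \to F(0)$ and $1 \to G(0)$; your argument genuinely uses this condition, both for preservation of identities (whose decoration is $F(!_X)(\varphi(\ast))$ for $!_X \maps 0 \to X$) and for preservation of the Frobenius units and counits in the hypergraph structure. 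So you are implicitly invoking the full, standard notion of monoidal natural transformation (which is what Fong uses), not the truncated version stated in the paper; with that understood, your outline is complete, and what it buys is self-containedness at the cost of reproving what the paper cites.
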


\begin{proof}
This is a special case of Theorem 4.1 in Fong's paper on decorated cospans \cite{Fong2015}.  
\end{proof}


\begin{thebibliography}{9}

\bibitem{AC}     S.\ Abramsky and B.\ Coecke, A categorical semantics of quantum protocols, in \textsl{Proceedings of the 19th IEEE Conference on Logic in
 Computer Science} (LICS 2004), IEEE Computer Science Press, 2004, pp.\ 415--425.  Available as \href{http://arxiv.org/abs/quant-ph/0402130}{arXiv:quant-ph/0402130}.

\bibitem{A} R.\ Aris, Prolegomena to the rational analysis of systems
of chemical reactions, \textsl{Arch.\ Rational Mech.\ Anal.} {\bf 19} (1965), 81--99.

\bibitem{BaezBiamonte} J.\ C.\ Baez and J.\ D.\ Biamonte, Quantum techniques for stochastic mechanics. Available as \href{http://arxiv.org/abs/1209.3632}{arXiv:1209.3632}.

\bibitem{BaezFong} J.\ C.\ Baez and B.\ Fong,  A compositional framework for passive linear networks.   Available as \href{http://arxiv.org/abs/1504.05625}{arXiv:1504.05625}.

\bibitem{BaezFongPollard}  J.\ C.\ Baez, B.\ Fong and B.\ Pollard, A compositional framework for Markov processes, \textsl{Jour.\ Math.\ Phys.} \textbf{57} (2016), 033301.  Available as \href{http://arxiv.org/abs/1508.06448}{arXiv:1508.06448}.

\bibitem{BanajiCraciun} M.\ Banaji and G.\ Craciun, Graph-theoretic criterion for injectivity and unique equilibria in general chemical reaction systems, \textsl{Adv.\ Appl.\ Math.} \textbf{44} (2010), 168--184. Available as \href{http://arxiv.org/abs/0809.1308}{arXiv:0809.1308}.

\bibitem{BruniMeseguerMontanariSassone} R.\ Bruni, J.\ Meseguer, U.\ Montanari and V.\ Sassone, Functorial models for Petri nets, \textsl{Information and Computation} \textbf{170} (2001), 207--236.

\bibitem{Carboni} A.\ Carboni, Matrices, relations and group representations, 
\textsl{J.\ Algebra} \textbf{138} (1991), 497--529.

\bibitem{Coste} M.\ Coste, An introduction to semialgebraic geometry, RAAG Network School, 2002.  Available at \hfill \break \href{http://gcomte.perso.math.cnrs.fr/M2/CosteIntroToSemialGeo.pdf}{http://gcomte.perso.math.cnrs.fr/M2/CosteIntroToSemialGeo.pdf}.

\bibitem{Courser} K.\ Courser, A bicategory of decorated cospans, to appear in \textsl{Theory Appl.\ Cat}.  Available as \href{https://arxiv.org/abs/1605.08100}{arXiv:1605.08100}.

\bibitem{Craciun} G.\ Craciun, Toric differential unclusions and a proof of the global attractor conjecture.  Available as \href{https://arxiv.org/abs/1501.02860}{arXiv:1501.02860}.

\bibitem{CraciunFeinbergTang} G.\ Craciun, M.\ Feinberg and Y.\ Tang, Understanding bistability in complex enzyme-driven reaction networks, \textsl{Proc.\ Natl.\ Acad.\ Sci.\ USA} \textbf{103} (2006), 8697--8702. Available at  \hfill \break \href{http://www.pnas.org/content/103/23/8697.abstract}{http://www.pnas.org/content/103/23/8697.abstract}.

\bibitem{Feinberg} M.\ Feinberg, Complex balancing in general kinetic systems. \textsl{Arch.\ Rational Mech.\ Anal.} \textbf{49}, 187--194, (1972). 

\bibitem{FeinbergHorn} M.\ Feinberg and F.\ J.\ Horn, Chemical mechanism structure and the coincidence of the stochiometric and kinetic subspaces, \textsl{Arch.\ Rational Mech.\ Anal.} \textbf{66} (1977), 83--97.

\bibitem{Feinberg1995a} M.\ Feinberg, The existence and uniqueness of steady states for a class of chemical reaction networks, \textsl{Arch.\ Rational Mech.\ Anal.} \textbf{132} (1995), 311--370.  

\bibitem{Feinberg1995b} M.\ Feinberg, Multiple steady states for chemical reaction networks of deficiency one, \textsl{Arch.\ Rational Mech.\ Anal.} \textbf{132} (1995), 371--406. 

\bibitem{FeinbergShinar}  M.\ Feinberg and G.\ Shinar, Concordant chemical reaction networks and the species-reaction graph, \textsl{Math.\ Bio.} \textbf{241} (2013), 1--23.

\bibitem{Fong2015} B.\ Fong, Decorated cospans, \textsl{Theory Appl.\ Cat.} \textbf{30} (2015), 1096--1120. Available at \href{http://www.tac.mta.ca/tac/volumes/30/33/30-33.pdf}{http://www.tac.mta.ca/tac/volumes/30/33/30-33.pdf}.

\bibitem{FongThesis} B.\ Fong, \textsl{The Algebra of Open and Interconnected Systems}, Ph.D. thesis, University of Oxford, 2016.  Available as \href{https://arxiv.org/abs/1609.05382}{arXiv:1609.05382}.

\bibitem{Fong2017} B.\ Fong, Decorated corelations.  Available as \href{https://arxiv.org/abs/1703.09888}{arXiv:1703.09888}.

\bibitem{GossPeccoud} P.\ J.\ E.\ Goss and J.\ Peccoud, Quantitative modeling of stochastic systems in molecular biology by using stochastic Petri nets, \textsl{Proc.\ Natl.\ Acad.\ Sci.\ USA} \textbf{98} (1998), 6750--6755.  Available as \href{http://www.pnas.org/content/95/12/6750.full.pdf}{http://www.pnas.org/content/95/12/6750.full.pdf}.

\bibitem{Haas} P.\ J.\ Haas, \textsl{Stochastic Petri Nets: Modelling, Stability, Simulation}, Springer, Berlin, 2002.

\bibitem{JK} K.\ Jensen and L.\ M.\ Kristensen, \textsl{Coloured Petri Nets: Modelling and Validation of Concurrent Systems}, Springer, Berlin, 2009.

\bibitem{KatisSabadiniWalters} P.\ Katis, N.\ Sabadini and R.\ F.\ C.\ Walters, Representing place/transition nets in Span(Graph), in \textsl{Algebraic Methodology and Software Technology}, ed.\ M.\ Johnson, Springer, Berlin, 2005, pp.\ 322--336.  Available at \href{http://citeseerx.ist.psu.edu/viewdoc/summary?doi=10.1.1.51.3364}{http://citeseerx.ist.psu.edu/viewdoc/summary?doi=10.1.1.51.3364}.

\bibitem{Kissinger} A.\ Kissinger, Finite matrices are complete for (dagger-)hypergraph categories.  Available as \href{http://arxiv.org/abs/1406.5942}{arXiv:1406.5942}.

\bibitem{K} I.\ Koch, Petri nets---a mathematical formalism to analyze 
chemical reaction networks. \textsl{Molecular Informatics} \textbf{29} (2010),
838--843.

\bibitem{KRS} I.\ Koch, W.\ Reisig and F.\ Schreiber, \textsl{Modeling
in Systems Biology}, Springer, Berlin, 2011.

\bibitem{MacLane} S.\ Mac Lane, \textsl{Categories for the Working Mathematician},
Springer, Berlin, 1998.

\bibitem{MBCDF}
M.\ A.\ Marsan, G.\ Balbo, G.\ Conte, S.\ Donatelli and G.\ Franceschinis,
\textsl{Modelling with Generalized Stochastic Petri Nets}, Wiley, New York, 1995.

\bibitem{MeseguerMontanari} J.\ Meseguer and U.\ Montanari, Petri nets are monoids, \textsl{Inform. and Comput.} \textbf{88} (1990), 105--155. Available at \href{http://www.sciencedirect.com/science/article/pii/0890540190900138}{http://www.sciencedirect.com/science/article/pii/0890540190900138}.

\bibitem{Peterson}
J.\ L.\ Peterson, \textit{Petri Net Theory and the Modeling of Systems}, 
Prentice--Hall, New Jersey, 1981.  

\bibitem{PR}
C.\ A.\ Petri and W.\ Reisig, Petri net, \textsl{Scholarpedia}, 2008.  Available at
\href{http://www.scholarpedia.org/article/Petri\_net}{http://www.scholarpedia.org/article/Petri\_net}.

\bibitem{Sassone} V.\ Sassone, On the category of Petri net computations, \textsl{Proc.\ TAPSOFT '95}, Springer, Berlin, 1995, pp.\ 334--348. Available as \href{http://eprints.soton.ac.uk/261951/1/strong-conf.pdf}{http://eprints.soton.ac.uk/261951/1/strong-conf.pdf}.

\bibitem{Selinger} P.\ Selinger, Dagger compact closed categories and completely
 positive maps, \textsl{Electronic Notes in Theoretical Computer Science} \textbf{170} (2007), pp.\ 139--163.   Also available at \href{http://www.mscs.dal.ca/~selinger/papers/dagger.pdf}{http://www.mscs.dal.ca/$\sim$selinger/papers/dagger.pdf}.

\bibitem{Spivak} D.\ Spivak, The steady states of coupled dynamical systems compose according to matrix arithmetic.  Available as \href{https://arxiv.org/abs/1512.00802}{arXiv:1512.00802}.

\bibitem{Wilkinson} D.\ J.\ Wilkinson, \textsl{Stochastic Modelling for Systems Biology},
Taylor and Francis, New York, 2006.

\end{thebibliography}
\end{document}